\DeclareMathOperator{\Tr}{Tr}
\DeclareMathOperator{\sq}{sq}
\newtheorem{theorem}{Theorem}
\newtheorem{lemma}[theorem]{Lemma}
\newtheorem{remark}{Remark}
\newtheorem{proposition}{Proposition}
\newtheorem{condition}{Condition}
\newtheorem{definition}{Definition}
\newtheorem{corollary}{Corollary}
\newenvironment{proof}[1][Proof]{\noindent\textbf{#1.} }{\ \rule{0.5em}{0.5em}}
\begin{document}
\title[ ]{Energy-constrained two-way assisted private and quantum capacities\\of quantum channels}
\author{Noah Davis}
\affiliation{Hearne Institute for Theoretical Physics, Department of Physics and Astronomy, Louisiana State University, Baton Rouge, Louisiana 70803, USA}

\author{Maksim E. Shirokov}
\affiliation{Steklov Mathematical Institute, Russian Academy of Sciences, Moscow, Russia}

\author{Mark M. Wilde}
\affiliation{Hearne Institute for Theoretical Physics, Department of Physics and Astronomy,
Center for Computation and Technology,
Louisiana State University, Baton Rouge, Louisiana 70803, USA}

\begin{abstract}
	With the rapid growth of quantum technologies, knowing the fundamental characteristics of quantum systems and protocols is essential for their effective implementation. A particular communication setting that has received increased focus is related to quantum key distribution and distributed quantum computation. In this setting, a quantum channel connects a sender to a receiver, and their goal is to distill  either a secret key or entanglement, along with the help of arbitrary local operations and classical communication (LOCC). In this work, we establish a general theory of energy-constrained, LOCC-assisted private and quantum capacities of quantum channels, which are the maximum rates at which an LOCC-assisted quantum channel can reliably establish secret key or entanglement, respectively, subject to an energy constraint on the channel input states. We prove that the energy-constrained squashed entanglement of a channel is an upper bound on these capacities. We also explicitly prove that a thermal state  maximizes a relaxation of the squashed entanglement of all phase-insensitive, single-mode input bosonic Gaussian channels, generalizing results from prior work. After doing so, we prove that a variation of the method introduced in [Goodenough \textit{et al.}, New J.~Phys.~18, 063005 (2016)] leads to improved upper bounds on the energy-constrained secret-key-agreement capacity of a bosonic thermal channel. We then consider a multipartite setting and prove that two known multipartite generalizations of the squashed entanglement are in fact equal. We finally  show that the energy-constrained, multipartite squashed entanglement plays a role in bounding the energy-constrained LOCC-assisted private and quantum capacity regions of quantum broadcast channels.
\end{abstract}

\date{\today}

\maketitle

\section{Introduction}
Modern communications from simple web browsing to high-security, governmental discussions rely on encryption protocols that use a private key to secure and interpret messages. The strength of the encryption is directly tied to the security of the key, and the security of most systems currently in use rests on computational assumptions. In contrast, quantum communication allows for generating an information-theoretically secure key, shared among trusted parties, via a method known as quantum key distribution (QKD) \cite{brass,E91,SBCDLP09}.

The rate (bits of key per channel use) at which QKD can be accomplished using a variety of protocols is known to fall off exponentially with distance \cite{brass,E91,GG02,SBCDLP09}. This observed rate-loss trade-off previously suggested the question of whether some other protocols could be designed to outperform the exponential fall-off. However, the exponential rate-loss trade-off has been established to be a fundamental limit for bosonic loss channels \cite{tgw}, and a number of works \cite{STW16,goodEW16,PLOB17,tsw,AML16,WTB17,Christandl2017,appxpriv,BA17,RGRKVRHWE17} have now considered this problem and generalizations of it after \cite{tgw} appeared. The tightest known non-asymptotic bounds for the pure-loss bosonic channels have been given in \cite{tgw,WTB17,KW17}.

An important notion in addressing this question is the capacity of a quantum channel, which is a fundamental characteristic of the channel and is independent of any  specific communication protocol. In the setting of quantum key distribution, it is natural to allow for an authenticated, public classical channel to assist the quantum channel connecting two parties, and so the quantum channel is said to be assisted by local operations and classical communication (LOCC). The secret-key-agreement capacity is the maximum rate at which classical bits can be privately and faithfully transmitted through many uses of a channel, while allowing for free classical communication \cite{tgw,tgwB}. Similarly, the LOCC-assisted quantum capacity is the maximum rate at which qubits can be transmitted faithfully through many uses of a channel and with free classical communication \cite{BBPSSW96EPP,BDSW96,Mul12}. Ultimately, the capacity of a quantum channel limits the usefulness of the channel, and so these LOCC-assisted private and quantum capacities of a quantum channel are important factors in determining any practical use of the channel.

Given that current communications (particularly quantum communication experiments) utilize photons, it is important to consider capacities of bosonic, Gaussian channels \cite{HW01}. Expressions for the unassisted quantum capacities of channels such  as the single-mode quantum-limited amplifier and attenuator have been presented in \cite{HW01,WPG07}, but the expressions therein suppose that the transmitters in question have no constraint on their energy consumption. While these bounds have been shown to depend only on fundamental characteristics of the channel, any real transmitter will not have unbounded energy available, and so these expressions may have limited applicability to practical scenarios, as argued in \cite{qi}.  More recently, strides have been made to bound capacities in energy-constrained scenarios \cite{tgw,qi,goodEW16}. 

In the effort to bound these capacities, several information measures of quantum channels have been proposed, each of which is based on correlation measures for bipartite quantum states. Among these, an entanglement measure \cite{H42007} known as the squashed entanglement  \cite{christ} has played a critical role, as shown in \cite{tgwB,tgw,STW16,goodEW16}. This is due in part to the fact that it possesses several desirable properties, such as additivity, monotonicity under LOCC, uniform (asymptotic) continuity, and faithfulness \cite{christ,AF04,BCY11}. Most recently, squashed entanglement has been shown to retain several of these attributes for infinite-dimensional states, which allows for its use in rather general scenarios \cite{maks}.

In this paper, we formally define the task of energy-constrained, LOCC-assisted private and quantum communication, and we show that the energy-constrained squashed entanglement of a channel is an upper bound on its corresponding capacities.
We prove this bound in a rather general, infinite-dimensional setting, allowing for applications to physical situations other than those specifically considered in this paper. In this sense, this paper is complementary to the developments from \cite{qi} and generalizes those from \cite{tgw,tgwB,STW16,goodEW16}.
We then prove that a thermal state is the optimal input to a relaxation of the energy-constrained squashed entanglement of phase-insensitive, single-mode input, bosonic Gaussian channels, which extends various statements from prior work.
After doing so, we prove that a variation of the method introduced in \cite{goodEW16} leads to improved upper bounds on the energy-constrained secret-key-agreement capacity of a bosonic thermal channel.
In particular, these improved upper bounds have the property that they converge to zero in the limit as the thermal channel becomes entanglement breaking.
We finally prove that the two most common multipartite generalizations  of the squashed entanglement from \cite{YHHHOS,AHS08} are in fact equal  to one another, and we show how the general framework developed in this paper applies to energy-constrained capacity regions of quantum broadcast channels.

We begin in Section~\ref{sec:background} by giving an introduction to  notation, tools, and terminology, as well as defining the important quantities used in the following sections. In Section~\ref{s:CQMIinInfDim} we prove two useful lemmas about the conditional quantum mutual information (CQMI) of infinite-dimensional states. Section~\ref{s:ECSKAC} formally defines the task of energy-constrained secret key agreement. We go on to show that the squashed entanglement of a channel is an upper bound on its energy-constrained LOCC-assisted private and quantum capacities in Section~\ref{s:EsqBound}. Section~\ref{s:realizing} shows that the bosonic thermal state is the optimal input to particular bosonic Gaussian channels in order to maximize relaxations of their squashed entanglement.
A subsection of Section~\ref{s:realizing}presents the improved upper bounds on the energy-constrained secret-key-agreement capacity of a bosonic thermal channel.
 Sections~\ref{s:mcqmi} and \ref{s:equivEsq} begin the multipartite segment of this paper by proving the duality of two different multipartite generalizations of conditional quantum mutual information, which implies the equivalence of two multipartite squashed entanglements that have appeared in the literature \cite{YHHHOS,AHS08,STW16} and were previously thought to be different. In Section~\ref{s:multiPrimer} more tools for working in a multipartite setting are defined. Broadcast channels are introduced, and the private communication protocol from Section~\ref{s:ECSKAC} is recast with multiple receivers in Section~\ref{s:broadcast}. The energy-constrained multipartite squashed entanglement is shown in Section~\ref{s:EsqBroad} to upper bound the energy-constrained LOCC-assisted capacities of broadcast channels with squashed entanglements depending on the partitions of systems. Finally, a calculation of the rate bounds is presented in Section~\ref{s:BoundCalcs} before closing thoughts are given in Section~\ref{s:concl}.

\section{Background: Quantum Information Preliminaries}

\label{sec:background}

\subsection{Quantum Systems, States, and Channels}

In order to study the quantum aspects of information and communication, we first review foundational aspects, consisting of terms and measures which serve to describe and quantify key features of the systems in question, as well as the operations performed on those systems. The reader can find  background other than that presented here by consulting \cite{H06book,H12,HZ12,W16}.

We denote some first Hilbert space as $\mathcal{H}_{A}$ and another one as $\mathcal{H}_{B}$. Throughout, the Hilbert spaces we consider are generally infinite-dimensional and separable, unless stated otherwise. The tensor product of $\mathcal{H}_{A}$ and $\mathcal{H}_{B}$ is itself a Hilbert space, represented as $\mathcal{H}_{A}\otimes\mathcal{H}_{B}=\mathcal{H}_{AB}$. 
Let $\mathcal{L}(\mathcal{H}_{A})$ denote the set of bounded linear  operators acting on $\mathcal{H}_{A}$, and 
let $\mathcal{L}_{+}(\mathcal{H}_{A})$ denote the subset of positive, semi-definite operators acting on $\mathcal{H}_{A}$.
Let $\mathcal{L}_1(\mathcal{H})$ denote the set of trace-class operators, those
operators $X$ for which the trace norm is finite:$\ \left\Vert X\right\Vert
_{1}\equiv\operatorname{Tr}\{\left\vert X\right\vert \}<\infty$, where
$\left\vert X\right\vert \equiv\sqrt{X^{\dag}X}$. 
The set of states (also called density operators) $\mathcal{D}(\mathcal{H}_{A})\subset\mathcal{L}_{+}(\mathcal{H}_{A})$ contains all operators  $\rho_A \in\mathcal{L}_{+}(\mathcal{H}_{A})$ such that $\Tr\{\rho_A\}=1$. The state $\rho_{AB}\in\mathcal{D}(\mathcal{H}_{AB})$ is called an extension of a state $\rho_{A}\in\mathcal{D}(\mathcal{H}_{A})$ if $\rho_{A}=\Tr_{B}\{\rho_{AB}\}$, where $\Tr_{B}$ denotes the partial trace over $\mathcal{H}_{B}$.

Every density operator $\rho \in \mathcal{D}(\mathcal{H})$ can be expressed in terms of a spectral decomposition of a countable number of eigenvectors and eigenvalues: \begin{equation} \rho=\sum_{i}p_{i}|\phi_{i} \rangle\langle\phi_{i}| ,\label{eq:specDecomp} \end{equation} where the probabilities $\{p_{i}\}_i$ are the eigenvalues and $\{|\phi_{i}\rangle\}_i$ are the eigenvectors. A state $\rho\in \mathcal{D}(\mathcal{H})$ is called a pure state if there exists a unit vector $|\psi\rangle\in \mathcal{H}$ such that $\rho=|\psi\rangle\langle\psi|$. When this is not the case, we say that the state is a mixed state, because a spectral decomposition indicates that any state can be interpreted as a probabilistic mixture of pure states. 

We can purify a state $\rho_{A}=\sum_{i}p_{i}|\phi_{i}\rangle\langle\phi_{i}|_{A}$ by introducing a set of orthonormal vectors $\{|i\rangle_{R}\}_i$ and extending it to a pure state in the tensor-product space $\mathcal{H}_{RA}$. Then \begin{equation}
|\psi\rangle_{RA}=\sum_{i}\sqrt{p_{i}}|\phi_{i}\rangle_{A}|i\rangle_{R} \label{eq:purification}
\end{equation}
is a unit vector in $\mathcal{H}_{RA}$, and $\rho_{RA}=|\psi\rangle \langle\psi|_{RA}$ is a pure state in $\mathcal{D}(\mathcal{H}_{RA})$. A state purification is a special kind of extension, given that $\rho_{A}=\Tr_{R}\{\rho_{RA}\}$.

A key feature of quantum systems is the phenomenon of entanglement \cite{H42007}. A state made up of multiple systems is said to be entangled if it cannot be written as a probabilistic mixture of product states. For example, $\rho_{AB}=\sum_{z}p_{Z}(z)|\psi^{z}\rangle\langle\psi^{z} |_{A}\otimes |\phi^{z}\rangle\langle\phi^{z}|_{B}$ represents an unentangled, separable state in $\mathcal{D}(\mathcal{H}_{AB})$ \cite{W89}, where $p_{Z}(z)$ is a probability distribution and $\{|\psi^{z}\rangle_A\}_z$ and $\{|\phi^{z}\rangle_B\}_z$ are sets of unit vectors.

The Schmidt decomposition theorem gives us a tool for simplifying the form of pure, two-party (bipartite) states and particularly for determining whether a pure, bipartite state is entangled. An arbitrary bipartite unit vector $|\psi\rangle_{AB}$ can be written as $|\psi\rangle_{AB}=\sum_{i}\sqrt{p_{i}}|i\rangle_{A}|i\rangle_{B}$ where $\{|i\rangle_A\}_i$ and $\{|i\rangle_B\}_i$ are orthonormal bases in the Hilbert spaces $\mathcal{H}_{A}$ and $\mathcal{H}_{B}$, respectively, and $\{p_{i}\}_i$ are strictly positive, real probabilities. The set $\{\sqrt{p_{i}}\}_{i}$ is the set of Schmidt coefficients. For finite-dimensional $|\psi\rangle_{AB}$, the number $d$ of Schmidt coefficients is called the Schmidt rank of the vector, and it satisfies the following inequality: $d\leq\min\left[\dim(\mathcal{H}_{A}),\dim(\mathcal{H}_{B})\right]$. For infinite-dimensional $|\psi\rangle_{AB}$, the Schmidt rank $d$ can clearly be equal to infinity. The state $|\psi\rangle_{AB}$ is an entangled state if and only if $d\geq 2$. For finite-dimensional $\mathcal{H}_{A}$ and $\mathcal{H}_B$,  such that
$\mathcal{H}_{A}$ is isomorphic to $\mathcal{H}_B$,
we define a maximally entangled state in terms of the following unit vector:
\begin{equation}
	|\Phi\rangle_{AB}=\frac{1}{\sqrt{d}} \sum_{i=1}^{d}|i\rangle_{A}|i\rangle_{B}. \label{eq:maximallyentangled}
\end{equation}

According to the Choi-Kraus theorem, a linear map $\mathcal{N}_{A\to B}$ from $\mathcal{L}_1(\mathcal{H}_{A})$ to $\mathcal{L}_1(\mathcal{H}_{B})$ is completely positive and trace preserving (CPTP) if and only if it can be written in the following way:
\begin{equation}
	\mathcal{N}_{A\to B}(X_{A})=\sum_{l}V_{l}X_{A}V_{l}^{\dagger}  ,\label{eq:kraus}
\end{equation}
where $X_{A} \in \mathcal{L}_1(\mathcal{H}_{A})$, $V_{l}$ is a bounded linear operator mapping $\mathcal{H}_{A}\to\mathcal{H}_{B}$, and
$
	\sum_{l}V_{l}^{\dagger}V_{l}=I_{A}$.
 This is called the Choi-Kraus representation, and $\{V_{l}\}_{l}$ is called the set of Kraus operators. Such a linear map is referred to as a quantum channel, and it takes quantum states to other quantum states. Quantum channels can be concatenated in a serial or parallel way, and such a combination is also a quantum channel.

An isometric extension $U_{A\to BE}^{\mathcal{N}}$ of a quantum channel
$\mathcal{N}_{A\to B}$
 is a linear isometry taking $\mathcal{H}_A$ to $\mathcal{H}_B \otimes \mathcal{H}_E$, satisfying
\begin{equation}
  \mathcal{N}_{A\to B}(X_A)
  = \Tr_{E}\{\mathcal{U}_{A\to BE}^{\mathcal{N}}(X_{A})\}, \label{eq:iso-extend-condition}
\end{equation}
for all  $X_A \in \mathcal{L}_1(\mathcal{H}_A)$, where the isometric channel $\mathcal{U}_{A\to BE}^{\mathcal{N}}$ is defined in terms of the isometry $U_{A\to BE}^{\mathcal{N}}$ as
\begin{equation} \mathcal{U}_{A\to BE}^{\mathcal{N}}(X_{A})=U_{A\to BE}^{\mathcal{N}}X_{A}(U_{A\to BE}^{\mathcal{N}})^\dagger \label{eq:extendedChannel}. \end{equation} 
 We can construct a canonical isometric extension of a quantum channel in the following way:
\begin{equation}
	U_{A\to BE}^{\mathcal{N}}=\sum_{l}V_{l}\otimes|l\rangle_{E},\label{eq:extendedKraus}
\end{equation}
where $\{|l\rangle_E\}_l$ is an orthonormal basis. One can check that \eqref{eq:iso-extend-condition} is satisfied for this choice.

An isometric extension of a quantum channel shows that we can think of a channel as involving not only a sender and receiver but also a passive environment represented by system $E$ above. In order to determine the output of the extended channel $\mathcal{U}_{A\to BE}^{\mathcal{N}}$ to the environment, we simply trace over the output system $B$ instead of the environment $E$. The resulting channel is known as a complementary channel \cite{cmp2005dev,H07,KMNR07} (sometimes ``conjugate channel''), with the following action on an input state~$\rho_A$: \begin{equation}
\hat{\mathcal{N}}_{A \to E}(\rho_{A})=\Tr_{B}\{\mathcal{U}_{A\to BE}^{\mathcal{N}}(\rho_{A})\}. \label{complementaryChannel} \end{equation}
A channel complementary to $\mathcal{N}_{A\to B}$ is a CPTP map from $\mathcal{L}_1(\mathcal{H}_{A})$ to $\mathcal{L}_1(\mathcal{H}_{E})$ and is unique up to an isometry acting on the space $\mathcal{H}_{E}$ (see, e.g., \cite{H12,W16}).

The quantum instrument formalism provides the most general description of a quantum measurement \cite{DL70}. A quantum instrument is a set of completely positive, trace non-increasing maps $\left\{\mathcal{M}^{x}_{A \to B}\right\}_x$ such that the sum map $\sum_x 
\mathcal{M}^{x}_{A \to B}$ is a quantum channel \cite{DL70}. One can equivalently think of it as a quantum channel that takes as input a quantum system and gives as output both a quantum system and a classical system: \begin{equation}
\mathcal{M}_{A\to BX}(\rho_A)=\sum_{x} \mathcal{M}^{x}_{A \to B}(\rho_A)\otimes |x\rangle\langle x|_{X}. \label{eq:quantumInstrument} \end{equation}
Here $\{|x\rangle\}_{x}$ is a classical orthonormal basis identified with the outcomes of the instrument. Throughout this paper, we consider only the case when the measurement has a finite or countable number of outcomes.

In discussing quantum systems corresponding to tensor-product Hilbert spaces, it is useful to consider which parties can influence which subsystems, and we give names to the parties corresponding to the label on their subsystem. For example, it is conventional to say that Alice has access to system $A$, Bob to system $B$, and Eve to system $E$, which we often refer to as the environment as well. Eve is so named because the third party is regarded as a passive adversary or eavesdropper in a cryptographic context. By taking system $E$ to encompass anything not in another specified system, we can consider the most general cases of Eve's participation.

In what follows, we consider the use of a quantum channel interleaved with rounds of local operations and classical communication (LOCC). These rounds of LOCC can be considered channels themselves as follows:
\begin{enumerate}
	\item Alice performs a quantum instrument on her system, resulting in both quantum and classical outputs.
	\item Alice sends a copy of the classical output to Bob.
	\item Bob performs a quantum channel on his system conditioned on the classical data that he receives from Alice.
	\item Bob then performs a quantum instrument on his system and forwards the classical output to Alice.
	\item Finally, Alice performs a quantum channel on her system conditioned on the classical data from Bob.
	\item Iterate the above steps an arbitrarily large, yet finite number of times.
\end{enumerate}

The sequence of actions in the first through third steps is called ``local operations and one-way classical communication,'' and they can be expressed as a quantum channel of the following form:
\begin{equation}
\mathcal{S}_{AB}\equiv\sum_{z}\mathcal{G}_{A}^{z}\otimes \mathcal{J}_{B}^{z}, \label{onewayLOCCchannel}
\end{equation}
where $\{\mathcal{G}_{A}^{z}\}_{z}$ is a countable set of completely positive, trace non-increasing maps, such that the sum map $\sum_z \mathcal{G}_{A}^{z}$ is trace preserving, and  $\{\mathcal{J}_{B}^{z}\}_{z}$ is a set of channels. These conditions imply that $\mathcal{S}_{AB}$ is a channel. The fourth and fifth  steps above can also take the form of \eqref{onewayLOCCchannel} with the system labels reversed.

As indicated above, a full round of LOCC consists of the concatenation of some number of these channels back and forth between Alice and Bob \cite{BDSW96,CLMOW14}. This concatenation is a particular kind of separable channel  and takes the form
\begin{equation}
\mathcal{L}_{AB}\equiv\sum_{y}\mathcal{E}_{A}^{y}\otimes \mathcal{F}_{B}^{y} \label{LOCCchannel},
\end{equation}
where $\{\mathcal{E}_{A}^{y}\}_{y}$ and $\{\mathcal{F}_{B}^{y}\}_{y}$ are countable sets of completely positive, trace non-increasing maps such that $\mathcal{L}_{AB}$ is CPTP. We stress again that we only consider LOCC channels with a finite or countable number of classical values, and we refer to them as countably decomposable LOCC channels.

\subsection{Trace Distance and Quantum Fidelity}

We defined the trace norm $\|X\|_{1}$ of an operator $X$ previously.
Being a norm, it is homogeneous,  non-negative definite, and obeys the triangle inequality. It is also convex and invariant under multiplication by isometries; i.e., 
 for $\lambda\in[0,1]$, we have that $\|\lambda X+(1-\lambda)Y\|_{1}\leq\lambda\|X\|_{1}+(1-\lambda)\|Y\|_{1}$, and for isometries $U$ and $V^{\dagger}$, we have that $\|UXV^{\dagger}\|_{1}=\|X\|_{1}$.

The trace norm of an operator leads to  the trace distance between two of them. The trace distance between two density operators $\rho$ and $\sigma$ quantifies the distinguishability of the two states \cite{H69,H73,Hel76} and satisfies the inequality: $ 0\leq\|\rho-\sigma\|_{1}\leq 2 $.
From the triangle inequality, we see that the trace distance is maximized  for orthogonal states; i.e., when $\rho \sigma = 0$, then $ \|\rho-\sigma\|_{1}=\|\rho\|_{1}+\|\sigma\|_{1}=2$. 
Note that sometimes we employ the normalized trace distance, which is equal to half the usual trace distance:  $0\leq\frac{1}{2}\|\rho-\sigma\|_{1}\leq 1$.

Another way to measure the closeness of quantum states is given by the quantum fidelity \cite{uhlmann}. The pure-state fidelity for pure-state vectors $|\psi\rangle_{A}$ and $|\phi\rangle_{A}$ is given by
\begin{equation} F(\psi_{A},\phi_{A})\equiv |\langle\psi|\phi\rangle_{A} |^{2}, \label{fidelity} \end{equation}
from which we conclude that $0\leq F(\psi_{A},\phi_{A})\leq 1$.
The general definition of the fidelity for arbitrary density operators 
$\rho_{A}$ and $\sigma_{A}$ is as follows:
\begin{equation} 
	F(\rho_{A},\sigma_{A})\equiv \|\sqrt{\rho_{A}}\sqrt{\sigma_{A}}\|_{1}^{2}. \label{generalizedFidelity} 
\end{equation} Uhlmann's theorem is the statement that the following equality holds \cite{uhlmann}:
\begin{equation}
F(\rho_{A},\sigma_{A})=\sup \limits_{U_{R}}\left|\langle\phi^{\rho}|_{RA}U_{R}\otimes I_{A}|\phi^{\sigma}\rangle_{RA} \right|^{2},
\end{equation}
where $|\phi^{\rho}\rangle_{RA}$ and $|\phi^{\sigma}\rangle_{RA}$ are purifications of $\rho_{A}$ and $\sigma_{A}$ with purifying system $R$ and $U_{R}$ is a unitary acting on system $R$.

\subsection{Entropy and Information} 

In order to study the information contained and transmitted in various systems and operations, we now recall a number of common measures used to quantify information. With these measures  defined below, we also focus on generalizations of the quantities as functions of operators acting on infinite-dimensional, separable Hilbert spaces, as considered in, e.g., \cite{maks}. The first and most common measure is the quantum entropy and is defined for a state $\rho\in\mathcal{D}(\mathcal{H})$ as%
\begin{equation}
H(\rho)\equiv\operatorname{Tr}\{\eta(\rho)\}, \label{eq:vonneumanne}
\end{equation}
where $\eta(x)=-x\log_{2}x$ if $x>0$ and $\eta(0)=0$. The trace in the above
equation can be taken with respect to any countable orthonormal basis of
$\mathcal{H}$ \cite[Definition~2]{AL70}. The quantum entropy is a
non-negative, concave, lower semicontinuous function on $\mathcal{D}%
(\mathcal{H})$ \cite{W76}. It is also not necessarily finite (see, e.g.,
\cite{BV13}). When $\rho_{A}$ is the state of a system $A$, we write
\begin{equation}
H(A)_{\rho}\equiv H(\rho_{A}).
\end{equation}
The entropy is a familiar thermodynamic quantity and is roughly a measure of the disorder in a system.
One property of quantum entropy that we use here is its duality: for a pure state $|\psi\rangle\langle \psi |_{RA}$, quantum entropy is such that $H(A)_{\psi}=H(R)_{\psi}$.

For a positive semi-definite, trace-class operator $\omega$ such that $\Tr\{\omega\} \neq 0$, we extend the definition of quantum entropy as 
\begin{equation}
H(\omega)\equiv \Tr\{\omega\} H\!\left(\frac{\omega}{\Tr\{\omega\}}\right). \label{eq:exvonneumanne}
\end{equation}
Observe that $H(\omega)$ reduces to the definition in \eqref{eq:vonneumanne} when $\omega$ is a state with
$\Tr\{\omega\}=1$.


The quantum relative entropy $D(\rho\Vert\sigma)$ of $\rho,\sigma
\in\mathcal{D}(\mathcal{H})$ is defined as \cite{F70,Lindblad1973}%
\begin{align}
&  D(\rho\Vert\sigma)\nonumber\\
&  \equiv\lbrack\ln2]^{-1}\sum_{i,j}|\langle\phi_{i}|\psi_{j}\rangle
|^{2}[p(i)\ln\!\left(  \frac{p(i)}{q(j)}\right)
+q(j)-p(i)],\label{eq:rel-ent-sep}%
\end{align}
where $\rho=\sum_{i}p(i)|\phi_{i}\rangle\langle\phi_{i}|$ and $\sigma=\sum
_{j}q(j)|\psi_{j}\rangle\langle\psi_{j}|$ are spectral decompositions of
$\rho$ and $\sigma$ with $\{|\phi_{i}\rangle\}_{i}$ and $\{|\psi_{j}%
\rangle\}_{j}$ orthonormal bases. The prefactor $[\ln2]^{-1}$ is there to
ensure that the units of the quantum relative entropy are bits. We take the
convention in \eqref{eq:rel-ent-sep} that $0\ln0=0\ln\!\left(  \frac{0}%
{0}\right)  =0$ but $\ln\!\left(  \frac{c}%
{0}\right) = +\infty$ for $c>0$. Each term in the sum in \eqref{eq:rel-ent-sep} is
non-negative due to the inequality
\begin{equation}
x\ln(x/y)+y-x\geq0
\end{equation}
holding for all $x,y\geq0$ \cite{F70}. Thus, by Tonelli's theorem, the sums in \eqref{eq:rel-ent-sep}
may be taken in either order as discussed in \cite{F70,Lindblad1973}, and it
follows that
\begin{equation}
D(\rho\Vert\sigma)\geq0 
\end{equation}
 for all $\rho,\sigma\in\mathcal{D}%
(\mathcal{H})$, with equality holding if and only if $\rho=\sigma$ \cite{F70}.
If the support of $\rho$ is not contained in the support of $\sigma$, then
$D(\rho\Vert\sigma)=+\infty$. The converse statement need not hold in general:
there exist $\rho,\sigma\in\mathcal{D}(\mathcal{H})$ with the support of
$\rho$ contained in the support of $\sigma$ such that $D(\rho\Vert
\sigma)=+\infty$.
Thus, for states $\rho$ and $\sigma$, we have that
\begin{equation}
D(\rho\Vert \sigma) \in [0,\infty].\label{eq:rel-ent-non-neg}
\end{equation}
It is also worth noting that relative entropy is not generally symmetric; i.e., there exist states $\rho$ and $\sigma$ for which
\begin{equation}
D(\rho\|\sigma)\neq D(\sigma\|\rho). \label{eq:notcom}
\end{equation}

One of the most important properties of the quantum relative entropy
$D(\rho\Vert\sigma)$ is that it is monotone with respect to a quantum channel
$\mathcal{N}:\mathcal{L}_1(\mathcal{H}_{A})\rightarrow\mathcal{L}_1(\mathcal{H}%
_{B})$ \cite{Lindblad1975}:%
\begin{equation}
D(\rho\Vert\sigma)\geq D(\mathcal{N}(\rho)\Vert\mathcal{N}(\sigma)).
\label{eq:mono-rel-ent}%
\end{equation}
The above inequality is often called the ``data processing inequality.''
This inequality implies that the quantum relative entropy is invariant under the action of an isometry $U$:
\begin{equation}
D(\rho\Vert\sigma)= D(U\rho U^\dag\Vert U \sigma U^\dag). \label{eq:iso-inv-rel-ent}
\end{equation}



The quantum mutual information of a bipartite state $\rho_{AB}$ is defined in terms of the relative entropy \cite{Lindblad1973} as 
\begin{equation}
I(A;B)_{\rho}\equiv D(\rho_{AB}\|\rho_{A}\otimes\rho_{B})  .
\label{eq:exmutuali}
\end{equation} 
Note that, with the definition in \eqref{eq:exmutuali}, we have that
\begin{equation}
I(A;B)_{\rho} \in [0,\infty]
\end{equation}
 as a consequence of \eqref{eq:rel-ent-non-neg}.
 The following inequality applies to quantum mutual information \cite{Lindblad1973}:
\begin{equation}
I(A;B)_{\rho}\leq 2\min\{H(A)_{\rho},H(B)_{\rho}\} \label{mutualInfoBound}
\end{equation}
and establishes that it is finite if one of the marginal entropies is finite.
For a general positive semi-definite  trace-class operator $\omega_{AB}$
such that $\Tr\{\omega_{AB}\} \neq 0$, we extend the definition of mutual information as in \cite{S15}
\begin{equation}
I(A;B)_\omega \equiv \operatorname{Tr}\{\omega\} I(A;B)_{\frac{\omega}{\operatorname{Tr}\{\omega\} }  }.
\label{eq:extended-MI}
\end{equation}
Note that, while the relative entropy is not generally symmetric,  mutual information is symmetric under the exchange of systems $A$ and $B$
\begin{equation} I(A;B)_{\rho}=I(B;A)_{\rho} \label{symmetricMutualInformation} ,\end{equation}
due to \eqref{eq:iso-inv-rel-ent} and by taking the isometry therein to be a unitary swap of the systems $A$ and $B$.  
For a state $\rho_{AB}$ such that the entropies $H(A)_{\rho}$ and $H(B)_{\rho}$ are finite, the mutual information reduces to
\begin{equation} I(A;B)_{\rho}=H(A)_{\rho}+H(B)_{\rho}-H(AB)_{\rho}. \label{eq:finitemutualinfo} \end{equation}

For a state $\rho_{AB}$ such that $H(A)_{\rho}<\infty$,  the conditional entropy is defined as
\cite{K11}
\begin{equation}
H(A|B)_{\rho}
 \equiv H(A)_{\rho}-I(A;B)_{\rho}, \label{eq:conditionale}
\end{equation}
and the same definition applies for 
a positive semi-definite  trace-class operator $\omega_{AB}$, by employing the extended definitions of entropy in \eqref{eq:exvonneumanne} and mutual information in \eqref{eq:extended-MI}. 
Thus, as a consequence of the definition and \eqref{mutualInfoBound}, we have that
\begin{equation}
H(A|B)_{\rho} \in [-H(A)_{\rho}, H(A)_{\rho}] \ .\label{welldefinedConditionale} 
\end{equation}
If $H(B)_{\rho}$ is also finite, then the conditional entropy simplifies to the following more familiar form:\begin{equation} H(A|B)_{\rho}=H(AB)_{\rho}-H(B)_{\rho}. \label{conditionalefamiliarform} \end{equation}
For a tripartite pure state $\psi_{ABC}$ such that $H(A)_{\psi}<\infty$, the conditional entropy satisfies the following duality relation \cite{K11}: \begin{equation} H(A|B)_{\psi}=-H(A|C)_{\psi} .\label{conditionalEntropyDuality} \end{equation} 
\cite[Proposition~1]{K11} states that conditional entropy is subadditive: for a four-party state $\rho_{ABCD}$, we have that
\begin{equation} H(AB|CD)_{\rho}\leq H(A|C)_{\rho} + H(B|D)_{\rho}. \label{subadditivityOfConditionalEntropy} \end{equation}
This in turn is a consequence of the strong subadditivity of quantum entropy \cite{LR73,PhysRevLett.30.434}.

The conditional quantum mutual information (CQMI) of  tripartite states $\omega_{ABE}\in\mathcal{D}(\mathcal{H}_{ABE})$, with $\mathcal{H}_{ABE}$  a separable Hilbert space, was defined only recently in  \cite{S15}, as a generalization of the information measure commonly used in the finite-dimensional setting. The definition from \cite{S15} involves taking a supremum over all finite-rank projections $P_{A}\in \mathcal{L}(\mathcal{H}_{A})$ or
$P_{B}\in \mathcal{L}(\mathcal{H}_{B})$, in order to write CQMI  in terms of the quantum mutual information in the following equivalent ways:
\begin{align}
& I(A;B|E)_{\omega} \nonumber \\
& =\sup_{P_{A}}I(A;BE)_{Q_{A}\omega Q_{A}}-I(A;E)_{Q_{A}\omega Q_{A}}  \label{exconmutualiA} \\
& =\sup_{P_{B}}I(AE;B)_{Q_{B}\omega Q_{B}}-I(E;B)_{Q_{B}\omega Q_{B}}  ,\label{exconmutualiB}
\end{align}
where $Q_{A}=P_{A}\otimes I_{BE}$ and $Q_{B}=P_{B}\otimes I_{AE}$. Due to the data-processing inequality in \eqref{eq:mono-rel-ent}, with the channel taken to be a partial trace, we have that
\begin{equation}
I(A;B|E)_{\omega} \in [0,\infty].
\end{equation}
The conditional mutual information, as defined above, is a lower semi-continuous function of tripartite quantum states \cite[Theorem~2]{S15}; i.e., for any sequence $\{ \omega^n_{ABE}\}_n$ of tripartite states converging to the state $\omega^0_{ABE}$, the following inequality holds
\begin{equation}
\liminf_{n\to \infty} I(A;B|E)_{\omega^n} \geq I(A;B|E)_{\omega^0}.
\end{equation}
If $I(A;BE)_{\omega}, I(A;E)_{\omega} < \infty$, as is the case if $H(A)_\omega < \infty$, then the definition reduces to the familiar one from the finite-dimensional case:
\begin{equation}
I(A;B|E)_{\omega} = I(A;BE)_{\omega}-  I(A;E)_{\omega}.
\end{equation}


\subsection{Squashed Entanglement}

The information measure of most concern in our paper is the squashed entanglement. Defined and analyzed in \cite{christ}, and extended to the infinite-dimensional case in \cite{maks}, the squashed entanglement of a state $\rho_{AB} \in\mathcal{D}(\mathcal{H}_{AB})$ is defined as
\begin{equation}
E_{\sq}(A;B)_{\rho} = \frac{1}{2}
\inf_{\omega_{ABE}}I(A;B|E)_{\omega} ,\label{eq:squashede}
\end{equation}
where $\omega_{ABE} \in \mathcal{D}(\mathcal{H}_{ABE})$ satisfies $\Tr_E\{\omega_{ABE}\} = \rho_{AB}$, with $\mathcal{H}_E$ taken to be an infinite-dimensional, separable Hilbert space.
(See \cite{Tucci2000,Tucci2002} for discussions related to squashed entanglement.)
An equivalent definition is given in terms of an optimization over squashing channels, as follows:
\begin{equation}
E_{\sq}(A;B)_{\rho} = \frac{1}{2}\inf\limits_{\mathcal{S}_{E\to E'}}I(A;B|E')_{\tau} \label{eq:squashed-squash-chnl} ,
\end{equation}
where $\tau_{ABE'}=\mathcal{S}_{E\to E'}(\phi_{ABE}^{\rho})$, with $\phi_{ABE}^{\rho}$ a purification of $\rho_{AB}$. The infimum is with respect to all squashing channels $\mathcal{S}_{E\to E'}$ from system $E$ to a system $E'$ corresponding to an infinite-dimensional, separable Hilbert space. The reasoning for this equivalence is the same as that given in \cite{christ}. Due to the expression in \eqref{eq:squashed-squash-chnl}, squashed entanglement can be interpreted as the leftover correlation after an adversary attempts to ``squash down'' the correlations in  $\rho_{AB}$.
Squashed entanglement obeys many of the properties considered important for an entanglement measure, such as LOCC monotonicity, additivity for product states, and convexity \cite{christ}. These properties are discussed in the next section.

Suppose that Alice, in possession of the systems $RA$ of a pure state $\phi_{RA}$, wishes to construct a shared state with Bob. If Alice and Bob are connected by a quantum channel $\mathcal{N}_{A\to B}$ mapping system $A$ to $B$, then they can  establish the shared state
\begin{equation}
\omega_{RB}=\mathcal{N}_{A\to B}(\phi_{RA}). \label{eq:omega-1}
\end{equation}
Going to the purified picture, an isometric channel $\mathcal{U}^{\mathcal{N}}_{A\to BE}$ extends $\mathcal{N}_{A\to B}$, so that the output state of the extended channel is
$\phi_{RBE}=\mathcal{U}^{\mathcal{N}}_{A\to BE}(\phi_{RA})$ when the input is
$\phi_{RA}$. Suppose that a third party Eve has access to the system $E$, such that she could then perform a squashing channel $\mathcal{S}_{E\to E^{\prime}}$, bringing system $E$ to system $E'$. In this way, she could attempt to thwart the correlation between Alice and Bob's systems, as measured by conditional mutual information.  Related to the above physical picture, the squashed entanglement of the channel
$\mathcal{N}_{A\to B}$ is defined
as the largest possible squashed entanglement that can be realized between systems $R$ and $B$ \cite{tgwB,tgw}: \begin{equation}
E_{\sq}(\mathcal{N})\equiv\sup\limits_{\phi_{RA}}E_{\sq}(R;B)_{\omega} \label{eq:squashedeN},
\end{equation}
where the supremum is with respect to all possible pure bipartite input states $\phi_{RA}$, with system $R$ isomorphic to system $A$, and $\omega_{RB}$ is defined in \eqref{eq:omega-1}.

If specific requirements are placed on the channel input states, such as an energy constraint as discussed in Section~\ref{ss:LOCCcapacityWithAvgEnergyConstraint} below, the optimization should reflect those stipulations, leading to the energy-constrained squashed entanglement of a channel $\mathcal{N}$:
\begin{equation} E_{\sq}(\mathcal{N},G,P)\equiv\sup\limits_{\phi_{RA}:\Tr\{G\phi_{A}\}\leq P}E_{\sq}(R;B)_{\omega}. \label{eq:constrainedSquashedeN}
\end{equation}
Here $G$ is an energy observable acting on the channel input system $A$, the positive real $P\in [0,\infty)$ is a constraint on the expected value of that observable such that $\Tr\{G\phi_{A}\}\leq P$, and the supremum is with respect to all pure input states $\phi_{RA}$ to the channel that obey the given constraint. It suffices to optimize the quantity in \eqref{eq:constrainedSquashedeN} with respect to pure, bipartite input states, following from purification, the Schmidt decomposition theorem, and LOCC monotonicity of squashed entanglement. These notions are discussed in more detail in Section~\ref{s:ECSKAC}.

As discussed in \cite{tgwB}, the squashed entanglement of a channel can be written in a different way by considering an isometric channel
$\mathcal{V}^{\mathcal{S}}_{E\to E'F}$
extending the squashing channel
$\mathcal{S}_{E\to E'}$. Let $\varphi_{RBE'F}$ denote the following pure output state when the pure state $\phi_{RA}$ is input:
\begin{equation}
\varphi_{RBE'F}=\big(\mathcal{V}^{\mathcal{S}}_{E\to E'F}\circ\mathcal{U}^{\mathcal{N}}_{A\to BE}\big)(\phi_{RA}) \label{extendedSquashingChannel} .
\end{equation}
By taking advantage of the duality of conditional entropy and in the case that the entropy $H(B)_\varphi$ is finite, the alternate way of writing follows from
the equality
\begin{align}
I(R;B|E')_{\varphi}
&=H(B|E')_{\varphi}-H(B|RE')_{\varphi}\\&=H(B|E')_{\varphi}+H(B|F)_{\varphi}  \label{altEsq} .
\end{align}
Thus, we can write the energy-constrained squashed entanglement of a channel as
\begin{equation}
 E_{\sq}(\mathcal{N},G,P) 
=
\sup\limits_{\rho_A :
\Tr\{G\rho_{A}\}\leq P
} E_{\sq}(\rho_A, \mathcal{N}_{A \to B}),
\label{altEsqN} 
\end{equation} 
where
\begin{align}
E_{\sq}(\rho_A, \mathcal{N}_{A \to B})
& \equiv 
\inf_{\mathcal{V}^{\mathcal{S}}_{E\to E'F}}
\frac{1}{2}[
H(B|E')_{\omega}+H(B|F)_{\omega}]\\
\omega_{BE'F} & = \big(\mathcal{V}^{\mathcal{S}}_{E\to E'F}\circ\mathcal{U}^{\mathcal{N}}_{A\to BE}\big)(\rho_{A}),
\end{align}
and we take advantage of the representation in \eqref{altEsqN} in our paper.

\subsection{Entanglement Monotones and Squashed Entanglement}

In this section, we review the notion of an entanglement monotone \cite{H42007} and how  squashed entanglement \cite{christ} and its extended definition in \cite{maks} satisfies the requirements of being an entanglement monotone.
Let $E(A;B)_\omega$ be a function of an arbitrary bipartite state $\omega_{AB}$. Then $E(A;B)_\omega$ is an entanglement monotone if it satisfies the following conditions:

	1) $E(A;B)_\omega=0$ if and only if $\omega_{AB}$ is separable. 
	
	2) $E$ is monotone under selective unilocal operations. That is,
\begin{equation}
E(A;B)_\omega
\geq\sum\limits_{k}
p_{k}E(A;B)_{\omega^k} ,
\end{equation}
where
\begin{equation}
p_{k}  =\Tr(\mathcal{N}^k_{A}(\omega_{AB})),\qquad 
\omega_{AB}^{k}  =p_{k}^{-1}\mathcal{N}^k_{A}(\omega_{AB})
\end{equation}
for any state $\omega_{AB}$ and any collection $\{\mathcal{N}^k_{A}\}$ of unilocal completely positive maps such that the sum map $\sum\limits_{k} \mathcal{N}^k_{A}$ is a channel. 

	3) $E$ is convex, in the sense that for states $\rho^0_{AB}$, $\rho^1_{AB}$, and $\rho^\lambda_{AB}
	= (1-\lambda) \rho^0_{AB} + \lambda \rho^1_{AB}$, where $\lambda \in [0,1]$, 
\begin{equation}	
	E(A;B)_{\rho^\lambda}\leq(1-\lambda) E(A;B)_{\rho^0}+\lambda E(A;B)_{\rho^1} .
\end{equation}

When the condition in 3) holds, then the condition in 2) is equivalent to monotonicity under LOCC. 

\bigskip
	 An entanglement monotone is additionally considered an entanglement measure if, for any pure state $\psi_{AB}$, it is equal to the quantum entropy of a marginal state:
\begin{equation}
E(A;B)_{\psi}
=H(A)_{\psi} = H(B)_{\psi}. \label{eq:marginal-von-neumann}
\end{equation}

	Other desirable properties for an entanglement monotone include
\begin{itemize}	
	 \item additivity for a product state $\omega_{AB}\otimes\theta_{A'B'}$:
	 \begin{equation}
	 E(AA';BB')_{\omega \otimes\theta} =E(A;B)_{\omega}+E(A';B')_{\theta} ,
	 \label{item:additivity}	
	 \end{equation}
	 \item subadditivity for a product state $\omega_{AB}\otimes\theta_{A'B'}$:
	 \begin{equation}
	 E(AA';BB')_{\omega \otimes\theta} \leq E(A;B)_{\omega}+E(A';B')_{\theta} ,
\label{item:subadditivity}	
	 \end{equation}
	 \item strong superadditivity for a state $\omega_{AA'BB'}$:
	 \begin{equation}
	 E(AA';BB')_{\omega}
	 \geq E(A;B)_{\omega}
	 +E(A';B')_{\omega} ,
	 \label{item:strongSuperadditivity}
	 \end{equation}
	 \item monogamy for a state $\omega_{ABC}$:
	 \begin{equation}
	 E(A;BC)_{\omega}
	 \geq E(A;B)_{\omega}
	 +E(A;C)_{\omega} ,\label{item:monogamy}	
	 \end{equation}
	\item asymptotic continuity:
	\begin{equation}
	\lim\limits_{n \to \infty} \frac{E(\rho_{AB}^{n})-E(\sigma_{AB}^{n})}{1+\log_2(\dim\mathcal{H}_{AB}^{n})}=0 , \label{item:asymptoticContinuity}
\end{equation}
which should hold for any sequences $\{ \rho_{AB}^{n}\}_n$ and $\{ \sigma_{AB}^{n}\}_n$ of states such that $\Vert \rho_{AB}^{n} - \sigma_{AB}^{n} \Vert_1$ converges to zero as $n \to \infty$.
	\end{itemize}
As discussed in \cite{maks}, for states in infinite-dimensional Hilbert spaces, global asymptotic continuity is too restrictive. For example, the discontinuity of the quantum entropy means that any entanglement monotone that possesses property \eqref{eq:marginal-von-neumann} is necessarily discontinuous. It is therefore reasonable to require   instead that $E$ be lower semi-continuous \cite{maks}:
\begin{equation}\liminf\limits_{n\to\infty} E(\omega_{AB}^{n})\geq E(\omega_{AB}^{0})
\end{equation}
 for any sequence $\{\omega_{AB}^{n}\}$ of states converging to the state~$\omega_{AB}^{0}$.


The squashed entanglement, as defined in \eqref{eq:squashede}, obeys all of the above properties \cite{christ,AF04,KWin04,C06,BCY11,LW14,maks}.
Regarding the last property, the squashed entanglement defined in \eqref{eq:squashede} has been proved to be lower semicontinuous on the set of states having at least one finite marginal entropy \cite{maks}.
It additionally satisfies the following uniform continuity inequality: Given states $\rho_{AB}$ and $\sigma_{AB}$ satisfying  $\frac{1}{2}\left\|\rho_{AB}-\sigma_{AB}\right\|_{1}\leq\varepsilon$ for $\varepsilon \in [0,1]$ then
\begin{multline}
|E_{\sq}(A;B)_{\rho}-E_{\sq}(A;B)_{\sigma}|\\ \leq \sqrt{2\varepsilon} \log_2\min[\dim(\mathcal{H}_{A}),\dim(\mathcal{H}_{B})]
+g(\sqrt{2\varepsilon})
\label{EsqContinuity}
\end{multline}
where
\begin{equation}
g(x)\equiv (1+x)\log_2(1+x)-x\log_2(x).
\label{eq:g-func}
\end{equation}
This follows by combining the well known Fuchs--van de Graaf inequalities \cite{FG98}, Uhlmann's theorem for fidelity \cite{uhlmann}, and the continuity bound from \cite[Corollary~1]{Shirokov2016} for conditional mutual information.

\subsection{Private States}

The main goal of any key distillation protocol is for two parties Alice and Bob to distill a tripartite state as close as possible to an ideal tripartite secret-key state, which is protected against a third-party Eve.
An ideal tripartite secret-key state
$\gamma_{ABE}$
is such that local projective measurements $\mathcal{M}_A$ and
$\mathcal{M}_B$ on it,
in the respective orthonormal bases $\{|i\rangle_{A}\}_{i}$ and $\{|i\rangle_{B}\}_{i}$, 
lead to the following form:
\begin{equation}
(\mathcal{M}_A \otimes \mathcal{M}_B)
(\gamma_{ABE})=\frac{1}{K}\sum\limits_{i=1}^{K}|i\rangle\langle i|_{A}\otimes|i\rangle\langle i|_{B}\otimes\sigma_{E}.
\label{secretKeyState}
\end{equation}
The key systems are finite-dimensional, but the  eavesdropper's system $E$ could be described by an infinite-dimensional, separable Hilbert space. The tripartite key state $\gamma_{ABE}$  contains $\log_2 K$ bits of secret key. By inspecting the right-hand side of \eqref{secretKeyState}, we see that the key value is uniformly random and perfectly correlated between systems $A$ and $B$, as well as being in tensor product with the state of system $E$, implying that the results of any experiment on the $AB$ systems will be independent of those given by an experiment conducted on the $E$ system. While a perfect ideal tripartite key state may be difficult to achieve in practice, a state that is nearly indistinguishable from the ideal case is good enough for practical purposes. If a state $\rho_{ABE}$ satisfies the following inequality:
\begin{equation}
F(\gamma_{ABE},\rho_{ABE})\geq 1-\varepsilon,
\end{equation}
for some $\varepsilon \in [0,1]$ and $\gamma_{ABE}$ an ideal tripartite key state,
then $\rho_{ABE}$ is called an $\varepsilon$-approximate tripartite key state \cite{HHHO05,HHHO09,WTB17}.

By purifying a tripartite secret-key state $\gamma_{ABE}$ with ``shield systems'' $A'$ and $B'$ and then tracing over the system $E$, the resulting state is called  a bipartite  private state, which takes the following form \cite{HHHO05,HHHO09}:
\begin{equation}
\gamma_{ABA'B'}=U_{ABA'B'}(|\Phi\rangle\langle\Phi|_{AB}\otimes\sigma_{A'B'})U^{\dagger}_{ABA'B'}, \label{privateState}
\end{equation}
where
\begin{equation}
|\Phi\rangle_{AB}=
\frac{1}{\sqrt{K}}
\sum_{i=1}^K |i\rangle_{A}|i\rangle_{B}
\end{equation}
is a maximally entangled state with Schmidt rank $K$ and $\sigma_{A'B'}$ is an arbitrary state of the shield systems $A'B'$. Due to the fact that the system $E$ of the tripartite key state $\gamma_{ABE}$ corresponds generally to an infinite-dimensional, separable Hilbert space, the same is true for the shield systems $A'B'$ of $\gamma_{ABA'B'}$. The unitary operator $U_{ABA^{\prime}B^{\prime}}$ is called a ``twisting" unitary and has the following form:
\begin{equation}
U_{ABA^{\prime}B^{\prime}}=\sum_{i,j=1}^{K} |i\rangle\langle i|_{A}\otimes |j\rangle\langle j|_{B}\otimes U^{ij}_{A^{\prime}B^{\prime}}, \label{twistingUnitary}
\end{equation}
where each $U^{ij}_{A^{\prime}B^{\prime}}$ is a unitary operator.
Note that, due to the correlation between the $A$ and $B$ systems in the state $\Phi_{AB}$, only the diagonal terms $U^{ii}_{A'B'}$ of the twisting unitary are relevant when measuring the systems $A$ and $B$ in the orthonormal bases $\{|i\rangle_{A}\}_{i}$ and $\{|i\rangle_{B}\}_{i}$, respectively \cite{HHHO05,HHHO09}. If a state $\rho_{ABA'B'}$ satisfies
\begin{equation}
F(\gamma_{ABA'B'},\rho_{ABA'B'})\geq 1-\varepsilon,
\end{equation}
for some $\varepsilon \in [0,1]$ and $\gamma_{ABA'B'}$ an ideal bipartite private state,
then $\rho_{ABA'B'}$ is called an $\varepsilon$-approximate bipartite private state \cite{HHHO05,HHHO09,WTB17}.

The converse of the above statement holds as well \cite{HHHO05,HHHO09}, and the fact that it does is one of the main reasons that the above notions are useful in applications. That is, given a bipartite private state of the form in \eqref{privateState}, we can then purify it by an $E$ system, and tracing over the shield systems $A'B'$ leads to a tripartite key state of the form in \eqref{secretKeyState}. These relations extend to the approximate case as well, by an application of Uhlmann's theorem for fidelity \cite{uhlmann}: purifying an $\varepsilon$-approximate tripartite key state $\rho_{ABE}$ with shield systems $A'B'$ and tracing over system $E$ leads to an $\varepsilon$-approximate bipartite private state, and vice versa.

The squashed entanglement of a bipartite private state of $\log_2 K$ bits is normalized such that \cite{C06} \begin{equation}  E_{\sq}(AA';BB')_{\gamma} \geq \log_2{K}. \label{privateStateEsqNormalized} \end{equation}
This result has recently been extended to the approximate case: \cite[Theorem~2]{appxpriv}  establishes that, for an $\varepsilon$-approximate bipartite private state $\rho_{ABA'B'}$, the following inequality holds 
\begin{equation}
 E_{\sq}(AA^{\prime};BB^{\prime})_{\rho}+2\sqrt{\varepsilon} \log_2 K +2g(\sqrt{\varepsilon}) 
 \geq \log_2{K}.
 \label{eq:appxpriv}
 \end{equation} 

\section{Properties of Conditional Quantum Mutual Information}
\label{s:CQMIinInfDim}

In this section, we establish a number of simple properties of conditional quantum mutual information (CQMI) for states of infinite-dimensional, separable Hilbert spaces. These properties will be useful in later sections of our paper.

\subsection{CQMI and Duality under a Finite-Entropy Assumption}

\begin{lemma}[Duality] \label{lem:dualitylemma}
	Let $\psi_{ABED}$ be a pure state such that $H(B)_{\psi}<\infty$. Then the conditional quantum mutual information 
	$I(A;B|E)_{\psi}$ 
	can be written as
	\begin{equation}
		I(A;B|E)_{\psi}=H(B|E)_{\psi}+H(B|D)_{\psi}. \label{eq:dualitylemma}
	\end{equation}
\end{lemma}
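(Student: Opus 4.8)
The plan is to reduce the infinite-dimensional statement to the familiar finite-dimensional identity $I(A;B|E) = H(B|E) - H(B|AE)$ together with the conditional-entropy duality relation \eqref{conditionalEntropyDuality}, being careful that all relevant quantities are finite and that the supremum definition of CQMI collapses to a difference of mutual informations.

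First I would exploit the symmetry of the conditional mutual information, namely $I(A;B|E)_{\psi} = I(B;A|E)_{\psi}$. This follows directly from the two equivalent expressions \eqref{exconmutualiA}--\eqref{exconmutualiB} combined with the symmetry of mutual information \eqref{symmetricMutualInformation}: writing the definition with a supremum over projections $P_{B}$ and applying \eqref{symmetricMutualInformation} to each term turns the $B$-side formula into the $A$-side formula with the roles of $A$ and $B$ exchanged. The point of this maneuver is that the hypothesis is $H(B)_{\psi}<\infty$ rather than $H(A)_{\psi}<\infty$, so it is the marginal on $B$ that I can control.

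Next, with finiteness of $H(B)_{\psi}$ in hand, the bound \eqref{mutualInfoBound} gives $I(B;AE)_{\psi}, I(B;E)_{\psi} < \infty$, so the supremum definition of CQMI reduces to the familiar difference $I(B;A|E)_{\psi} = I(B;AE)_{\psi} - I(B;E)_{\psi}$. Expanding each mutual information through $I(B;X)_{\psi} = H(B)_{\psi} - H(B|X)_{\psi}$, which is valid and finite since $H(B)_{\psi}<\infty$, with the conditional entropies lying in $[-H(B)_{\psi},H(B)_{\psi}]$ by \eqref{welldefinedConditionale}, the two copies of $H(B)_{\psi}$ cancel and leave $I(A;B|E)_{\psi} = H(B|E)_{\psi} - H(B|AE)_{\psi}$.

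Finally I would apply the duality relation \eqref{conditionalEntropyDuality} to the pure state $\psi_{ABED}$, regarding it as a tripartite pure state with blocks $B$, $AE$, and $D$; since $H(B)_{\psi}<\infty$, duality yields $H(B|AE)_{\psi} = -H(B|D)_{\psi}$, and substituting produces exactly \eqref{eq:dualitylemma}. The main obstacle is not any single computation but rather the bookkeeping of finiteness conditions so that each manipulation is legitimate in the infinite-dimensional setting: the genuine work lies in getting the problem into a form where $H(B)_{\psi}<\infty$ can be invoked (hence the initial symmetrization) and in confirming that both the collapse of the supremum definition and the duality relation apply under this single finiteness hypothesis.
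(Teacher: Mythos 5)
Your proposal is correct and follows essentially the same route as the paper's proof: both start from the projection-based definition \eqref{exconmutualiB} (your ``symmetrization'' step is exactly the paper's invocation of \eqref{symmetricMutualInformation} to work with the $P_B$ form), both use $H(B)_{\psi}<\infty$ together with \eqref{mutualInfoBound} to collapse the supremum to $I(B;AE)_{\psi}-I(B;E)_{\psi}$, rewrite this as $H(B|E)_{\psi}-H(B|AE)_{\psi}$ via \eqref{eq:conditionale}, and finish with the conditional-entropy duality \eqref{conditionalEntropyDuality} applied to the pure state $\psi_{ABED}$ partitioned as $B$, $AE$, $D$. The only cosmetic difference is that you expand $I(B;X)_{\psi}=H(B)_{\psi}-H(B|X)_{\psi}$ directly, whereas the paper adds and subtracts $H(B)_{\psi}$; these are the same computation.
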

\begin{proof}
	Begin with the definition of CQMI from~\eqref{exconmutualiB}:
	\begin{align}
		I(A;B|E)_{\psi}=\sup_{P_{B}}\big[I(B;AE)&_{Q_{B}\psi Q_{B}}-I(B;E)_{Q_{B}\psi Q_{B}} \notag \\
		& : Q_{B}=P_{B}\otimes I_{AE}\big], \label{proofstep1}
	\end{align}
	where  we have exploited the symmetry of mutual information as recalled in \eqref{symmetricMutualInformation}.
	The assumption $H(B)_{\psi}<\infty$ is strong, implying that $I(B;AE)_{\psi}, I(B;E)_{\psi}< \infty$, so that we can write $I(A;B|E)_{\psi} = I(B;AE)_{\psi}-I(B;E)_{\psi}$ \cite{S15}. Then we find that
	\begin{align}
		& I(A;B|E)_{\psi} \nonumber \\
		& =H(B)_{\psi}-H(B)_{\psi}+I(B;AE)_{\psi}-I(B;E)_{\psi}\notag \\ 
		& =[H(B)_{\psi}-I(B;E)_{\psi}]-[H(B)_{\psi}-I(B;AE)_{\psi}] .\label{proofstep23}
	\end{align}
	From the definition in \eqref{eq:conditionale}, it is clear that the last line is equal to  a difference of conditional entropies, leading to
	\begin{equation}
		I(A;B|E)_{\psi}=H(B|E)_{\psi}-H(B|AE)_{\psi} .\label{CQMIisConditionalEntropies}
	\end{equation}
	Finally, we invoke the duality of conditional entropy from \eqref{conditionalEntropyDuality}
 in order to arrive at the statement of the lemma.
\end{proof}

\subsection{Subadditivity Lemma for Conditional Quantum Mutual Information}

In this section, we prove a lemma that generalizes one of the main technical results of \cite{tgwB,tgw} to the infinite-dimensional setting of interest here. This lemma was the main tool used in \cite{tgwB,tgw} to prove that the squashed entanglement of a quantum channel is an upper bound on its secret-key-agreement capacity. After \cite{tgwB,tgw} appeared, this lemma
was later interpreted as implying that amortization does not increase the squashed entanglement of a channel \cite{KW17a,RKBKMA17,BW17}.

\begin{lemma}
\label{lemma:subadditivity-CQMI}
Let $\phi_{A^{\prime}A B^{\prime}E^{\prime\prime}F^{\prime\prime}}$ be a pure state, and let $\mathcal{U}_{A \to B E' F'}$ be an isometric quantum channel. Set
\begin{equation}
\psi_{A^{\prime}BB^{\prime}E^{\prime}E^{\prime\prime}F^{\prime}F^{\prime\prime}} \equiv
\mathcal{U}_{A \to B E' F'} (\phi_{A^{\prime}A B^{\prime}E^{\prime\prime}F^{\prime\prime}}),
\end{equation}
and suppose that $H(B)_{\psi} < \infty$.
Then the following inequality holds
	\begin{multline}
		I(A^{\prime};BB^{\prime}|E^{\prime}E^{\prime\prime})_{\psi}\leq H(B|E^{\prime})_{\psi}+H(B|F^{\prime})_{\psi}\\
		+I(A^{\prime}A;B^{\prime}|E^{\prime\prime})_{\phi} \label{CQMIbound} .
	\end{multline}
Note that both sides of the inequality in \eqref{CQMIbound} could be equal to $+\infty$.
\end{lemma}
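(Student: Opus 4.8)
The plan is to adapt the finite-dimensional amortization argument of \cite{tgwB,tgw} and to make each step rigorous in the present infinite-dimensional setting using the Duality Lemma (Lemma~\ref{lem:dualitylemma}) together with the basic properties of the infinite-dimensional CQMI from \cite{S15}. First I would note that $\psi_{A'BB'E'E''F'F''}$ is pure, since it results from applying the isometric channel $\mathcal{U}_{A\to BE'F'}$ to the pure state $\phi_{A'AB'E''F''}$. The starting point is the chain rule
\begin{equation}
I(A';BB'|E'E'')_\psi = I(A';B|E'E'')_\psi + I(A';B'|BE'E'')_\psi,
\end{equation}
which splits the quantity of interest into a term that is controlled by the finiteness of $H(B)_\psi$ and a term that will be matched to the input CQMI $I(A'A;B'|E'')_\phi$. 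I would then bound the two summands separately.

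For the first summand I would use only monotonicity of CQMI under enlarging the first argument (this includes moving a subsystem out of the conditioning register) followed by the Duality Lemma:
\begin{equation}
I(A';B|E'E'')_\psi \le I(A'B'E''F'';B|E')_\psi = H(B|E')_\psi + H(B|F')_\psi.
\end{equation}
The inequality absorbs $E''$ and then adjoins $B'F''$ to the first argument, each step being an application of the chain rule and non-negativity of CQMI. The final equality is exactly Lemma~\ref{lem:dualitylemma} applied to the pure state $\psi$ with middle system $B$, conditioning system $E'$, and complementary system $D=F'$ (indeed $A'B'E''F''\cup B\cup E'\cup F'$ is the full system). This is where the hypothesis $H(B)_\psi<\infty$ is essential, both to invoke Lemma~\ref{lem:dualitylemma} and to guarantee that $H(B|E')_\psi$ and $H(B|F')_\psi$ are finite.

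For the second summand I would again enlarge the first argument, moving $BE'$ out of the conditioning register and adjoining $F'$, and then invoke invariance of CQMI under the local isometry $\mathcal{U}_{A\to BE'F'}$:
\begin{equation}
I(A';B'|BE'E'')_\psi \le I(A'BE'F';B'|E'')_\psi = I(A'A;B'|E'')_\phi,
\end{equation}
where the isometry acts on $A$ alone, which sits inside the first argument, while $B'$ and $E''$ are untouched. Combining these two bounds with the chain rule yields \eqref{CQMIbound}.

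I expect the main obstacle to be justifying these manipulations in infinite dimensions, where CQMI is defined through a supremum over finite-rank projections as in \eqref{exconmutualiA}--\eqref{exconmutualiB} and chain rules involve differences of quantities that may individually equal $+\infty$. The finiteness of $H(B)_\psi$ resolves the delicate term: it forces $I(A';B|E'E'')_\psi$ and every $B$-conditional entropy above to be finite, so the corresponding chain rules and the duality identity are unambiguous. For the remaining term I would first dispose of the trivial case: if $I(A'A;B'|E'')_\phi=+\infty$ the claimed inequality holds automatically, and otherwise every CQMI in the second chain is finite, so the chain rule and the monotonicity (data processing under discarding subsystems) apply as in \cite{S15}. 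The only genuinely technical point I anticipate is confirming that these monotonicity and isometric-invariance statements hold for the projection-based definition of CQMI; I would verify them either directly from \eqref{exconmutualiA}--\eqref{exconmutualiB} or by a finite-rank truncation together with a limiting argument using lower semicontinuity of CQMI.
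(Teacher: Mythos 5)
Your proof is correct, but it takes a genuinely different route from the paper's. The paper never splits the conditional mutual information via the chain rule; instead it introduces a sequence of finite-rank projectors $P_{B'}^{k}$ on the (possibly infinite-entropy) system $B'$, so that every projected state has finite entropy on $B\overline{B'_k}$. In that regime it applies duality (as in Lemma~\ref{lem:dualitylemma}) to write $I(A';B\overline{B'_k}|E'E'')_{\psi^k}$ as a sum of two conditional entropies, splits each with the subadditivity property $H(AB|CD)\leq H(A|C)+H(B|D)$ from \eqref{subadditivityOfConditionalEntropy}, recombines two of the resulting four terms into $I(A'A;\overline{B'_k}|E'')_{\phi^k}$ (duality again, on the pre-channel state), and finally passes to the limit $k\to\infty$ using the convergence and lower-semicontinuity results of \cite{S15,K11}. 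Your argument instead decomposes $I(A';BB'|E'E'')_{\psi}$ by the chain rule into $I(A';B|E'E'')_{\psi}+I(A';B'|BE'E'')_{\psi}$, controls the first summand with monotonicity plus Lemma~\ref{lem:dualitylemma}, and the second with monotonicity plus invariance of CQMI under the local isometry $\mathcal{U}_{A\to BE'F'}$. This is cleaner and conceptually transparent (new correlations created by the channel versus amortized pre-existing correlations), and it avoids the truncation-and-limit machinery entirely at the level of the main argument; what it buys this with is a heavier reliance on structural properties of the extended CQMI holding with values in $[0,+\infty]$ --- specifically the chain rule as an identity, monotonicity under enlarging the non-conditioning argument, and isometric invariance. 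These are indeed available (the chain rule and monotonicity are established in \cite{S15}, the same results the present paper cites for its multipartite identities, and isometric invariance follows from two applications of monotonicity under local channels), so your proof is sound provided you cite them precisely rather than leave their verification as a promissory note; if you were forced to verify them by finite-rank truncation and lower semicontinuity, as you suggest as a fallback, you would essentially be reconstructing the paper's own proof technique inside your lemmas.
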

\begin{proof}
Let $\{P_{B^{\prime}}^{k}\}_{k}$ be a sequence of finite-rank projectors acting on the space $\mathcal{H}_{B^{\prime}}$, which strongly converges to the identity $I_{B^{\prime}}$. Define the sequence 
$\{\phi^{k}_{A^{\prime} A \overline{B^{\prime}_{k}}E^{\prime\prime}F^{\prime\prime}}\}_k$
of projected states as 
\begin{multline}
	\phi^{k}_{A^{\prime} A \overline{B^{\prime}_{k}}E^{\prime\prime}F^{\prime\prime}}=\\
	\lambda^{-1}_{k} [(P_{B^{\prime}}^{k}\otimes \overline{I})\phi_{A^{\prime}AB^{\prime}E^{\prime\prime}F^{\prime\prime}}(P_{B^{\prime}}^{k}\otimes \overline{I})] \label{psiProjection} ,
\end{multline}
where
\begin{align}
\overline{I} & \equiv I_{A^{\prime}}\otimes
I_{A}\otimes
I_{E^{\prime\prime}}\otimes I_{F^{\prime\prime}}, \\
\lambda_{k} & \equiv \Tr\{(P_{B^{\prime}}^{k}\otimes \overline{I})\phi_{A^{\prime} A B^{\prime}E^{\prime\prime}F^{\prime\prime}}\}, \\
\lim\limits_{k\to \infty}\lambda_{k} & = 1.
\end{align}
This then leads to the following sequence of projected states:
\begin{equation}
\psi^k_{A^{\prime}B\overline{B^{\prime}_k} E^{\prime}E^{\prime\prime}F^{\prime}F^{\prime\prime}} \equiv
\mathcal{U}_{A \to B E' F'}
(\phi^{k}_{A^{\prime} A \overline{B^{\prime}_{k}}E^{\prime\prime}F^{\prime\prime}}).
\end{equation}
Note that each state $\psi^{k}_{A^{\prime}B\overline{B^{\prime}_{k}}E^{\prime}E^{\prime\prime}F^{\prime}F^{\prime\prime}}$ is pure for all $k \geq 1$.
Then the conditional entropy and the conditional mutual information of the sequence converge to those of the original state \cite{S15,K11}:\begin{align}
	\lim\limits_{k\to \infty} H(B|E^{\prime})_{\psi^{k}}&=H(B|E^{\prime})_{\psi} \label{convergineCondiitonalEntropy},
	\\
		\lim\limits_{k\to \infty}
	H(B|F^{\prime})_{\psi^{k}} & = 
	H(B|F^{\prime})_{\psi} \label{convergineCondiitonalEntropy-1},
	\\ \lim\limits_{k\to \infty} I(A^{\prime};B\overline{B^{\prime}_{k}}|E^{\prime}E^{\prime\prime})_{\psi^{k}}&=I(A^{\prime};BB^{\prime}|E^{\prime}E^{\prime\prime})_{\psi} , 
	\label{convergineCondiitonalMutualInformation-1}
	\\
\lim\limits_{k\to \infty} I(A^{\prime}A;\overline{B^{\prime}_{k}}|E^{\prime\prime})_{\phi^{k}}
& = I(A^{\prime}A;B^{\prime}|E^{\prime\prime})_{\phi} .
\label{convergineCondiitonalMutualInformation}
\end{align}
The limits in \eqref{convergineCondiitonalEntropy}--\eqref{convergineCondiitonalEntropy-1} follow because
$\lim\limits_{k\to \infty}
	H(B)_{\psi^{k}}  = 
	H(B)_{\psi} < \infty$, by applying \cite[Lemma~2]{S15} and \cite[Proposition~2]{K11}.
The limits in \eqref{convergineCondiitonalMutualInformation-1}--\eqref{convergineCondiitonalMutualInformation} follow, with possible $+\infty$ on the right-hand side,  from the lower semicontinuity of conditional quantum mutual information 
and its monotonicity under local operations \cite[Theorem~2]{S15}.

Due to the fact that
$H(B\overline{B'_k})_{\psi^{k}} < \infty$ for all $k\geq 1$, 
we can  write the CQMI of the state $\psi^{k}_{A^{\prime}B\overline{B^{\prime}_{k}}E^{\prime}E^{\prime\prime}F^{\prime}F^{\prime\prime}}$ in terms of conditional entropies as in \eqref{CQMIisConditionalEntropies} and then use the duality of conditional entropy as in \eqref{conditionalEntropyDuality} to find that
	\begin{align}	
	& I(A^{\prime};B\overline{B^{\prime}_{k}}|E^{\prime}E^{\prime\prime})_{\psi^{k}}
	\nonumber \\
&=H(B\overline{B^{\prime}_{k}}|E^{\prime}E^{\prime\prime})_{\psi^{k}}-H(B\overline{B^{\prime}_{k}}|A^{\prime}E^{\prime}E^{\prime\prime})_{\psi^{k}}\\ &=H(B\overline{B^{\prime}_{k}}|E^{\prime}E^{\prime\prime})_{\psi^{k}}+H(B\overline{B^{\prime}_{k}}|F^{\prime}F^{\prime\prime})_{\psi^{k}}.  \label{dualityProjected}
	\end{align}
We then employ the subadditivity of conditional entropy from  \eqref{subadditivityOfConditionalEntropy} to split up each of these two terms and regroup the resulting terms:
	\begin{align} 
& H(B\overline{B^{\prime}_{k}}|E^{\prime}E^{\prime\prime})_{\psi^{k}}+H(B\overline{B^{\prime}_{k}}|F^{\prime}F^{\prime\prime})_{\psi^{k}} \nonumber 
\\
&\leq H(B|E^{\prime})_{\psi^{k}}+H(\overline{B^{\prime}_{k}}|E^{\prime\prime})_{\psi^{k}}+H(B|F^{\prime})_{\psi^{k}}
\nonumber
\\
& \qquad +H(\overline{B^{\prime}_{k}}|F^{\prime\prime})_{\psi^{k}}
\\&= H(B|E^{\prime})_{\psi^{k}}+H(B|F^{\prime})_{\psi^{k}}  \nonumber\\
& \qquad + H(\overline{B^{\prime}_{k}}|E^{\prime\prime})_{\psi^{k}}+H(\overline{B^{\prime}_{k}}|F^{\prime\prime})_{\psi^{k}}. \label{subadditivitySplit} 
	\end{align}
	This is then recognizable as two conditional entropies from after the channel use added to the conditional mutual information from before the channel use:
	\begin{multline}
I(A^{\prime};B\overline{B^{\prime}_{k}}|E^{\prime}E^{\prime\prime})_{\psi^{k}} 
		\leq H(B|E^{\prime})_{\psi^{k}}+H(B|F^{\prime})_{\psi^{k}}\\
		+I(A^{\prime}A;\overline{B^{\prime}_{k}}|E^{\prime\prime})_{\phi^{k}} \label{regroupedCQMI}
	\end{multline}
	Taking the limit $k\to\infty$ of this expression and applying \eqref{convergineCondiitonalEntropy}--\eqref{convergineCondiitonalMutualInformation} gives the inequality stated in \eqref{CQMIbound}:
	\begin{align} 
		& I(A^{\prime};BB^{\prime}|E^{\prime}E^{\prime\prime})_{\psi} \nonumber \\
	& 	=\lim\limits_{k\to \infty}I(A^{\prime};B\overline{B^{\prime}_{k}}|E^{\prime}E^{\prime\prime})_{\psi^{k}}
	\nonumber \\
	&\leq \lim\limits_{k\to \infty}\big[H(B|E^{\prime})_{\psi^{k}}+H(B|F^{\prime})_{\psi^{k}}+I(A^{\prime}A;\overline{B^{\prime}_{k}}|E^{\prime\prime})_{\phi^{k}}\big]
	\nonumber \\
	&=H(B|E^{\prime})_{\psi}+H(B|F^{\prime})_{\psi}+I(A^{\prime}A;B^{\prime}|E^{\prime\prime})_{\phi} \label{limits4lemmas} .
	\end{align}
This concludes the proof.	
\end{proof}

\section{Energy-Constrained
Secret-Key-Agreement Capacity}

\label{s:ECSKAC}

We now outline a protocol for energy-constrained secret key agreement between two parties Alice and Bob. The resources available to Alice and Bob in such a protocol are $n$ uses of a quantum channel $\mathcal{N}$ interleaved by rounds of LOCC. The energy constraint is such that the average energy of the $n$ states input to each channel use should be bounded from above by a fixed positive real number, where the energy is with respect to a given energy observable.
It is sensible to consider an energy constraint $P$ for any such protocol in light of the fact that any real transmitter is necessarily power limited.
A third party Eve has access to all of the classical information exchanged between Alice and Bob, as well as the environment of each of the $n$ uses of the channel $\mathcal{N}$. For a
photon-loss channel, the physical meaning of the latter assumption is that Eve retains all of the light that is lost along the way from Alice to Bob.

\subsection{Secret-Key-Agreement Protocol with an Average Energy Constraint}

\label{ss:LOCCcapacityWithAvgEnergyConstraint}

\begin{figure}
	\begin{center}
		\includegraphics[width=3.4in]{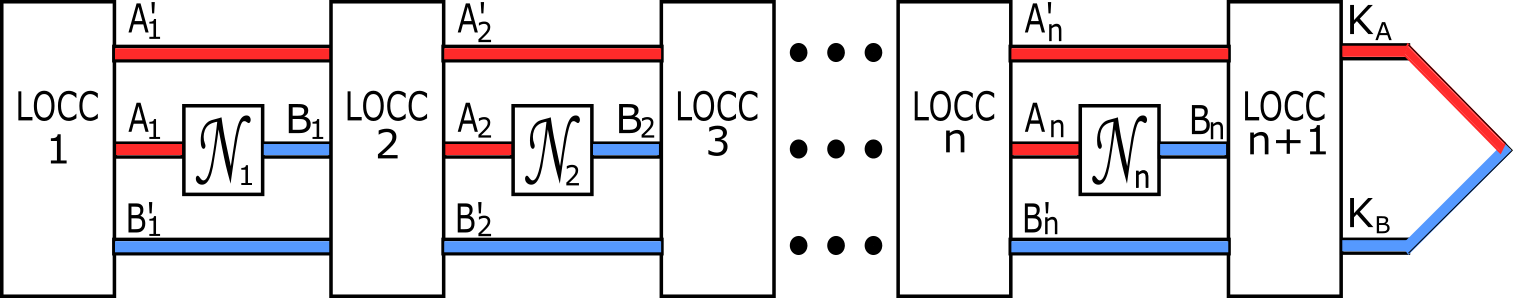}
		\caption[Protocol for 2 way assisted communication]{A secret-key-agreement protocol begins with Alice and Bob preparing a separable state of systems $A_1' A_1 B_1'$ using LOCC. Alice then feeds the $A_{1}$ system into the first channel in order to generate the $B_{1}$ system. After repeating this procedure $n$ times, with rounds of LOCC interleaved between every channel use, Alice and Bob perform a final round of LOCC, which yields the key systems $K_{A}$ and $K_{B}$.}
		\label{LOCCProtocol}
	\end{center}
\end{figure}

We first recall the notion of an energy observable:
\begin{definition}[Energy Observable]
For a Hilbert space $\mathcal{H}$,
let $G\in\mathcal{L}_{+}(\mathcal{H})$ denote 
 a positive semi-definite operator, defined in terms of its action on a vector $|\psi\rangle$ as
 \begin{equation}
 G|\psi\rangle=\sum_{j=1}^{\infty}g_{j}|e_{j}\rangle\langle e_{j}|\psi\rangle \label{eq:energyObservable} ,\end{equation}
 for $|\psi\rangle$ such that $\sum_{j=1}^{\infty}g_{j}|\langle e_{j}|\psi\rangle |^{2}<\infty$.
In the above, $\{|e_{j}\rangle\}_{j}$ is an orthonormal basis  and
$\{g_{j}\}_{j}$ is
a sequence of non-negative, real numbers. Then $\{|e_{j}\rangle\}_{j}$ is an eigenbasis for $G$ with corresponding eigenvalues $\{g_{j}\}_{j}$.
   We also follow the convention that \begin{equation} \Tr\{G\rho\}=\sup_{n} \Tr\{\Pi_{n}G\Pi_{n}\rho\} \label{eq:projectorConvention} ,\end{equation} where $\Pi_{n}$ is a spectral projector for $G$ corresponding to the interval $[0,n]$ \cite{H12,HS12}.
\end{definition}

We now formally define an energy-constrained secret-key-agreement protocol. Fix $n,K \in \mathbb{N}$, an energy observable $G$, a positive real $P\in[0,\infty)$, and $\varepsilon \in [0,1]$. An $(n,K,G,P,\varepsilon)$ secret-key-agreement protocol invokes $n$ uses of  a quantum channel $\mathcal{N}$, with each channel use interleaved by a countably decomposable LOCC channel. Such a protocol generates an $\varepsilon$-approximate tripartite key state of dimension $K$. Furthermore, the average energy of the channel input states, with respect to the energy observable $G$, is no larger than~$P$. Such a protocol is depicted in Figure~\ref{LOCCProtocol}.

In more detail, such a protocol begins with Alice and Bob performing an LOCC channel $\mathcal{L}^{(1)}_{\emptyset\to A^{\prime}_{1}A_{1}B^{\prime}_{1}}$ to generate a state $\rho^{(1)}_{A^{\prime}_{1}A_{1}B^{\prime}_{1}}$ that is separable with respect to the cut
$A^{\prime}_{1}A_{1}  | B^{\prime}_{1}$. 
Since the channel is a countably decomposable LOCC channel, the state 
$\rho^{(1)}_{A^{\prime}_{1}A_{1}B^{\prime}_{1}}$ is a countably decomposable separable state, as considered in \cite[Definition~1]{maks}.
Alice then inputs the system $A_{1}$  to the first channel use, resulting in the state
\begin{equation}
\sigma^{(1)}_{A^{\prime}_{1}B_{1}B^{\prime}_{1}}
\equiv
\mathcal{N}_{A_{1}\to B_{1}}(\rho^{(1)}_{A^{\prime}_{1}A_{1}B^{\prime}_{1}}).
\end{equation}
For now, we do not describe the systems that the eavesdropper obtains, and we only do so in the next subsection.
Alice and Bob then perform a second LOCC channel, producing
the state
\begin{equation} \rho^{(2)}_{A^{\prime}_{2}A_{2}B^{\prime}_{2}}
\equiv
\mathcal{L}^{(2)}_{A^{\prime}_{1}B_{1}B^{\prime}_{1}\to A^{\prime}_{2}A_{2}B^{\prime}_{2}}(\sigma^{(1)}_{A^{\prime}_{1}B_{1}B^{\prime}_{1}}). \label{secondLOCC} \end{equation} Next, Alice feeds system $A_{2}$ into the second channel use, which leads to the state
\begin{equation}
\sigma^{(2)}_{A^{\prime}_{2}B_{2}B^{\prime}_{2}}
\equiv
\mathcal{N}_{A_{2}\to B_{2}}(\rho^{(2)}_{A^{\prime}_{2}A_{2}B^{\prime}_{2}}).
\end{equation}
The procedure continues in this manner with a total of $n$ rounds of LOCC interleaved with $n$ uses of the channel as follows. For $i\in \{2,\ldots,n\}$, the relevant states of the protocol are
as follows:
\begin{align}
\rho^{(i)}_{A^{\prime}_{i}A_{i}B^{\prime}_{i}}&
\equiv
\mathcal{L}^{(i)}_{A^{\prime}_{i-1}B_{i-1}B^{\prime}_{i-1}\to A^{\prime}_{i}A_{i}B^{\prime}_{i}}(\sigma^{(i-1)}_{A^{\prime}_{i-1}B_{i-1}B^{\prime}_{i-1}}), \label{generalLOCC}
 \\
\sigma^{(i)}_{A^{\prime}_{i}B_{i}B^{\prime}_{i}}
&
\equiv
\mathcal{N}_{A_{i}\to B_{i}}(\rho^{(i)}_{A^{\prime}_{i}A_{i}B^{\prime}_{i}}). \label{generalChannel}
\end{align}
The primed systems correspond to separable Hilbert spaces.
After the $n$th channel use, a final LOCC channel is performed to produce key systems $K_{A}$ and $K_{B}$ for Alice and Bob, respectively, such that the final state is as follows: 
\begin{equation}
\omega_{K_{A}K_{B}}\equiv \mathcal{L}^{(n+1)}_{A^{\prime}_{n}B_{n}B^{\prime}_{n}\to K_{A}K_{B}}(\sigma^{(n)}_{A^{\prime}_{n}B_{n}B^{\prime}_{n}}).\label{finalLOCC}
\end{equation}

The average energy of the
$n$ channel input states
with respect to the
energy observable $G$  is constrained by $P$ as follows:
\begin{equation}
\frac{1}{n}\sum_{i=1}^{n}\Tr\{G\rho^{(i)}_{A_{i}}\}\leq P .\label{eq:energyConstraint}
\end{equation}
In the above,
$\rho^{(i)}_{A_{i}}$ is the marginal of the channel input states defined in \eqref{generalLOCC}.

One could alternatively demand a uniform bound on each channel input state, rather than a bound on the average energy. That is, one could demand that
\begin{equation}
\forall i \in \{1, \ldots, n\} : \Tr\{G\rho^{(i)}_{A_{i}}\}\leq P .
\end{equation}
Such an energy constraint would lead to a slightly different notion of capacity, and we return to this point later in Section~\ref{sec:energy-constr-cap}.

\subsection{The Purified Protocol} \label{ss:purified}

We now consider the role of a third party Eve in a secret-key-agreement protocol. The initial state $\rho^{(1)}_{A^{\prime}_{1}A_{1}B^{\prime}_{1}}$ is a separable state of the following form:
\begin{equation}
\rho_{A_{1}^{\prime}A_{1}B_{1}^{\prime}}^{(1)}\equiv\sum_{y_{1}}p_{Y_{1}}(y_{1})\tau_{A_{1}^{\prime}A_{1}}^{y_{1}}\otimes\zeta_{B_{1}}^{y_{1}},\label{separableRho}
\end{equation}
where $Y_{1}$ is a classical random variable corresponding to the message exchanged between Alice and Bob, which is needed to establish this state. The state
$\rho_{A_{1}^{\prime}A_{1}B_{1}^{\prime}}^{(1)}$
can be purified as
\begin{multline}
|\rho^{(1)} \rangle_{A_{1}^{\prime}A_{1}S_{A_{1}}B_{1}^{\prime}S_{B_{1}}Y_{1}}\equiv\\ \sum_{y_{1}}\sqrt{p_{Y_{1}}(y_{1})}|\tau^{y_{1}}\rangle_{A_{1}^{\prime}A_{1}S_{A_{1}}}\otimes|\zeta^{y_{1}}\rangle_{B_{1}S_{B_{1}}}\otimes|y_{1}\rangle_{Y_{1}}, \label{purifiedRho1}
\end{multline}
where the local shield systems
$S_{A_{1}}$ and $S_{B_{1}}$ are described by separable Hilbert spaces and in principle could be held by Alice and Bob, respectively, $|\tau^{y_{1}}\rangle_{A_{1}^{\prime}A_{1}S_{A_{1}}}$ and $|\zeta^{y_{1}}\rangle_{B_{1}S_{B_{1}}}$ purify $\tau_{A_{1}^{\prime}A_{1}}^{y_{1}}$ and $\zeta_{B_{1}}^{y_{1}}$, respectively, and Eve possesses system $Y_{1}$, which contains a coherent classical copy of the classical data exchanged.

\begin{figure}
	\begin{center}
		\includegraphics[width=3.4in]{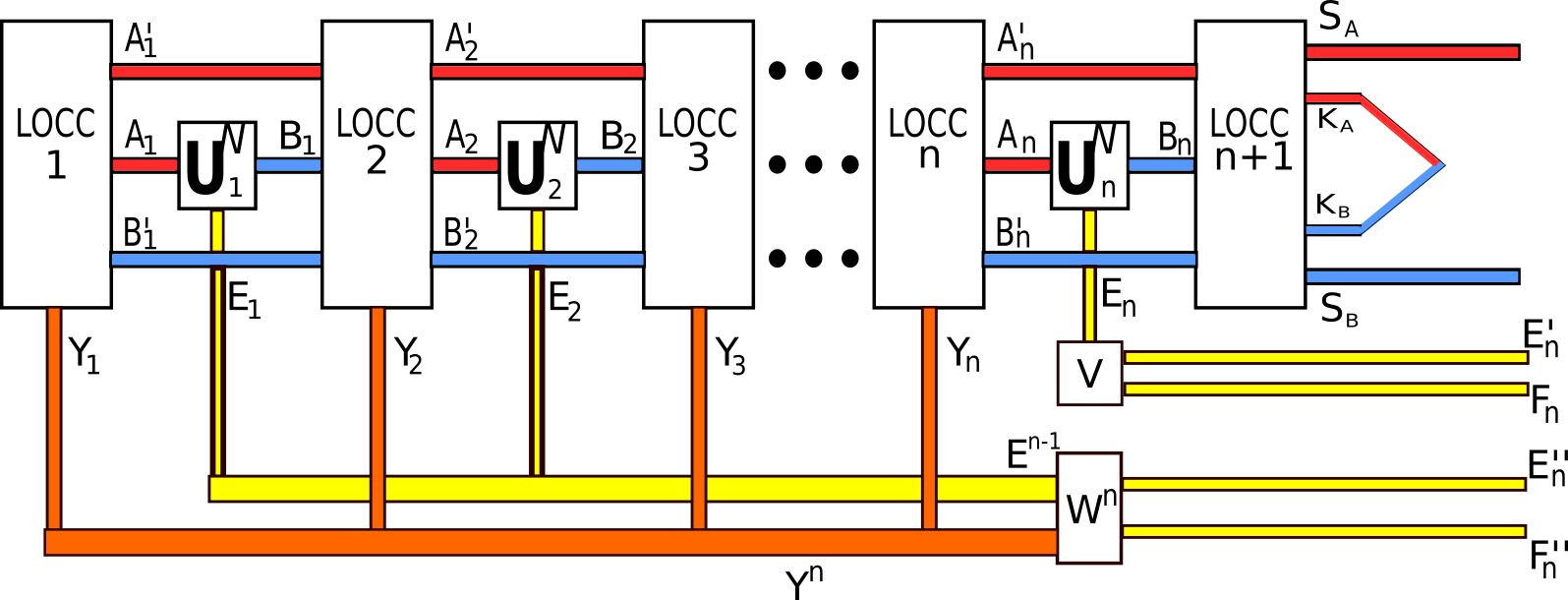}
		\caption[Purified 2 way assisted communication protocol]{Alice and Bob alternate rounds of LOCC and channel uses, just as in Figure~\ref{LOCCProtocol}. Each channel use is now purified, which yields outputs to Eve, the environment. Classical data is also collected by Eve from the LOCC. Eve's squashing channels are also purified and depicted above for the $n$th channel use.}
		\label{LOCCPurifiedProtocol}
	\end{center}
\end{figure}

Each LOCC channel $\mathcal{L}^{(i)}_{A^{\prime}_{i-1}B_{i-1}B^{\prime}_{i-1}\to A^{\prime}_{i}A_{i}B^{\prime}_{i}}$ for $i\in\{2,\ldots,n\}$ is of the form in \eqref{LOCCchannel} as
\begin{multline}
\mathcal{L}^{(i)}_{A^{\prime}_{i-1}B_{i-1}B^{\prime}_{i-1}\to A^{\prime}_{i}A_{i}B^{\prime}_{i}}
=\\
\sum_{y_{i}}
\mathcal{E}^{y_{i}}_{A_{i-1}^{\prime}\rightarrow A_{i}^{\prime}A_{i}}\otimes
\mathcal{F}^{y_{i}}_{B_{i-1}B_{i-1}^{\prime}\rightarrow B_{i}^{\prime}} ,
\end{multline}
and can be purified to an isometry in the following way:
\begin{multline}
U_{A_{i-1}^{\prime}B_{i-1}B_{i-1}^{\prime}\rightarrow A_{i}^{\prime}A_{i}S_{A_{i}}B_{i}^{\prime}S_{B_{i}}Y_{i}}^{\mathcal{L}^{(i)}}\\ \equiv\sum_{y_{i}}U_{A_{i-1}^{\prime}\rightarrow A_{i}^{\prime}A_{i}S_{A_{i}}}^{\mathcal{E}^{y_{i}}}\otimes U_{B_{i-1}B_{i-1}^{\prime}\rightarrow B_{i}^{\prime}S_{B_{i}}}^{\mathcal{F}^{y_{i}}}\otimes|y_{i}\rangle_{Y_{i}},
\label{eq:LOCC-as-separable-iso-ext}
\end{multline}
where $\{U_{A_{i-1}^{\prime}\rightarrow A_{i}^{\prime}A_{i}S_{A_{i}}}^{\mathcal{E}^{y_{i}}}\}_{y_{i}}$ and $\{U_{B_{i-1}B_{i-1}^{\prime}\rightarrow B_{i}^{\prime}S_{B_{i}}}^{\mathcal{F}^{y_{i}}}\}_{y_{i}}$ are collections of linear operators (each of which is a contraction, that is, $\Vert U_{A_{i-1}^{\prime}\rightarrow A_{i}^{\prime}A_{i}S_{A_{i}}}^{\mathcal{E}^{y_{i}}}\Vert _{\infty},\ \Vert U_{B_{i-1}B_{i-1}^{\prime}\rightarrow B_{i}^{\prime}S_{B_{i}}}^{\mathcal{F}^{y_{i}}}\Vert _{\infty}\leq1$) such that the linear operator in \eqref{eq:LOCC-as-separable-iso-ext} is an isometry. The systems $S_{A_i}$ and $S_{B_i}$ are shield systems belonging to Alice and Bob, respectively,  and $Y_{i}$ is a system held by Eve, containing a coherent classical copy of the classical data exchanged in this round. So a purification of the state
$\rho^{(i)}_{A^{\prime}_{i}A_{i}B^{\prime}_{i}}$
after each LOCC channel is as follows:
\begin{multline}
\vert \rho^{(i)}
\rangle _{A^{\prime}_{i}A_{i}
S_{A_1^{i}}
B^{\prime}_{i} S_{B_1^{i}}
E_1^{i-1} Y_1^{i}} 
\equiv\\
U_{A_{i-1}^{\prime}B_{i-1}B_{i-1}^{\prime}\rightarrow A_{i}^{\prime}A_{i}S_{A_{i}}B_{i}^{\prime}S_{B_{i}}Y_{i}}^{\mathcal{L}^{(i)}} \times \\ 
\vert \sigma^{(i-1)}\rangle _{A^{\prime}_{i-1}B_{i-1}B^{\prime}_{i-1} S_{A_1^{i-1}} S_{B_1^{i-1}} E_1^{i-1} Y_1^{i-1}},
\end{multline}
where we have employed the shorthands
$S_{A_1^{i}} \equiv S_{A_1} \cdots 
S_{A_i}$ and $S_{B_1^{i}} \equiv S_{B_1} \cdots 
S_{B_i}$, with a similar shorthand for 
$E_1^{i-1}$ and $ Y_1^{i}$.
A purification of the state
$\sigma^{(i)}_{A^{\prime}_{i}B_{i}B^{\prime}_{i}}$
after each use of the channel $\mathcal{N}_{A \to B}$ is
\begin{multline}
\vert \sigma ^{(i)}\rangle_
{A^{\prime}_{i}B_{i}
S_{A_1^{i}}
B^{\prime}_{i} S_{B_1^{i}}
E_1^{i} Y_1^{i}}
\equiv\\
U^\mathcal{N}_{A_{i}\to B_{i} E_{i}} \vert \rho^{(i)}
\rangle _{A^{\prime}_{i}A_{i}
S_{A_1^{i}}
B^{\prime}_{i} S_{B_1^{i}}
E_1^{i-1} Y_1^{i}} ,
\end{multline}
where $U^\mathcal{N}_{A_{i}\to B_{i} E_{i}} $ is an isometric extension of $i$th  channel use $\mathcal{N}_{A_{i}\to B_{i}} $.
The final LOCC channel also takes the form in \eqref{LOCCchannel}
\begin{equation}
\mathcal{L}_{A_{n}^{\prime}B_{n}B_{n}^{\prime}\rightarrow K_{A}K_{B}}^{(n+1)}=
\sum_{y_{n+1}}\mathcal{E}_{A_{n}^{\prime}\rightarrow K_{A}}^{y_{n+1}}\otimes\mathcal{F}_{B_{n}B_{n}^{\prime}\rightarrow K_{B}}^{y_{n+1}}, \label{finalLOCCchannel}
\end{equation}
and it can be purified to an isometry similarly as
\begin{multline}
U_{A_{n}^{\prime}B_{n}B_{n}^{\prime}\rightarrow K_{A}S_{A_{n+1}}K_{B}S_{B_{n+1}} Y_{n+1}}^{\mathcal{L}^{(n+1)}}\\ 
\equiv\sum_{y_{n+1}}U_{A_{n}^{\prime}\rightarrow K_{A}S_{A_{n+1}}}^{\mathcal{E}^{y_{n+1}}}\otimes U_{B_{n}B_{n}^{\prime}\rightarrow K_{B}S_{B_{n+1}}}^{\mathcal{F}^{y_{n+1}}}\otimes|y_{n+1}\rangle_{Y_{n+1}}. \label{finalLOCCchannelPurified} 
\end{multline}
The systems $S_{A_{n+1}}$ and $S_{B_{n+1}}$ are again shield systems belonging to Alice and Bob, respectively,  and $Y_{n+1}$ is a system held by Eve, containing a coherent classical copy of the classical data exchanged in this round.
As written above, each channel use $\mathcal{N}_{A_{i}\to B_{i}}$ can be purified, as in \eqref{eq:extendedChannel} and \eqref{eq:extendedKraus},
to an isometric channel $\mathcal{U}^{\mathcal{N}}_{A_{i}\to B_{i}E_{i}}$ such that Eve possesses system $E_{i}$ for all $i\in\{1,\ldots,n\}$.

The final state at the end of the purified protocol is a pure state $|\omega\rangle_{K_{A}S_{A}K_{B}S_{B}E^{n}Y^{n+1}}$, given by
\begin{multline}
|\omega\rangle_{K_{A}S_{A}K_{B}S_{B}E^{n}Y^{n+1}} = \\
U_{A_{n}^{\prime}B_{n}B_{n}^{\prime}\rightarrow K_{A}S_{A_{n+1}}K_{B}S_{B_{n+1}} Y_{n+1}}^{\mathcal{L}^{(n+1)}} \times \\
\vert \sigma ^{(n)}\rangle_
{A^{\prime}_{n}B_{n}
S_{A_1^{n}}
B^{\prime}_{n} S_{B_1^{n}}
E_1^{n} Y_1^{n}}.
\end{multline}
Alice is in possession of the key system $K_{A}$ and the shield systems $S_{A}\equiv S_{A_{1}}\ldots S_{A_{n+1}}$, Bob possesses the key system $K_{B}$ and the shield systems $S_{B}\equiv S_{B_{1}}\ldots S_{B_{n+1}}$, and Eve holds the environment systems $E^{n}\equiv E_{1}\ldots E_{n}$. The $S_{A}$, $S_{B}$, and $E^{n}$ systems all correspond to separable Hilbert spaces of generally infinite dimensions. Additionally, Eve has coherent copies
$Y^{n+1}\equiv Y_{1}\ldots Y_{n+1}$
 of all the classical data exchanged. By tracing over the systems $E^n$ and $Y^{n+1}$, it is clear that the protocol is an LOCC-assisted protocol whose aim is to generate an approximate bipartite private state on the systems $K_A S_A K_B S_B$.

For a fixed $n,K\in \mathbb{N}$ and $\varepsilon\in [0,1]$, the protocol is an $(n,K,G,P,\varepsilon)$ secret-key-agreement protocol if the final state 
$\omega_{K_{A}S_{A}K_{B}S_{B}}$ satisfies
\begin{equation}
F(\omega_{K_{A}S_{A}K_{B}S_{B}},\gamma_{K_{A}S_{A}K_{B}S_{B}})\geq 1-\varepsilon, \label{fidelityCriterion}
\end{equation}
where $\gamma_{K_{A}S_{A}K_{B}S_{B}}$ is a bipartite private state as in \eqref{privateState}. Alternatively (and equivalently), the criterion is that the final state $\omega_{K_{A}K_{B}E^n Y^{n+1}}$ satisfies
\begin{equation}
F(\omega_{K_{A}K_{B}E^n Y^{n+1}},\gamma_{K_{A}K_{B}E^n Y^{n+1}})\geq 1-\varepsilon, \label{fidelityCriterionTri}
\end{equation}
where $\gamma_{K_{A}K_{B}E^n Y^{n+1}}$ is a tripartite key state as in \eqref{secretKeyState}.

\subsection{Achievable Rates and Energy-Constrained Secret-Key-Agreement Capacity}

\label{sec:energy-constr-cap}

The rate $R=\frac{\log_2 K}{n}$ is a measure of the efficiency of the protocol, measured in secret key bits communicated per channel use. We say that the rate $R$ is achievable if, for all $\varepsilon\in (0,1)$, $\delta >0$, and for sufficiently large $n$, there exists an $(n,2^{n(R-\delta)},G,P,\varepsilon)$ secret-key-agreement protocol.

We call $P_{2}(\mathcal{N},G,P)$ the energy-constrained secret-key-agreement capacity of the channel $\mathcal{N}$, and it is equal to the supremum of all achievable rates subject to the energy constraint $P$ with respect to the energy observable~$G$.

As discussed previously in Section~\ref{ss:LOCCcapacityWithAvgEnergyConstraint}, one could have a modified notion of energy-constrained communication based on a uniform energy constraint, and this would lead to a different definition of capacity. However, it is clear from the definitions that for the same parameters $n$, $G$, $P$, and $\varepsilon$, the number of secret key values $K$ can only be the same or larger for a protocol having an average energy constraint, when compared to one that has a uniform constraint (simply because meeting the average energy constraint implies that the uniform energy constraint is met). Accordingly, the capacity with a uniform energy constraint can never exceed that with an average energy constraint. Since one of the main results of our paper is to obtain upper bounds on the (average) energy-constrained capacities, our results are much stronger than they would be had we only reported upper bounds on the uniform energy-constrained capacities.

\subsection{Energy-Constrained LOCC-assisted Quantum Communication}

We define the energy-constrained
LOCC-assisted quantum capacity $Q_{2}(\mathcal{N},G,P)$ of a channel $\mathcal{N}$ similarly. In this case, an $(n,K,G,P,\varepsilon)$ energy-constrained LOCC-assisted quantum communication protocol is defined similarly as in Section~\ref{ss:LOCCcapacityWithAvgEnergyConstraint}, but the main difference is that the final state $\omega_{K_{A}K_{B}}$ should satisfy the following inequality:
\begin{equation} F(\omega_{K_{A}K_{B}},\Phi_{AB})\geq 1-\varepsilon,\label{quantumFidelityCriterion} \end{equation} 
where $\Phi_{AB}$ is a maximally entangled state. Achievable rates are defined similarly as in the previous subsection, and the energy-constrained
LOCC-assisted quantum capacity $Q_{2}(\mathcal{N},G,P)$ of the channel $\mathcal{N}$ is defined to be equal to the supremum of all achievable rates.

It is worthwhile to note that the end goal of an LOCC-assisted quantum communication protocol is more difficult to achieve than a secret-key-agreement protocol for the same channel $\mathcal{N}$, energy observable $G$, energy constraint $P$, number $n$ of channel uses, and error parameter $\varepsilon$. This is because a maximally entangled state $\Phi_{K_A K_B}$ is a very particular kind of bipartite private state $\gamma_{K_A S_A K_B S_B}$, as observed in \cite{HHHO05,HHHO09}. Given this observation, it immediately follows that the energy-constrained LOCC-assisted quantum capacity is bounded from above by the energy-constrained secret-key-agreement capacity:
\begin{equation} Q_{2}(\mathcal{N},G,P)\leq P_{2}(\mathcal{N},G,P). \label{quantumCapLeqPrivateCap} \end{equation}

\section{Energy-Constrained Squashed Entanglement is an Upper Bound on Energy-Constrained Secret-Key-Agreement Capacity} 

\label{s:EsqBound}

The main goal of this section is to prove that the energy-constrained squashed entanglement of a quantum channel is an upper bound on its energy-constrained secret-key-agreement capacity. Before doing so, we recall the notion of a Gibbs observable \cite{H03,H04,HS06,Holevo2010,HS12,H12} and the finite output-entropy condition \cite{H03,H04,H12} for quantum channels.

\begin{definition}[Gibbs Observable]
\label{def:Gibbs-obs}
An energy observable $G$ is a Gibbs observable if \begin{equation} \Tr\{\exp(-\beta G)\}<\infty \label{gibbsObservable} \end{equation} for all $\beta>0$.
\end{definition}

This condition implies that there exists a well defined thermal state for $G$, having the following form for all $\beta > 0$ \cite{W78} (see also \cite{H03,HS06}):
\begin{equation}
	e^{-\beta G}/\Tr\{e^{-\beta G}\}. \label{eq:thermalState}
\end{equation}

\begin{condition}[Finite Output Entropy]
\label{finiteOutputEntropy} 
Let $G$ be  a Gibbs observable as in Definition~\ref{def:Gibbs-obs},
 and let $P\in [0,\infty)$ be an energy constraint. A quantum channel $\mathcal{N}$ satisfies the finite output-entropy condition with respect to $G$ and $P$ if
 \cite{H03,H04,H12} 
 \begin{equation} \sup_{\rho:\Tr\{G\rho\}\leq P} H(\mathcal{N}(\rho))<\infty .\end{equation}
\end{condition}

If a channel $\mathcal{N}$ satisfies the finite output-entropy condition
with respect to $G$ and $P$, then any complementary channel $\hat{\mathcal{N}}$ of $\mathcal{N}$ also satisfies the condition \cite{qi}:
\begin{equation} \sup_{\rho:\Tr\{G\rho\}\leq P} H(\hat{\mathcal{N}}(\rho))<\infty \label{finiteOutputEntropyComp} .\end{equation}

\begin{lemma}
\label{lem:finiteHimpliesfiniteEsq}
Finiteness of the output entropy of a channel $\mathcal{N}$ implies finiteness of the energy-constrained squashed entanglement of that channel. That is, if
\begin{equation}
\sup_{\rho:\Tr\{G\rho\}\leq P} H(\mathcal{N}(\rho))<\infty 
\end{equation}
holds, then
\begin{equation}
E_{\sq}(\mathcal{N}, G, P)<\infty \label{finiteSquashedEntanglementChannel} .\end{equation}
\end{lemma}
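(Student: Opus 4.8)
The plan is to reduce the claim to the elementary fact that the bipartite squashed entanglement of any state is bounded above by the entropy of its receiver marginal, and then to invoke the hypothesis on the output entropy. The first step is to note that, in the defining infimum \eqref{eq:squashede}, one is always free to choose the product extension $\omega_{ABE} = \rho_{AB}\otimes\sigma_E$ for an arbitrary fixed state $\sigma_E$ on the (separable, infinite-dimensional) squashing system. For such an extension the environment is uncorrelated with $AB$, so that $I(A;B|E)_\omega = I(A;B)_\rho$, which yields
\begin{equation}
E_{\sq}(A;B)_\rho \leq \tfrac{1}{2} I(A;B)_\rho.
\end{equation}
Combining this with the bound \eqref{mutualInfoBound}, namely $I(A;B)_\rho \leq 2\min\{H(A)_\rho, H(B)_\rho\}$, and retaining only the $B$-marginal in the minimum gives the key estimate
\begin{equation}
E_{\sq}(A;B)_\rho \leq \min\{H(A)_\rho, H(B)_\rho\} \leq H(B)_\rho.
\end{equation}

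The second step is to apply this estimate to the channel quantity. I would take $\rho = \omega_{RB}$, the output state of \eqref{eq:omega-1} on a pure input $\phi_{RA}$, making the identifications $A\to R$ and $B\to B$. The crucial point is to keep the $B$-marginal in the minimum above rather than the $R$-marginal, since $H(R)_\omega = H(A)_\phi$ may be infinite. The reduced state on the output is $\omega_B = \mathcal{N}(\phi_A)$ with $\phi_A = \Tr_R\{\phi_{RA}\}$, so $H(B)_\omega = H(\mathcal{N}(\phi_A))$, and therefore $E_{\sq}(R;B)_\omega \leq H(\mathcal{N}(\phi_A))$. Taking the supremum over admissible pure inputs in the definition \eqref{eq:constrainedSquashedeN} and then enlarging the feasible set to all constrained states finishes the argument:
\begin{equation}
E_{\sq}(\mathcal{N},G,P) \leq \sup_{\phi_{RA}:\Tr\{G\phi_A\}\leq P} H(\mathcal{N}(\phi_A)) \leq \sup_{\rho:\Tr\{G\rho\}\leq P} H(\mathcal{N}(\rho)) < \infty.
\end{equation}

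The argument is essentially routine, so I do not expect a serious obstacle; the only point requiring care is the infinite-dimensional bookkeeping in the first step. Since the definition \eqref{eq:squashede} nominally demands an infinite-dimensional separable squashing system, I must confirm that the product extension with an arbitrary (say pure) $\sigma_E$ is admissible and indeed leaves the conditional mutual information unchanged; this follows directly from the supremum characterizations \eqref{exconmutualiA}--\eqref{exconmutualiB}, as tensoring on an uncorrelated system does not alter the projected mutual informations. An equally valid alternative route, available because $H(B)_\omega$ is finite under the hypothesis, would be to use the representation of $E_{\sq}(\rho_A,\mathcal{N}_{A\to B})$ via squashing isometries together with the bound $H(B|X)_\omega \leq H(B)_\omega$ from \eqref{welldefinedConditionale} applied to both $H(B|E')_\omega$ and $H(B|F)_\omega$, giving the same conclusion $E_{\sq}(\rho_A,\mathcal{N}_{A\to B}) \leq H(B)_\omega$.
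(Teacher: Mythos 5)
Your proposal is correct and follows essentially the same route as the paper's own proof: choose an uncorrelated (trivial/product) extension to get $E_{\sq}(A;B)_\rho \leq \tfrac{1}{2}I(A;B)_\rho$, then combine the bound \eqref{mutualInfoBound} with the finite output-entropy hypothesis and the definition \eqref{eq:constrainedSquashedeN}. Your added care in keeping the $B$-marginal (since $H(R)_\omega$ may be infinite) and in checking admissibility of the product extension in the infinite-dimensional setting is a welcome, but not essentially different, refinement of the paper's argument.
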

\begin{proof}
The statement is a consequence of
\eqref{mutualInfoBound}. Indeed,
applying the definition of squashed entanglement and picking the extension system $E$ to be trivial, we then get that 
\begin{equation}
E_{\sq}(A;B)_{\omega}\leq  \frac{1}{2} I(A;B)_{\omega}.
\label{ifQMIthenEsq}
\end{equation}
Applying Condition~\ref{finiteOutputEntropy} to \eqref{mutualInfoBound} and combining \eqref{ifQMIthenEsq} with the definition in \eqref{eq:constrainedSquashedeN} yields the statement of the lemma.
\end{proof}

\bigskip
We now establish the following weak-converse bound that applies to an arbitrary $(n,K,G,P,\varepsilon)$ energy-constrained secret-key-agreement protocol.

\begin{proposition}
\label{prop:EC-SE-upper-bnd}
Let $\mathcal{N}$ be a quantum channel satisfying the finite output-entropy condition (Condition~\ref{finiteOutputEntropy}), let $G$ be a Gibbs observable as in Definition~\ref{def:Gibbs-obs}, and let $P\in[0,\infty)$ be an energy constraint.
Fix $n,K\in \mathbb{N}$ and $\varepsilon \in (0,1)$. Then an 
$(n,K,G,P,\varepsilon)$ energy-constrained secret-key-agreement protocol for $\mathcal{N}$ is subject to the following upper bound in terms of the energy-constrained squashed entanglement of the channel $\mathcal{N}$:
\begin{equation}
\frac{1-2\sqrt{\varepsilon}}{n}
\log_2 K \leq 
E_{\operatorname{sq}}(\mathcal{N},G,P) + 
\frac{2}{n} g(\sqrt{\varepsilon}),
\end{equation}
where $g(\cdot)$ is defined in \eqref{eq:g-func}.
\end{proposition}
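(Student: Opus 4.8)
The plan is to derive the bound by combining the approximate private-state inequality \eqref{eq:appxpriv} with an amortization argument driven by the subadditivity lemma (Lemma~\ref{lemma:subadditivity-CQMI}) and the LOCC monotonicity of squashed entanglement. Throughout I would track $E_{\sq}(\cdot\,;\cdot)$ across the Alice--Bob cut, grouping all of Alice's systems (the primed system together with her shields) against all of Bob's. First I would use the fidelity criterion \eqref{fidelityCriterion}: the final state $\omega_{K_A S_A K_B S_B}$ is an $\varepsilon$-approximate bipartite private state of $\log_2 K$ bits, so \eqref{eq:appxpriv} rearranges to
\[
(1-2\sqrt{\varepsilon})\log_2 K \leq E_{\sq}(K_A S_A; K_B S_B)_{\omega} + 2g(\sqrt{\varepsilon}).
\]
Since $\omega$ is produced from $\sigma^{(n)}$ by the final LOCC channel $\mathcal{L}^{(n+1)}$ and $E_{\sq}$ is LOCC monotone, I may replace the entanglement term by $E_{\sq}(A'_n S_{A_1^n}; B_n B'_n S_{B_1^n})_{\sigma^{(n)}}$.

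The technical heart is a per-channel-use amortization inequality
\[
E_{\sq}(\mathrm{Alice};\mathrm{Bob})_{\sigma^{(i)}} \leq E_{\sq}(\mathrm{Alice};\mathrm{Bob})_{\rho^{(i)}} + E_{\sq}(\rho^{(i)}_{A_i}, \mathcal{N}_{A\to B}),
\]
where before the $i$th channel Alice additionally holds the input $A_i$. To prove it I would purify $\rho^{(i)}$, apply a near-optimal squashing isometry to its purifying system to obtain the systems $E'' F''$ of Lemma~\ref{lemma:subadditivity-CQMI}, and take that lemma's isometry $\mathcal{U}_{A\to BE'F'}$ to be the channel's isometric extension composed with an isometric squashing extension $\mathcal{V}^{\mathcal{S}}_{E\to E'F'}$ acting on the channel environment. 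The lemma's left side $\tfrac12 I(A';BB'|E'E'')_{\psi}$ upper-bounds $E_{\sq}(\mathrm{Alice};\mathrm{Bob})_{\sigma^{(i)}}$, since $E'E''$ is a valid squashing system for $\sigma^{(i)}$; its right side is $\tfrac12[H(B|E')_{\psi}+H(B|F')_{\psi}]$, whose infimum over $\mathcal{V}^{\mathcal{S}}$ equals $E_{\sq}(\rho^{(i)}_{A_i},\mathcal{N})$ by \eqref{altEsqN}, plus $\tfrac12 I(A'A;B'|E'')_{\phi}$, whose infimum over the before-squashing equals $E_{\sq}(\mathrm{Alice};\mathrm{Bob})_{\rho^{(i)}}$. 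Because the marginal on $BE'F'$ is fixed by $\rho^{(i)}_{A_i}$ and the squashing isometry alone, the two infima decouple and may be taken independently, giving the inequality.

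Combining this with the monotonicity $E_{\sq}(\rho^{(i)}) \leq E_{\sq}(\sigma^{(i-1)})$ under the interleaved LOCC rounds, and with the vanishing of $E_{\sq}$ on the separable initial state $\rho^{(1)}$, I would telescope to obtain $E_{\sq}(\sigma^{(n)}) \leq \sum_{i=1}^{n} E_{\sq}(\rho^{(i)}_{A_i},\mathcal{N})$. Writing $P_i := \Tr\{G\rho^{(i)}_{A_i}\}$, the definition of the energy-constrained channel measure gives $E_{\sq}(\rho^{(i)}_{A_i},\mathcal{N}) \leq E_{\sq}(\mathcal{N},G,P_i)$, while the average energy constraint \eqref{eq:energyConstraint} reads $\tfrac1n\sum_i P_i \leq P$. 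Using that $P \mapsto E_{\sq}(\mathcal{N},G,P)$ is non-decreasing and concave (the latter via a standard flagged-input mixing argument together with convexity of squashed entanglement), Jensen's inequality yields $\tfrac1n\sum_i E_{\sq}(\mathcal{N},G,P_i) \leq E_{\sq}(\mathcal{N},G,P)$. Assembling everything and dividing by $n$ produces the stated bound.

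I expect two points to require the most care. The first is ensuring the hypothesis $H(B)_{\psi}<\infty$ needed to apply Lemmas~\ref{lem:dualitylemma} and~\ref{lemma:subadditivity-CQMI} at each channel use: this follows from the finite output-entropy condition (Condition~\ref{finiteOutputEntropy}), but one must verify it at the energy levels $P_i$, which under an average constraint can be as large as $nP$ rather than $P$ itself. The second is the concavity of the channel's energy-constrained squashed entanglement in $P$, which justifies passing from the average constraint to the single value $P$; here the infinite-dimensional subtleties (lower semicontinuity and the projector limits already exploited in Lemma~\ref{lemma:subadditivity-CQMI}) reappear and must be controlled.
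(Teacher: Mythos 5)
Your proposal follows the paper's own skeleton almost exactly — the approximate-private-state bound \eqref{eq:appxpriv}, LOCC monotonicity across the Alice--Bob cut, the amortization step via Lemma~\ref{lemma:subadditivity-CQMI} with the decoupled infima over the channel-environment squashing isometry and the squashing of Eve's accumulated systems, the telescoping, and the vanishing of $E_{\sq}$ on the separable initial state are all the same moves the paper makes. Where you genuinely diverge is in the last step, i.e., in how the average energy constraint \eqref{eq:energyConstraint} is converted into the single quantity $E_{\sq}(\mathcal{N},G,P)$. You infimize over the squashing isometry $\mathcal{V}^{\mathcal{S}}$ separately in each round, obtaining $\sum_i E_{\sq}(\rho^{(i)}_{A_i},\mathcal{N}) \leq \sum_i E_{\sq}(\mathcal{N},G,P_i)$, and then invoke monotonicity and concavity of $P \mapsto E_{\sq}(\mathcal{N},G,P)$ plus Jensen. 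The paper instead keeps a \emph{single, common} isometry $\mathcal{V}^{\mathcal{S}}$ fixed across all $n$ rounds, so that after telescoping it has a sum of conditional entropies $\sum_i [H(B_i|E_i')_{\tau^{(i)}}+H(B_i|F_i)_{\tau^{(i)}}]$; concavity of conditional entropy then bounds this by $n$ times the same expression evaluated on the output of the \emph{average} input state $\overline{\rho}_A = \frac{1}{n}\sum_i \rho^{(i)}_{A_i}$, which satisfies $\Tr\{G\overline{\rho}_A\}\leq P$ directly, and only then is the infimum over $\mathcal{V}^{\mathcal{S}}$ taken and \eqref{altEsqN} applied. The paper's ordering is more economical precisely because it never needs any regularity property of $E_{\sq}(\mathcal{N},G,\cdot)$ as a function of $P$; the averaging is done at the level of states before any optimization.

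Your route does go through, but the justification you give for its one extra ingredient is off. Concavity of $E_{\sq}(\mathcal{N},G,P)$ in $P$ does \emph{not} follow from convexity of squashed entanglement — convexity gives the wrong inequality direction. What is actually needed is either (i) selective (flagged) LOCC monotonicity: Alice measures a classical flag distinguishing near-optimal inputs $\phi^1,\phi^2$ for constraints $P_1,P_2$, giving $E_{\sq}$ of the flagged input at constraint $\lambda P_1+(1-\lambda)P_2$ at least the average of the two values; or, more simply, (ii) concavity of $\rho_A \mapsto E_{\sq}(\rho_A,\mathcal{N})$ in the representation \eqref{altEsqN}, which holds because each fixed-isometry objective $\frac{1}{2}[H(B|E')+H(B|F)]$ is concave in $\rho_A$ and an infimum of concave functions is concave — notably the very same concavity-of-conditional-entropy fact the paper uses. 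With either fix your argument is complete. Your other flagged concern (finiteness of $H(B_i)$ when $P_i$ can be as large as $nP$) is real but resolvable by a standard argument: mixing $\rho^{(i)}_{A_i}$ with weight $1/(2n-1)$ into a low-energy state produces a state obeying the constraint $P$, and concavity of entropy then transfers finiteness of the output entropy back to $\rho^{(i)}_{A_i}$; the paper itself asserts this point without elaboration.
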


\begin{proof}
By assumption,
the final state $\omega_{K_A S_A K_B S_B}$ of any $(n,K,G,P,\varepsilon)$ secret-key-agreement protocol is an
$\varepsilon$-approximate bipartite private state, as given in \eqref{fidelityCriterion}.
Thus, the bound in \eqref{eq:appxpriv} applies, leading to  the following bound:
\begin{multline} 
	\log_2 K
	\leq E_{\sq}(K_{A}S_{A};K_{B}S_{B})_{\omega}\\
	+
	2\sqrt{\varepsilon} \log_2 K +2g(\sqrt{\varepsilon})
	 .\label{eq:continuityAndNormalized}
\end{multline} 

Let $\mathcal{U}^{\mathcal{N}}_{A\to BE}$
be an isometric channel extending the original channel ${\mathcal{N}}_{A\to B}$.
Let $V^{\mathcal{S}}_{E\to E^{\prime}F}$ denote an isometric extension of a squashing channel that can act on the environment system $E$ of the isometric channel 
$\mathcal{U}^{\mathcal{N}}_{A\to BE}$, and let $W^{n}_{E_1^{n-1} Y^n\to E^{\prime\prime}_{n}F^{\prime\prime}_{n}}$ denote an isometric extension of a squashing channel that can act on the systems $E^{n-1} Y^n$. Then we define the states
\begin{multline}
\vert \tau^{(n)} \rangle_{A^{\prime}_{n}B_{n}
S_{A_1^{n}}
B^{\prime}_{n} S_{B_1^{n}}
E^{\prime}_n F_n E_n^{\prime \prime}
F_n^{\prime \prime}} 
\equiv
\\
(V^{\mathcal{S}}_{E_n\to E_n^{\prime}F_n} \otimes W^{n}_{E_1^{n-1} Y^n\to E^{\prime\prime}_{n}F^{\prime\prime}_{n}}) \times \\ 
\vert \sigma ^{(n)}\rangle_
{A^{\prime}_{n}B_{n}
S_{A_1^{n}}
B^{\prime}_{n} S_{B_1^{n}}
E_1^{n} Y_1^{n}} ,
\end{multline}
and
\begin{multline}
\vert \zeta^{(n)} \rangle_{A^{\prime}_{n}A_{n}
S_{A_1^{n}}
B^{\prime}_{n} S_{B_1^{n}}
 E_n^{\prime \prime}
F_n^{\prime \prime}} 
\equiv
 W^{n}_{E_1^{n-1} Y^n\to E^{\prime\prime}_{n}F^{\prime\prime}_{n}} \times \\ 
\vert \rho ^{(n)}\rangle_
{A^{\prime}_{n}A_{n}
S_{A_1^{n}}
B^{\prime}_{n} S_{B_1^{n}}
E_1^{n-1} Y_1^{n}}.
\end{multline}
We invoke the LOCC monotonicity of squashed entanglement and the definition of squashed entanglement from \eqref{eq:squashede}, as well as Lemma~\ref{lemma:subadditivity-CQMI}, to find that 
\begin{align}
	& 2 E_{\sq}(K_{A}S_{A};K_{B}S_{B})_{\omega}
	\nonumber \\
	& \leq 2 E_{\sq}(A^{\prime}_{n}
	S_{A_1^n};B_{n}S_{B_1^n}B^{\prime}_{n})_{\sigma^{(n)}}\\
	&\leq I(A^{\prime}_{n}
	S_{A_1^n};B_{n}B^{\prime}_{n}
	S_{B_1^n}|E^{\prime}_{n}E^{\prime\prime}_{n})_{\tau^{(n)}}\\
	&\leq H(B_{n}|E^{\prime}_{n})_{\tau^{(n)}}+H(B_{n}|F_{n})_{\tau^{(n)}} \nonumber \\
	& \qquad 
	+ I(A^{\prime}_{n}
	S_{A_1^n} A_n;B^{\prime}_{n}
	S_{B_1^n}|E^{\prime\prime}_{n})_{\zeta^{(n)}}.
	\label{squashedeMonConAltConEntDual}
\end{align}
The conditions needed to apply Lemma~\ref{lemma:subadditivity-CQMI} indeed hold, following by hypothesis from \eqref{eq:energyConstraint}
and the finite output-entropy condition.
Since the isometric extension  $W^{n}_{E^{n-1} Y^n\to E^{\prime\prime}_{n}F^{\prime\prime}_{n}}$ of a 
squashing channel
is an arbitrary choice, the inequality above holds for the infimum over all such squashing channel extensions, and we find that
\begin{multline}
E_{\sq}(A^{\prime}_{n}
	S_{A_1^n};B_{n}S_{B_1^n}B^{\prime}_{n})_{\sigma^{(n)}} \leq \\
	\frac{1}{2} [H(B_{n}|E^{\prime}_{n})_{\tau^{(n)}}+H(B_{n}|F_{n})_{\tau^{(n)}}]  \\
	 \qquad 
	+ E_{\sq}(A^{\prime}_{n}
	S_{A_1^n} A_n;B^{\prime}_{n}
	S_{B_1^n})_{\rho^{(n)}}.
\end{multline}
We can then again invoke the LOCC monotonicity of squashed entanglement to find that
\begin{multline}
E_{\sq}(A^{\prime}_{n}
	S_{A_1^n} A_n;B^{\prime}_{n}
	S_{B_1^n})_{\rho^{(n)}} \leq \\
E_{\sq}(A^{\prime}_{n-1}
	S_{A_1^{n-1}} ;B_{n-1} B^{\prime}_{n-1}
	S_{B_1^{n-1}})_{\sigma^{(n-1)}}.	
\end{multline}
Now repeating the above reasoning $n-1$ more times (applying Lemma~\ref{lemma:subadditivity-CQMI} and LOCC monotonicity of squashed entanglement iteratively), we find that
\begin{align}
& 2 E_{\sq}(K_{A}S_{A};K_{B}S_{B})_{\omega} \nonumber \\
&\leq \sum_{i=1}^{n} [H(B_{i}|E^{\prime}_{i})_{\tau^{(i)}}+H(B_{i}|F_{i})_{\tau^{(i)}}] \nonumber \\
& \qquad	+ 2 E_{\sq}(A^{\prime}_{1}A_{1};B^{\prime}_{1})_{\rho^{(1)}} \\
&= \sum_{i=1}^{n} [H(B_{i}|E^{\prime}_{i})_{\tau^{(i)}}+H(B_{i}|F_{i})_{\tau^{(i)}}] \\
&= n \frac{1}{n} \sum_{i=1}^{n} [H(B_{i}|E^{\prime}_{i})_{\tau^{(i)}}+H(B_{i}|F_{i})_{\tau^{(i)}}] \\
& \leq n[H(B|E^{\prime})_{\overline{\tau}}+H(B|F)_{\overline{\tau}}].
\end{align}
The first equality follows because the state $\rho^{(1)}_{A^{\prime}_{1}A_{1}B^{\prime}_{1}}$ is separable, being the result of the initial LOCC, and so
$E_{\sq}(A^{\prime}_{1}A_{1};B^{\prime}_{1})_{\rho^{(1)}}=0$.
Note here that we are invoking the assumption that the protocol begins with a countably decomposable separable state \cite[Definition~1]{maks} and the fact that $E_{\sq} = 0$ for such states
\cite[Proposition~2]{maks}.
The last inequality follows from the
concavity of conditional entropy \cite{K11}, defining $\overline{\tau}_{B E' F}$ as the average output state of the channel:
\begin{equation}
\overline{\tau}_{B E' F} \equiv \frac{1}{n}
\sum_{i=1}^n \mathcal{V}^{\mathcal{S}}_{E_i \to E_i' F_i}(\sigma^{(i)}_{B_i E_i}).
\end{equation}
Since the inequality above holds for an arbitrary choice of the isometric channel 
$\mathcal{V}^{\mathcal{S}}_{E \to E' F}$ extending a squashing channel, and the average channel input state for the protocol satisfies the energy constraint in \eqref{eq:energyConstraint} by assumption, we find that
\begin{align}
&  E_{\sq}(K_{A}S_{A};K_{B}S_{B})_{\omega} \nonumber \\
& \leq 
 n \inf_{V^{S}_{E\to E^{\prime}F}}
 \frac{1}{2}
 [H(B|E^{\prime})_{\overline{\tau}}+H(B|F)_{\overline{\tau}}] \\
& \leq n  E_{\sq}(\mathcal{N},G,P),
\label{eq:almost-done-se-bnd}
\end{align}
where we have employed the alternative representation of squashed entanglement from \eqref{altEsqN}. Now combining \eqref{eq:continuityAndNormalized} and \eqref{eq:almost-done-se-bnd}, we conclude the proof.
\end{proof}

By applying Proposition~\ref{prop:EC-SE-upper-bnd} and taking the limit as $n \to \infty$ and then as $\varepsilon \to 0$, we arrive at the following theorem:

\begin{theorem}
\label{thm:EC-SE-upper-bnd}
Let $\mathcal{N}$ be a quantum channel satisfying the finite output-entropy condition (Condition~\ref{finiteOutputEntropy}), let $G$ be a Gibbs observable as in Definition~\ref{def:Gibbs-obs}, and let $P\in[0,\infty)$ be an energy constraint.
Then the energy-constrained squashed entanglement of the channel $\mathcal{N}$ is an upper bound on its energy-constrained secret-key-agreement capacity:
\begin{equation} P_{2}(\mathcal{N},G,P)\leq E_{\operatorname{sq}}(\mathcal{N},G,P). \label{EsqUpperBound} \end{equation}
\end{theorem}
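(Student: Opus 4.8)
The plan is to obtain Theorem~\ref{thm:EC-SE-upper-bnd} as a direct limiting consequence of the single-protocol bound in Proposition~\ref{prop:EC-SE-upper-bnd}, since that proposition already carries all of the analytic weight. First I would fix an arbitrary achievable rate $R$ for the channel $\mathcal{N}$ under the constraint $(G,P)$. By the definition of achievability from Section~\ref{sec:energy-constr-cap}, for every $\varepsilon \in (0,1)$ and $\delta > 0$ there exists, for all sufficiently large $n$, an $(n, 2^{n(R-\delta)}, G, P, \varepsilon)$ secret-key-agreement protocol, so that $\log_2 K = n(R-\delta)$.

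Next I would substitute this choice of $K$ into the bound of Proposition~\ref{prop:EC-SE-upper-bnd}, after which the factor of $n$ cancels and one is left with
\begin{equation}
(1 - 2\sqrt{\varepsilon})(R - \delta) \leq E_{\sq}(\mathcal{N}, G, P) + \frac{2}{n} g(\sqrt{\varepsilon}).
\end{equation}
The key observation is that $g(\sqrt{\varepsilon})$ is a finite constant for each fixed $\varepsilon \in (0,1)$, so the rightmost term vanishes as $n \to \infty$. Taking that limit leaves
\begin{equation}
(1 - 2\sqrt{\varepsilon})(R - \delta) \leq E_{\sq}(\mathcal{N}, G, P),
\end{equation}
which holds for every $\varepsilon \in (0,1)$ and every $\delta > 0$.

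I would then let $\delta \to 0$ and $\varepsilon \to 0$; since $1 - 2\sqrt{\varepsilon} \to 1$, this yields $R \leq E_{\sq}(\mathcal{N}, G, P)$. Because $R$ was an arbitrary achievable rate and $P_{2}(\mathcal{N}, G, P)$ is by definition the supremum of all achievable rates, taking the supremum over $R$ gives the claimed inequality~\eqref{EsqUpperBound}.

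I do not anticipate a genuine obstacle here, as the argument is a routine weak-converse limiting procedure; the real content lies in Proposition~\ref{prop:EC-SE-upper-bnd} and, through it, in the subadditivity Lemma~\ref{lemma:subadditivity-CQMI} together with the approximate-private-state normalization~\eqref{eq:appxpriv}. The only points demanding minor care are that the two limits be taken in the stated order ($n \to \infty$ before $\varepsilon \to 0$), so that the $\frac{2}{n} g(\sqrt{\varepsilon})$ term is eliminated while $\varepsilon$ is still held fixed, and that the bound remains valid (trivially) in the degenerate case $E_{\sq}(\mathcal{N}, G, P) = +\infty$, which is in any event excluded under the finite output-entropy hypothesis by Lemma~\ref{lem:finiteHimpliesfiniteEsq}.
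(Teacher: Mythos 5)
Your proposal is correct and is essentially identical to the paper's own proof: the paper likewise obtains Theorem~\ref{thm:EC-SE-upper-bnd} by applying Proposition~\ref{prop:EC-SE-upper-bnd} and taking the limit $n \to \infty$ followed by $\varepsilon \to 0$ (and your write-up simply makes explicit the routine substitution $\log_2 K = n(R-\delta)$, the order of limits, and the final supremum over achievable rates, which the paper leaves implicit).
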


Immediate consequences of Proposition~\ref{prop:EC-SE-upper-bnd}
and Theorem~\ref{thm:EC-SE-upper-bnd}
are bounds for rates of LOCC-assisted quantum communication. Indeed, 
let $\mathcal{N}$ be a quantum channel satisfying the finite output-entropy condition (Condition~\ref{finiteOutputEntropy}), let $G$ be a Gibbs observable as in Definition~\ref{def:Gibbs-obs}, and let $P\in[0,\infty)$ be an energy constraint.
Fix $n,K\in \mathbb{N}$ and $\varepsilon \in (0,1)$. Then an 
$(n,K,G,P,\varepsilon)$ energy-constrained LOCC-assisted quantum communication protocol for
$\mathcal{N}$
is subject to the following upper bound in terms of its energy-constrained squashed entanglement:
\begin{equation}
\frac{1-2\sqrt{\varepsilon}}{n}
\log_2 K \leq 
E_{\operatorname{sq}}(\mathcal{N},G,P) + 
\frac{2}{n} g(\sqrt{\varepsilon}).
\end{equation}
Then this implies that
\begin{equation} Q_{2}(\mathcal{N},G,P)\leq E_{\operatorname{sq}}(\mathcal{N},G,P). \label{qfinalBound} \end{equation}


\section{Bounds on Energy-Constrained Secret-Key-Agreement Capacities of Phase-Insensitive Quantum Gaussian Channels} 
\label{s:realizing}

The main result of Section~\ref{s:EsqBound} is that the energy-constrained squashed entanglement is an upper bound on the energy-constrained secret-key-agreement capacity of quantum channels that satisfy the finite output-entropy condition with respect to a given Gibbs observable. In this section, we specialize this result to particular phase-insensitive bosonic Gaussian channels that accept as input a single mode and output multiple modes. We prove here that a relaxation of the energy-constrained squashed entanglement of these channels is optimized by a thermal state input (when the squashed entanglement is written with respect to the representation in \eqref{altEsqN}).
Our results in this section thus generalize statements from prior works in \cite{tgw,tgwB,goodEW16}.

We also note the following point here before proceeding with the technical development. The prior works \cite{tgw,tgwB,goodEW16} argued that a thermal-state input should be the optimal choice for a particular relaxation of the energy-constrained squashed entanglement. However, it appears that these works have not given a full justification of these claims. In particular,  \cite{tgw,tgwB} appealed only to the extremality of Gaussian states \cite{WGC06} to argue that a thermal state should be optimal. However, it is necessary to argue that, among all Gaussian states, the thermal state is optimal. In \cite{goodEW16}, arguments about covariance of single-mode phase-insensitive Gaussian channels with respect to displacements and squeezing unitaries were given, but there was not an explicit proof of the latter covariance with respect to the squeezers, and furthermore, the squeezing unitaries can change the energy of the input state. Thus, in light of these questionable aspects, it seems worthwhile to provide a clear proof of the optimality of the thermal-state input, and our development in this section accomplishes this goal. The approach taken here is strongly related to that given in Section~5.2 and Remark~21 of \cite{SWAT17}.

\subsection{Single-Mode, Phase-Insensitive Bosonic Gaussian Channels
and Their Properties}

We begin in what follows by considering the argument for the particular case of phase-insensitive single-mode bosonic Gaussian channels. Three classes of channels of primary interest are thermal, amplifier, and additive-noise channels.

A thermal channel can be described succinctly in terms of the following Heisenberg-picture evolution:
	\begin{equation}
	\hat{b} =\sqrt{\eta}\hat{a}+\sqrt{1-\eta}\hat{e}  \label{eq:PLHeis-reduced},
	\end{equation} 
where $\hat{a}$, $\hat{b}$, and $\hat{e}$ represent respective bosonic annihilation operators for the sender, receiver, and environment. The parameter $\eta \in (0,1)$ represents the transmissivity of the channel, and the state of the environment is a bosonic thermal state
$\theta(N_B)$
of the following form:
\begin{equation}
\theta(N_B) \equiv \frac{1}{N_B+1}
\sum_{n=0}^\infty
\left(\frac{N_B}{N_B+1}\right)^n \vert n \rangle \langle n \vert,
\end{equation}
where $N_B \geq 0$ is the mean photon number of the above thermal state. So a thermal channel is characterized by two parameters: $\eta \in (0,1)$ and $N_B \geq 0$. If $N_B = 0$, then the channel is called a pure-loss channel because
 the environment state is prepared in a vacuum state and
the only corruption of the input signal is due to loss. An alternate description of a thermal channel in terms of its Kraus operators is available in \cite{ISS11}, and in what follows, we denote it by $\mathcal{L}_{\eta,N_B}$.

It is helpful to consider a unitary extension of a thermal channel. That is, we can consider a thermal channel arising as the result of a beamsplitter interaction between the input mode and the thermal-state environment mode, followed by a partial trace over the output environment mode. We can represent this interaction in the Heisenberg picture as follows:
\begin{align}
	\hat{b}&=\sqrt{\eta}\hat{a}+\sqrt{1-\eta}\hat{e}  \nonumber , \\
	\hat{e}^{\prime}&= - \sqrt{1-\eta}\hat{a}+\sqrt{\eta}\hat{e},
	\label{eq:PLHeis}
	\end{align}
where $\hat{e}^{\prime}$ denotes the output environment mode. Let $U^{\mathcal{L}_{\eta,N_B}}$ denote the Schr\"odinger-picture, two-mode unitary describing this interaction. It is well known that this unitary obeys the following phase covariance symmetry for all $\phi \in \mathbb{R}$:
\begin{equation}
	U^{\mathcal{L}_{\eta,N_B}}
	e^{i\hat{n}_{AE}\phi}=
	e^{i\hat{n}_{BE'}\phi} U^{\mathcal{L}_{\eta,N_B}} 
	,
	\end{equation}
where $\hat{n}_{AE} = \hat{n}_{A}+ \hat{n}_{E}$ is the total photon number operator for the input mode $A$ and environment mode $E$, while $\hat{n}_{BE'} = \hat{n}_{B}+ \hat{n}_{E'}$ is that for the output mode $B$ and the output environment mode $E'$. Thus, we can equivalently write the above phase covariance symmetry as
\begin{equation}
	U^{\mathcal{L}_{\eta,N_B}}
	(e^{i\hat{n}_{A}\phi} \otimes 
	e^{i\hat{n}_{E}\phi}) =
	(e^{i\hat{n}_{B}\phi} \otimes
	e^{i\hat{n}_{E'}\phi} ) U^{\mathcal{L}_{\eta,N_B}} \label{eq:phasePL}.
	\end{equation}
Due to this relation, the fact that a thermal state is phase invariant (i.e.,
$e^{i\hat{n}_{E}\phi} \theta(N_B)
e^{-i\hat{n}_{E}\phi} = \theta(N_B)$), and the fact that the thermal channel results from a partial trace after the unitary transformation $U^{\mathcal{L}_{\eta,N_B}}$, it follows that the thermal channel is phase covariant in the following sense:
\begin{equation}
\mathcal{L}_{\eta,N_B}(
e^{i\hat{n}_{A}\phi} \rho_A e^{-i\hat{n}_{A}\phi}) = 
e^{i\hat{n}_{B}\phi} \mathcal{L}_{\eta,N_B}(
 \rho_A ) e^{-i\hat{n}_{B}\phi},
\end{equation}
where $\rho_A$ is an arbitrary input state. This is the reason that thermal channels are called phase-insensitive.

Another class of channels to consider is the class of amplifier channels. 
An amplifier  channel can also be described succinctly in terms of the following Heisenberg-picture evolution:
	\begin{equation}
	\hat{b}=\sqrt{\mathscr{G}}\hat{a}+\sqrt{\mathscr{G}-1}\hat{e}^{\dagger} 	
	\label{eq:TMSHeis-reduced}
,
	\end{equation} 
where $\hat{a}$, $\hat{b}$, and $\hat{e}$ again represent respective bosonic annihilation operators for the sender, receiver, and environment. The parameter $\mathscr{G} \in (1,\infty)$ represents the gain of the channel, and the state of the environment is a bosonic thermal state
$\theta(N_B)$ with $N_B \geq 0$.
So an amplifier channel is characterized by two parameters: $\mathscr{G} \in (1,\infty)$ and $N_B \geq 0$. If $N_B = 0$, then the channel is called a pure-amplifier channel because the environment state is prepared in a vacuum state and the only corruption of the input signal is due to amplification, which inevitably introduces noise due to the no-cloning theorem \cite{Park1970,nat1982}. An alternate description of an amplifier channel in terms of its Kraus operators is available in \cite{ISS11}, and in what follows, we denote it by $\mathcal{A}_{\mathscr{G},N_B}$.

It is again helpful to consider a unitary extension of an amplifier channel. That is, we can consider an amplifier channel arising as the result of a two-mode squeezer interaction between the input mode and the thermal-state environment mode, followed by a partial trace over the output environment mode. We can represent this interaction in the Heisenberg picture as follows:
\begin{align}
	\hat{b}&=\sqrt{\mathscr{G}}\hat{a}+\sqrt{\mathscr{G}-1}\hat{e}^{\dagger} ,
	\nonumber \\
	\hat{e}^{\prime}&=\sqrt{\mathscr{G}-1}\hat{a}^{\dagger}+\sqrt{\mathscr{G}}\hat{e},\label{eq:TMSHeis}
	\end{align}
where $\hat{e}^{\prime}$ denotes the output environment mode. Let $U^{\mathcal{A}_{\mathscr{G},N_B}}$ denote the Schr\"odinger-picture, two-mode unitary describing this interaction. It is well known that this unitary obeys the following phase covariance symmetry for all $\phi \in \mathbb{R}$
\begin{equation}
	U^{\mathcal{A}_{\mathscr{G},N_B}}
	(e^{i\hat{n}_{A}\phi} \otimes 
	e^{-i\hat{n}_{E}\phi}) =
	(e^{i\hat{n}_{B}\phi} \otimes
	e^{-i\hat{n}_{E'}\phi} ) U^{\mathcal{A}_{\mathscr{G},N_B}} \label{eq:phaseTMS}.
	\end{equation}
Due to this relation, the fact that a thermal state is phase invariant, and the fact that the amplifier channel results from a partial trace of the unitary transformation $U^{\mathcal{A}_{\mathscr{G},N_B}}$, it follows that the amplifier channel is phase covariant in the following sense:
\begin{equation}
\mathcal{A}_{\mathscr{G},N_B}(
e^{i\hat{n}_{A}\phi} \rho_A e^{-i\hat{n}_{A}\phi}) = 
e^{i\hat{n}_{B}\phi} \mathcal{A}_{\mathscr{G},N_B}(
 \rho_A ) e^{-i\hat{n}_{B}\phi},
\end{equation}
where $\rho_A$ is an arbitrary input state. So amplifier channels  are also called phase-insensitive.

Another class of single-mode, phase-insensitive bosonic Gaussian channels are called additive-noise channels. These channels are easily described in the Schr\"odinger picture and are characterized by a single parameter $\xi \geq 0$, which is the variance of the channel. Additive-noise channels can be written as the following transformation:
\begin{equation}
\rho_A \to \int d^2 \alpha \  
\frac{\exp(-|\alpha|^2/\xi)}{\pi \xi} 
D(\alpha) \rho_A D(-\alpha),
\end{equation}
and can be interpreted as applying a unitary displacement operator $D(\alpha)$ randomly chosen according to a complex, isotropic  Gaussian distribution 
$\frac{\exp(-|\alpha|^2/\xi)}{\pi \xi} $ of variance $\xi$. These channels are phase-covariant as well and are thus phase-insensitive.

A well known theorem from \cite{CGH06,PhysRevLett.108.110505} establishes that any single-mode, phase-insensitive bosonic Gaussian channel $\mathcal{N}$ can be written as the serial concatenation of a pure-loss channel $\mathcal{L}_{T,0}$ of transmissivity $T \in [0,1]$ followed by a pure-amplifier channel $\mathcal{A}_{\mathscr{G},0}$ of gain $\mathscr{G}>1$:
\begin{equation}
\mathcal{N} = \mathcal{A}_{\mathscr{G},0} \circ \mathcal{L}_{T,0}.
\label{eq:phase-ins-decomp}
\end{equation}
This theorem has been extremely helpful in obtaining good upper bounds on various capacities of single-mode, phase-insensitive bosonic Gaussian channels \cite{KS13,tgw,tgwB,BW14,BPWW14,goodEW16,SWAT17,NAJ18}.

\subsection{Bounds for Single-Mode, Phase-Insensitive Bosonic Gaussian Channels}

In the following theorem, we prove that a thermal input state is the optimal state for a relaxation of the energy-constrained squashed entanglement of a single-mode, phase-insensitive bosonic Gaussian channel. This in turn gives an upper bound on the energy-constrained secret-key-agreement capacities of these channels, which has already been claimed in \cite{tgw,tgwB,goodEW16}.

\begin{theorem}
\label{thm:sq-ent-ph-ins-bound}
Let $\mathcal{N}$ be a single-mode,  phase-insensitive bosonic Gaussian channel as in \eqref{eq:phase-ins-decomp}. Then its energy-constrained squashed entanglement is bounded as
\begin{equation}
E_{\sq}(\mathcal{N},\hat{n},N_S)
\leq 
\frac{1}{2} [ H(B|E_1'E_2')_{\omega} +
H(B|F_1'F_2')_{\omega}
] , \label{eq:sq-ent-upper-bnd-ph-ins}
\end{equation}
where $\hat{n}$ is the photon number operator acting on the channel input mode, $N_S\geq 0$ is an energy constraint, $\omega_{BE_1'E_2'F_1'F_2'}$ is the following state:
\begin{equation}
\omega_{BE_1'E_2'F_1'F_2'} =
\mathcal{W}_{A\to BE^{\prime}_{1}E^{\prime}_{2}F^{\prime}_{1}F^{\prime}_{2}} 
(\theta(N_S)),
\end{equation}
and $\mathcal{W}$ is an isometric channel of the form
\begin{multline}
	\mathcal{W}_{A\to BE^{\prime}_{1}E^{\prime}_{2}F^{\prime}_{1}F^{\prime}_{2}}
	= 
	(\mathcal{V}^{\mathcal{A}}_{E_{2}\to E^{\prime}_{2}F^{\prime}_{2}}\circ 
	\mathcal{U}^{\mathcal{A}_{\mathscr{G},0}}_{B_{1}\to BE_{2}})
	\\
	\circ
	(\mathcal{V}^{\mathcal{L}}_{E_{1}\to E^{\prime}_{1}F^{\prime}_{1}}\circ \mathcal{U}^{\mathcal{L}_{T,0}}_{A\to B_{1}E_{1}}).
\end{multline}
In the above, $\mathcal{U}^{\mathcal{L}_{T,0}}$ is an isometric channel extending the pure-loss channel $\mathcal{L}_{T,0}$ and realized from \eqref{eq:PLHeis}.
Also, $\mathcal{U}^{\mathcal{A}_{\mathscr{G},0}}$ is an isometric channel extending the pure-amplifier channel $\mathcal{A}_{\mathscr{G},0}$ and realized from \eqref{eq:TMSHeis}.
Both $\mathcal{V}^{\mathcal{L}}_{E_{1}\to E^{\prime}_{1}F^{\prime}_{1}}$ and
$ \mathcal{V}^{\mathcal{A}}_{E_{2}\to E^{\prime}_{2}F^{\prime}_{2}}$ are bosonic Gaussian isometric channels that are phase covariant.
Figure~\ref{fig:4bs} depicts an
 example of the isometric channel $\mathcal{W}_{A\to BE^{\prime}_{1}E^{\prime}_{2}F^{\prime}_{1}F^{\prime}_{2}}$.

\end{theorem}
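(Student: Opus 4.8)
The plan is to combine the alternative channel representation in \eqref{altEsqN} with an energy-preserving symmetrization. The key point to exploit is that the explicit isometric channel $\mathcal{W}_{A\to BE_1'E_2'F_1'F_2'}$ in the statement is a particular (Gaussian, phase-covariant) choice of squashing-channel extension for $\mathcal{N}=\mathcal{A}_{\mathscr{G},0}\circ\mathcal{L}_{T,0}$: the environment of $\mathcal{N}$ consists of the two modes $E_1$ and $E_2$ generated by the loss and amplifier isometries, and $\mathcal{V}^{\mathcal{L}}_{E_1\to E_1'F_1'}$ together with $\mathcal{V}^{\mathcal{A}}_{E_2\to E_2'F_2'}$ squash $E_1E_2$ into $E'=E_1'E_2'$ with purifying output $F'=F_1'F_2'$. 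Since the infimum defining $E_{\sq}(\rho_A,\mathcal{N}_{A\to B})$ is over all such extensions, fixing this one yields, for every input $\rho_A$, the relaxation $E_{\sq}(\rho_A,\mathcal{N}_{A\to B})\le F(\rho_A)$, where I write $\omega=\mathcal{W}(\rho_A)$ and $F(\rho_A):=\tfrac12[H(B|E_1'E_2')_{\omega}+H(B|F_1'F_2')_{\omega}]$. Taking the supremum over energy-constrained inputs, it then suffices to show that $F$ is maximized, over all $\rho_A$ with $\Tr\{\hat{n}\rho_A\}\le N_S$, by the thermal state $\theta(N_S)$.

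The second step is to reduce to phase-invariant inputs without changing the energy. First, $\rho_A\mapsto\mathcal{W}(\rho_A)$ is linear and conditional entropy is concave \cite{K11}, so $F$ is concave. Next, the phase covariance of the loss and amplifier isometries in \eqref{eq:phasePL} and \eqref{eq:phaseTMS}, together with the stipulated phase covariance of $\mathcal{V}^{\mathcal{L}}$ and $\mathcal{V}^{\mathcal{A}}$, shows that a rotation $e^{i\hat{n}\phi}$ applied to the input induces local phase rotations on each output mode $B,E_1',E_2',F_1',F_2'$. Since conditional entropy is invariant under local unitaries, $F(e^{i\hat{n}\phi}\rho_A e^{-i\hat{n}\phi})=F(\rho_A)$ for all $\phi$. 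Averaging $\phi$ over $[0,2\pi)$ and using concavity gives $F(\overline{\rho}_A)\ge F(\rho_A)$, where $\overline{\rho}_A=\frac{1}{2\pi}\int_0^{2\pi}d\phi\,e^{i\hat{n}\phi}\rho_A e^{-i\hat{n}\phi}$ is diagonal in the number basis and has the same mean photon number as $\rho_A$. Crucially, this symmetrization is energy-preserving, which is exactly the feature missing from the squeezing-based covariance argument criticized in the text preceding the statement.

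Third, I would invoke the extremality of Gaussian states \cite{WGC06}. The structural input is that $F$ is subadditive: applying the subadditivity of conditional entropy \eqref{subadditivityOfConditionalEntropy} to a two-mode input and its marginals gives $F(\rho_{A_1A_2})\le F(\rho_{A_1})+F(\rho_{A_2})$, while $F$ is additive on product inputs and invariant under local phase rotations. Under these properties the extremality principle yields that, among all inputs with a fixed second-moment matrix, the Gaussian state maximizes $F$. Because the dephased state $\overline{\rho}_A$ is phase-invariant, its covariance matrix is isotropic, so the Gaussian state sharing it is exactly a thermal state $\theta(N)$ with $N=\Tr\{\hat{n}\overline{\rho}_A\}=\Tr\{\hat{n}\rho_A\}\le N_S$; this is the point that closes the gap in prior work, since dephasing forces the matching Gaussian state to be \emph{thermal} (rather than a general, possibly squeezed, Gaussian state) while keeping the energy fixed. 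Hence $F(\rho_A)\le F(\overline{\rho}_A)\le F(\theta(N))$. A final one-parameter monotonicity — that $N\mapsto F(\theta(N))$ is nondecreasing, verified directly from the symplectic spectra of the Gaussian state $\mathcal{W}(\theta(N))$ — upgrades this to $F(\theta(N))\le F(\theta(N_S))$, giving the chain $E_{\sq}(\mathcal{N},\hat{n},N_S)\le F(\theta(N_S))$ asserted in \eqref{eq:sq-ent-upper-bnd-ph-ins}. The approach parallels Section~5.2 and Remark~21 of \cite{SWAT17}.

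The main obstacle is the rigorous application of Gaussian extremality in this infinite-dimensional, energy-constrained setting: the central-limit/mixing argument underlying \cite{WGC06} requires continuity and finiteness of the relevant conditional entropies along the approximating sequence, which must be secured from the finite-output-entropy condition (Condition~\ref{finiteOutputEntropy}) and the lower semicontinuity results recalled earlier, and one must confirm that subadditivity together with the invariance of $F$ are enough to run the \emph{maximization} rather than minimization direction of the principle. A secondary technical point is justifying the interchange of the phase average with the (lower semicontinuous) conditional entropy in the dephasing step, which again rests on the finiteness guaranteed by the energy constraint.
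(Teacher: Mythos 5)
Your proposal is correct in substance and follows the same overall strategy as the paper's proof: fix the squashing isometry to obtain the relaxed objective $F$, dephase the input using phase covariance together with concavity and local-unitary invariance of conditional entropy, invoke Gaussian extremality, and finish with monotonicity of $N\mapsto F(\theta(N))$. Two points of comparison are worth recording. First, you run the dephasing step \emph{before} extremality, so that the Gaussian state matching the first and second moments of $\overline{\rho}_A$ is automatically a thermal state of the same mean photon number; the paper instead invokes extremality first and then identifies the optimizer as the unique state that is simultaneously Gaussian, phase-invariant, and of mean photon number $N_S$. Your ordering is a legitimate, arguably tidier, variant, and your explicit justification of extremality via subadditivity of the objective (from \eqref{subadditivityOfConditionalEntropy}, plus additivity on products) is a useful detail that the paper leaves implicit. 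Second, and this is the only place where your route genuinely diverges: for the final step $F(\theta(N))\le F(\theta(N_S))$ when $N\le N_S$, you defer to an unperformed ``direct verification from symplectic spectra.'' The paper instead gives a short structural argument: by displacement covariance, local displacements on the output modes equal a displacement on the input (cf.~\eqref{eq:bigDisplacement}); conditional entropy is invariant under these local unitaries; and averaging over a complex isotropic Gaussian of variance $N_2$ converts $\theta(N_1)$ into $\theta(N_1+N_2)$ (cf.~\eqref{eq:thermalStateSum}), so concavity yields $F(\theta(N_1))\le F(\theta(N_1+N_2))$. This avoids what would be a genuinely messy analytic task (the symplectic eigenvalues of the relevant three-mode covariance matrices are roots of cubics in the squared eigenvalue, as functions of $N$, $T$, and $\mathscr{G}$), and it is precisely what makes the argument portable to the general covariant setting of Remark~\ref{rem:gen-sit-ph-ins-cond-ent} and to the improved bound of Proposition~\ref{prop:new-finite-en-bound}. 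To make your proof complete, either carry out that computation or, better, substitute the displacement-covariance argument for it.
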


An immediate consequence of Theorems~\ref{thm:EC-SE-upper-bnd} and \ref{thm:sq-ent-ph-ins-bound} is the following corollary:
\begin{corollary}
\label{cor:p-cap-bnd-ph-ins}
With the same notation as in Theorem~\ref{thm:sq-ent-ph-ins-bound}, the energy-constrained secret-key-agreement capacity of the channel $\mathcal{N}$ is bounded as
\begin{equation}
P_{2}(\mathcal{N},\hat{n},N_{S})
\leq \frac{1}{2} [ H(B|E_1'E_2')_{\omega} +
H(B|F_1'F_2')_{\omega}
] .
\end{equation}
\end{corollary}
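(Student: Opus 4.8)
The plan is to obtain the bound by simply chaining the two upper bounds already in hand, since the right-hand side of the corollary is exactly the bound that Theorem~\ref{thm:sq-ent-ph-ins-bound} places on the energy-constrained squashed entanglement, while Theorem~\ref{thm:EC-SE-upper-bnd} bounds $P_2$ by that same squashed entanglement. Concretely, I would identify $G = \hat{n}$ and $P = N_S$ and then verify that the hypotheses of Theorem~\ref{thm:EC-SE-upper-bnd} are met for the channel $\mathcal{N}$ in \eqref{eq:phase-ins-decomp}, after which the two inequalities compose directly.

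First I would check that the photon-number operator $\hat{n}$ is a Gibbs observable in the sense of Definition~\ref{def:Gibbs-obs}: for every $\beta > 0$ one has $\Tr\{e^{-\beta\hat{n}}\} = \sum_{k=0}^{\infty} e^{-\beta k} = (1 - e^{-\beta})^{-1} < \infty$, so \eqref{gibbsObservable} holds. I would then confirm the finite output-entropy condition (Condition~\ref{finiteOutputEntropy}) for $\mathcal{N}$ with respect to $\hat{n}$ and $N_S$. The governing fact is that a single-mode phase-insensitive Gaussian channel sends any input of finite mean photon number to an output of finite mean photon number, which can be made explicit through the loss--amplifier decomposition $\mathcal{N} = \mathcal{A}_{\mathscr{G},0} \circ \mathcal{L}_{T,0}$; since the thermal state maximizes entropy at fixed mean photon number, the supremum $\sup_{\rho:\Tr\{\hat{n}\rho\}\leq N_S} H(\mathcal{N}(\rho))$ is finite.

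With both hypotheses verified, Theorem~\ref{thm:EC-SE-upper-bnd} yields $P_2(\mathcal{N},\hat{n},N_S) \leq E_{\sq}(\mathcal{N},\hat{n},N_S)$, and Theorem~\ref{thm:sq-ent-ph-ins-bound} yields $E_{\sq}(\mathcal{N},\hat{n},N_S) \leq \frac{1}{2}[H(B|E_1'E_2')_{\omega} + H(B|F_1'F_2')_{\omega}]$ for the state $\omega$ and isometric channel $\mathcal{W}$ specified there. Transitivity of $\leq$ then gives the claimed bound. There is no substantive obstacle here: the entire content lives in the two theorems being combined, and the only work is the routine bookkeeping of checking that $\hat{n}$ is a Gibbs observable and that the phase-insensitive Gaussian channel satisfies the finite output-entropy condition—both of which follow from standard properties of bosonic systems—which is precisely why the authors describe this as an immediate consequence.
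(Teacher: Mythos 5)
Your proof is correct and matches the paper's own reasoning: the corollary is stated there as an immediate consequence of Theorems~\ref{thm:EC-SE-upper-bnd} and \ref{thm:sq-ent-ph-ins-bound}, combined exactly as you do. Your extra verification that $\hat{n}$ is a Gibbs observable and that the phase-insensitive channel satisfies Condition~\ref{finiteOutputEntropy} is sound routine bookkeeping that the paper leaves implicit.
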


\begin{figure}
\begin{center}
\includegraphics[width=3.4in]{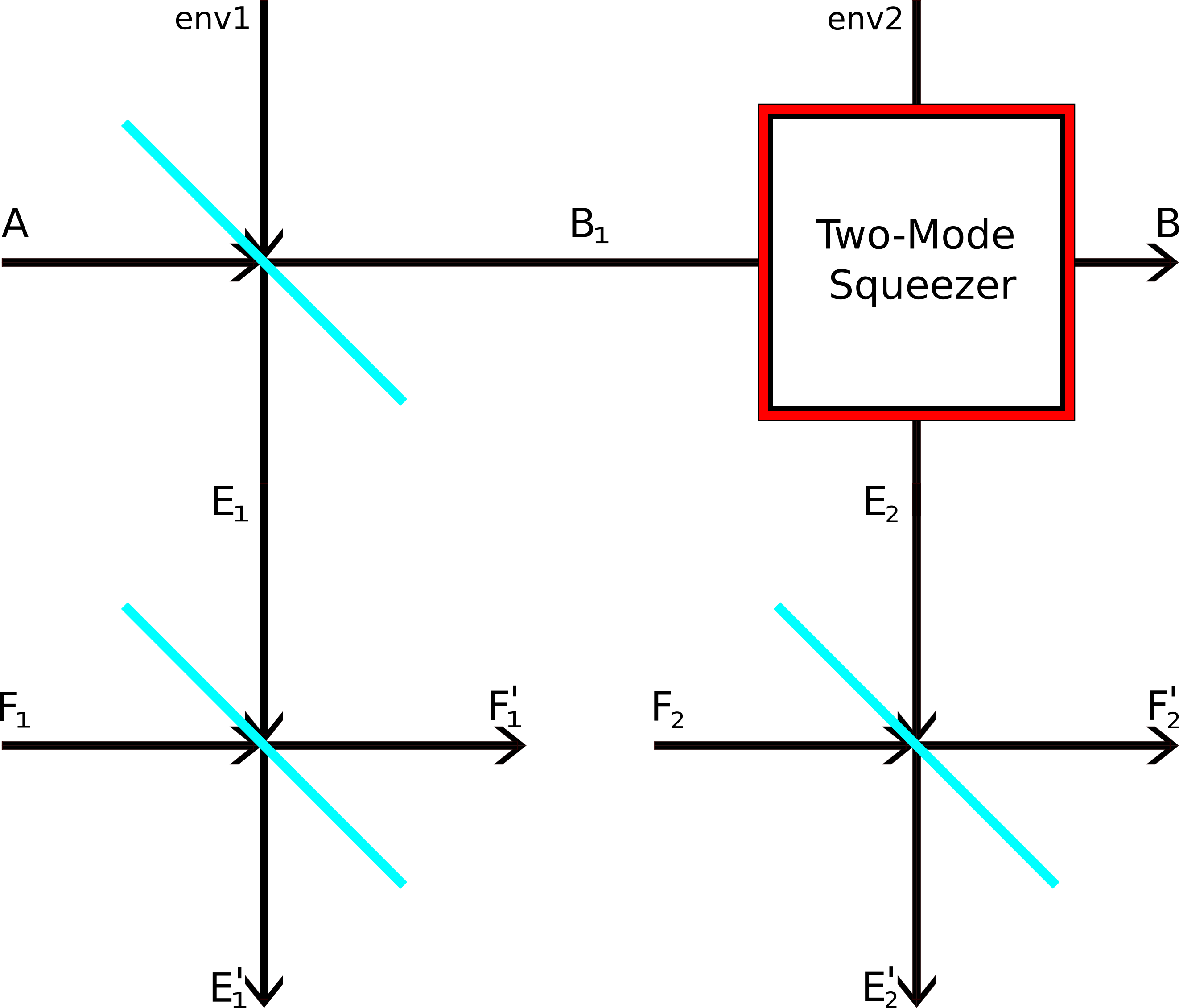}
\caption{A depiction of the isometric channel $\mathcal{W}_{A\to BE^{\prime}_{1}E^{\prime}_{2}F^{\prime}_{1}F^{\prime}_{2}}$ from Theorem~\ref{thm:sq-ent-ph-ins-bound}.
Note that this is the precise construction used in \cite{goodEW16}.
As stated in Theorem~\ref{thm:sq-ent-ph-ins-bound}, the isometric channel
$\mathcal{W}_{A\to BE^{\prime}_{1}E^{\prime}_{2}F^{\prime}_{1}F^{\prime}_{2}}$ is equal to $(\mathcal{V}^{\mathcal{A}}_{E_{2}\to E^{\prime}_{2}F^{\prime}_{2}}\circ 	\mathcal{U}^{\mathcal{A}_{\mathscr{G},0}}_{B_{1}\to BE_{2}}) \circ	(\mathcal{V}^{\mathcal{L}}_{E_{1}\to E^{\prime}_{1}F^{\prime}_{1}}\circ \mathcal{U}^{\mathcal{L}_{T,0}}_{A\to B_{1}E_{1}})$.
	The modes labeled ``env1'' and ``env2'' are respective environmental modes for the isometric channels
$\mathcal{U}^{\mathcal{L}_{T,0}}$ (top left) and $\mathcal{U}^{\mathcal{A}_{\mathscr{G},0}}$ (top right) and are prepared in the pure vacuum state.
The other isometric channels $\mathcal{V}^{\mathcal{L}}_{E_{1}\to E^{\prime}_{1}F^{\prime}_{1}}$ (bottom left) and
$\mathcal{V}^{\mathcal{A}}_{E_{2}\to E^{\prime}_{2}F^{\prime}_{2}}$ (bottom right) are chosen here to be 50-50 beamsplitters, following \cite{goodEW16}. 
The modes $F_1$ and $F_2$ are also prepared in the pure vacuum state.
Given this setup, Theorem~\ref{thm:sq-ent-ph-ins-bound} states that, among all possible input states with mean photon number $\leq N_S$, the thermal state $\theta(N_S)$ maximizes the entropy function $H(B|E_1'E_2') +
H(B|F_1'F_2')$.
}
\label{fig:4bs}
\end{center}
\end{figure}

\begin{proof}[Proof of Theorem~\ref{thm:sq-ent-ph-ins-bound}]
For convenience, we summarize the main steps of the proof here. We note that certain aspects of the proof bear some similarities to related approaches given in the literature \cite{H12,qi,NAJ18}, and the strongest overlap is with Remark~21 and Section~5.2 in \cite{SWAT17}.
\begin{enumerate}
\item
First, we employ the representation of a channel's squashed entanglement in \eqref{altEsqN}, and set  $\mathcal{U}^{\mathcal{A}_{\mathscr{G},0}}_{B_1 \to B E_2}\circ \mathcal{U}^{\mathcal{L}_{T,0}}_{A \to B_1 E_1}$ to be the isometric extension of $\mathcal{N} = \mathcal{A}_{\mathscr{G},0} \circ \mathcal{L}_{T,0}$.
\item
Then, we relax the infimum over all squashing isometries by setting it to be equal to
$\mathcal{V}^{\mathcal{L}}_{E_{1}\to E^{\prime}_{1}F^{\prime}_{1}}
\otimes \mathcal{V}^{\mathcal{A}}_{E_{2}\to E^{\prime}_{2}F^{\prime}_{2}}$. This leads to the isometric channel $\mathcal{W}_{A\to BE^{\prime}_{1}E^{\prime}_{2}F^{\prime}_{1}F^{\prime}_{2}}$ described in the theorem statement.
\item
Next, we employ the extremality of Gaussian states \cite{WGC06} to conclude that the entropy objective function $H(B|E_1'E_2') +
H(B|F_1'F_2')$ is maximized when the input state to mode $A$ is Gaussian.
\item We then employ the phase covariance of $\mathcal{W}_{A\to BE^{\prime}_{1}E^{\prime}_{2}F^{\prime}_{1}F^{\prime}_{2}}$ and concavity of conditional entropy to conclude that, for input states having a fixed mean photon number $N_S$,
 the entropy objective function $H(B|E_1'E_2') +
H(B|F_1'F_2')$
is maximized when the input state to mode $A$ is phase invariant.
\item Steps 3 and 4 imply that, for input states having a fixed mean photon number $N_S$, the optimal input state to mode $A$ should be a thermal state $\theta(N_S)$. This follows because $\theta(N_S)$ is the unique single-mode state of fixed mean photon number $N_S$ that is both Gaussian and phase invariant.
\item Finally, we use the displacement covariance of $\mathcal{W}_{A\to BE^{\prime}_{1}E^{\prime}_{2}F^{\prime}_{1}F^{\prime}_{2}}$ and concavity of conditional entropy to conclude that the 
entropy objective function $H(B|E_1'E_2') +
H(B|F_1'F_2')$ is monotone with respect to $N_S$. This finally implies that $\theta(N_S)$ is the optimal input state among all those having mean photon number $\leq N_S$.
\end{enumerate}

Steps one through three do not require any further justification, and so we proceed to step four. In what follows, we take the isometric channels $\mathcal{V}^{\mathcal{L}}_{E_{1}\to E^{\prime}_{1}F^{\prime}_{1}}$  and
$\mathcal{V}^{\mathcal{A}}_{E_{2}\to E^{\prime}_{2}F^{\prime}_{2}}$  to be 50-50 beamsplitters, following the heuristic from \cite{goodEW16} (based on numerical evidence that these are the best choices among all local phase-insensitive Gaussian channels). Thus, the isometries are manifestly phase covariant. However, note that our argument applies to arbitrary phase-covariant, bosonic Gaussian isometries
$\mathcal{V}^{\mathcal{L}}_{E_{1}\to E^{\prime}_{1}F^{\prime}_{1}}$  and
$\mathcal{V}^{\mathcal{A}}_{E_{2}\to E^{\prime}_{2}F^{\prime}_{2}}$.

Let $\rho_A$ denote an arbitrary input state of mean photon number $N_S$. The state $\rho_A$ can be input to the isometric channel 
$\mathcal{W}_{A\to BE^{\prime}_{1}E^{\prime}_{2}F^{\prime}_{1}F^{\prime}_{2}}$.
The entropy objective function 
$H(B|E_1'E_2')_{\mathcal{W}(\rho)} +
H(B|F_1'F_2')_{\mathcal{W}(\rho)}$	
is equal to a sum of conditional entropies and so we make use of two properties of conditional entropy: its invariance under local unitaries and concavity. Set
\begin{equation}
\hat{N} \equiv
\hat{n}_{B}
+ \hat{n}_{E^{\prime}_{1}}
- \hat{n}_{E^{\prime}_{2}}
+\hat{n}_{F^{\prime}_{1}}
-\hat{n}_{F^{\prime}_{2}},
\end{equation}
and consider the following phase shift unitary, depending on a phase $\phi\in\mathbb{R}$:
	\begin{equation}
	e^{i\hat{N}\phi}
	\equiv e^{i\hat{n}_{B}\phi}
\otimes e^{i\hat{n}_{E^{\prime}_{1}}\phi}
\otimes e^{-i\hat{n}_{E^{\prime}_{2}}\phi}
\otimes 
e^{i\hat{n}_{F^{\prime}_{1}}\phi}
\otimes e^{-i\hat{n}_{F^{\prime}_{2}}\phi} .
	\end{equation}
Then it follows from the invariance of conditional entropy under local unitaries that
\begin{multline}
	H(B|E^{\prime}_{1}E^{\prime}_{2})_{\mathcal{W}(\rho)}
	+H(B|F^{\prime}_{1}F^{\prime}_{2})_{\mathcal{W}(\rho)}
	\\
	=H(B|E^{\prime}_{1}E^{\prime}_{2})_{e^{i\hat{N}\phi}\mathcal{W}(\rho)e^{-i\hat{N}\phi}}
	\\+H(B|F^{\prime}_{1}F^{\prime}_{2})_{e^{i\hat{N}\phi}\mathcal{W}(\rho)e^{-i\hat{N}\phi}}. \label{eq:isometryPreserved}
	\end{multline}
Now exploiting the phase covariance of all of the isometric channels involved in 
$\mathcal{W}_{A\to BE^{\prime}_{1}E^{\prime}_{2}F^{\prime}_{1}F^{\prime}_{2}}$ (see 
\eqref{eq:phasePL} and \eqref{eq:phaseTMS}), we find that the last line above is equal to
	\begin{equation}
H(B|E^{\prime}_{1}E^{\prime}_{2})_{\mathcal{W}(e^{i\hat{n} \phi} \rho e^{-i\hat{n}\phi})} + 
H(B|F^{\prime}_{1}F^{\prime}_{2})_{\mathcal{W}(e^{i\hat{n} \phi}
\rho e^{-i\hat{n}\phi})} . 
	\end{equation}
	These equalities hold for any phase $\phi$ on the input, and so we can average over the input phase $\phi$ without changing the entropy objective function:
	\begin{multline}
	H(B|E^{\prime}_{1}E^{\prime}_{2})_{\mathcal{W}(\rho)}
	+H(B|F^{\prime}_{1}F^{\prime}_{2})_{\mathcal{W}(\rho)}
	\\
	=\frac{1}{2\pi}\int_{0}^{2\pi} d\phi \Bigg[ H(B|E^{\prime}_{1}E^{\prime}_{2})_{\mathcal{W}(e^{i \hat{n}\phi}
	\rho e^{-i \hat{n}\phi})}
	\\+H(B|F^{\prime}_{1}F^{\prime}_{2})_{\mathcal{W}(e^{i \hat{n}\phi}
	\rho e^{-i \hat{n}\phi})} \Bigg]. 
	\label{eq:phase-avg-no-change}
	\end{multline}
Let us define the phase-invariant state
$\overline{\rho}_A$ as
\begin{equation}
\overline{\rho}_A
\equiv
\frac{1}{2\pi}\int_{0}^{2\pi} d\phi \
e^{i \hat{n}\phi}\rho_{A} e^{-i \hat{n}\phi},
\end{equation}	
and note that the mean photon number of 
$\overline{\rho}_A$ is equal to $N_S$, which follows from the assumption that $\rho_A$ has mean photon number $N_S$ and the fact that phase averaging does not change the mean photon number.
Now exploiting the concavity of conditional entropy and the equality in \eqref{eq:phase-avg-no-change}, we find that	
\begin{multline}
H(B|E^{\prime}_{1}E^{\prime}_{2})_{\mathcal{W}(\rho)}
	+H(B|F^{\prime}_{1}F^{\prime}_{2})_{\mathcal{W}(\rho)}
	\\ \leq
H(B|E^{\prime}_{1}E^{\prime}_{2})_{\mathcal{W}(\overline{\rho})}
	+H(B|F^{\prime}_{1}F^{\prime}_{2})_{\mathcal{W}(\overline{\rho})}.	
\end{multline}
By combining with step three (extremality of Gaussian states), we conclude that, for an arbitrary state $\rho_A$ of mean photon number $N_S$, there exists a Gaussian, phase-invariant state that achieves the same or higher	value of the entropy objective function $H(B|E^{\prime}_{1}E^{\prime}_{2})
	+H(B|F^{\prime}_{1}F^{\prime}_{2})$.
So this completes step four, and step five is the next conclusion, which is that the thermal state $\theta(N_S)$ maximizes the entropy objective function with respect to all input states with mean photon number equal to $N_S$.
	

We now move on to the final step six.	
	In order to prove that the entropy objective function monotonically increases as a function of the mean photon number $N_S$ of an input thermal state, we repeat steps similar to those above that we used for step four. Recall again that conditional entropy is invariant under local unitaries, and so we can apply arbitrary displacements without changing the entropy objective function. In particular, since the local displacements can be arbitrary, we take advantage of the specific covariances of beam splitters and two-mode squeezers from \eqref{eq:PLHeis} and \eqref{eq:TMSHeis} when choosing the local displacements. We employ the following shorthand for the local displacements acting on the output modes of $\mathcal{W}$:
\begin{multline}
D^{\alpha}_{\operatorname{out}}
\equiv D_{B}(\sqrt{T \mathscr{G}}\alpha) \otimes D_{E^{\prime}_{1}}(\sqrt{\eta_{2}(1-T)}\alpha)
\\
\otimes 		D_{F^{\prime}_{1}}(\sqrt{(1-\eta_{2})(1-T)}\alpha)
\\
\otimes D_{E^{\prime}_{2}}(\sqrt{\eta_{3}T(\mathscr{G}-1)}\alpha^{\ast}) \\
\otimes D_{F^{\prime}_{2}}(\sqrt{(1-\eta_{3})T(\mathscr{G}-1)}\alpha^{\ast}), 
\end{multline}
where $\eta_2$ and $\eta_3$ are the transmissivities of the beamsplitters
$\mathcal{V}^{\mathcal{L}}_{E_{1}\to E^{\prime}_{1}F^{\prime}_{1}}$  and
$\mathcal{V}^{\mathcal{A}}_{E_{2}\to E^{\prime}_{2}F^{\prime}_{2}}$, respectively (here, however just set to $1/2$ for both). Let $\theta(N_1)$ be a thermal state of mean photon number $N_1 \geq 0$.
Then we find that
	\begin{multline}
	H(B|E^{\prime}_{1}E^{\prime}_{2})_{\mathcal{W}(\theta(N_{1}))}
	+
	H(B|F^{\prime}_{1}F^{\prime}_{2})_{\mathcal{W}(\theta(N_{1}))}
	\\=H(B|E^{\prime}_{1}E^{\prime}_{2})_{D^{\alpha}_{\operatorname{out}}\mathcal{W}(\theta(N_{1}))D^{\alpha\dagger}_{\operatorname{out}}}
	\\+
H(B|F^{\prime}_{1}F^{\prime}_{2})_{D^{\alpha}_{\operatorname{out}}\mathcal{W}(\theta(N_{1}))D^{\alpha\dagger}_{\operatorname{out}}}	. \label{eq:displacementIsometry}
	\end{multline}
	Employing the displacement covariance of the isometric Gaussian channel $\mathcal{W}$, we recast the local displacements on the outputs as a displacement of the input state:
	\begin{equation}
D^{\alpha}_{\operatorname{out}} \mathcal{W}(\theta(N_{1})) D^{\alpha\dagger}_{\operatorname{out}}
=
	\mathcal{W}\big(D_{A}(\alpha)\theta(N_{1}) D^{\dagger}_{A}(\alpha)\big) . \label{eq:bigDisplacement}
	\end{equation}
	Since this is true for any displacement $\alpha$, an expectation with respect to a probability distribution over $\alpha$ does not change the quantity, and by combining with \eqref{eq:displacementIsometry}, we find that
	\begin{multline}
		H(B|E^{\prime}_{1}E^{\prime}_{2})_{\mathcal{W}(\theta(N_{1}) )}
+H(B|F^{\prime}_{1}F^{\prime}_{2})_{\mathcal{W}(\theta(N_{1}) )}	
\\
		= \int d^2\alpha \  p^{N_{2}}(\alpha) \Bigg[ H(B|E^{\prime}_{1}E^{\prime}_{2})_{\mathcal{W}(D(\alpha)\theta(N_{1}) D^{\dagger}(\alpha))} 
		\\+
H(B|F^{\prime}_{1}F^{\prime}_{2})_{\mathcal{W}(D(\alpha)\theta(N_{1}) D^{\dagger}(\alpha))} \Bigg].\label{eq:avgOverAlpha} 
	\end{multline}
In the above, we choose the distribution $p^{N_{2}}(\alpha)$ to be a complex, isotropic Gaussian with variance  $N_{2} \geq 0$. Now recall the well known fact that Gaussian random displacements of a thermal state produce a thermal state of higher mean photon number:
\begin{equation}
	\int d^2 \alpha \  p^{N_{2}}(\alpha)
	\
	D(\alpha)\theta(N_{1}) D^{\dagger}(\alpha) = \theta(N_{1}+N_{2}) .\label{eq:thermalStateSum}
	\end{equation}
The concavity of conditional entropy and the equality in \eqref{eq:thermalStateSum} then imply that
	\begin{multline}
	H(B|E^{\prime}_{1}E^{\prime}_{2})_{\mathcal{W}(\theta(N_{1}))} 
+	H(B|F^{\prime}_{1}F^{\prime}_{2})_{\mathcal{W}(\theta(N_{1}))} \\
	\leq H(B|E^{\prime}_{1}E^{\prime}_{2})_{\mathcal{W}(\theta(N_1+N_2))}
\\	+
	H(B|F^{\prime}_{1}F^{\prime}_{2})_{\mathcal{W}(\theta(N_1+N_2))}. \label{eq:engConstThermal}
	\end{multline}
Since $N_1,N_2 \geq 0$ are arbitrary, we conclude that the entropy  objective function $H(B|E^{\prime}_{1}E^{\prime}_{2}) + H(B|F^{\prime}_{1}F^{\prime}_{2})$ is monotone increasing with respect to the mean photon number of the input thermal state. This now completes step six, and as such, we conclude the proof.
\end{proof}

\begin{remark}
We note here that \cite[Section~C.2]{goodEW16} provided an alternative way to handle step six in the above proof.
\end{remark} 

\begin{remark}
\label{rem:gen-sit-ph-ins-cond-ent}
Following Remark~21 of \cite{SWAT17},
the method used in the proof of Theorem~\ref{thm:sq-ent-ph-ins-bound} to establish the upper bound in \eqref{eq:sq-ent-upper-bnd-ph-ins} on $E_{\sq}(\mathcal{N},\hat{n},N_S)$ can be applied in far more general situations. Suppose that $\mathcal{N}$ is a single-mode input and multi-mode output channel. Suppose that $\mathcal{N}$ is phase covariant, such that a phase rotation on the input state is equivalent to
a product of local phase rotations on the output. Suppose that $\mathcal{N}$ is covariant with respect to displacement operators, such that a displacement
operator acting on the input state is equivalent to a product of local displacement operators on the
output. Then by relaxing the energy-constrained squashed entanglement in such a way that the squashing isometry has the same general phase and displacement covariances, it follows that, among all input states with mean photon number $\leq N_S$, the resulting objective function is maximized by a thermal state input with mean photon number equal to $N_S$.
\end{remark}

\begin{remark}
We can apply Theorem~\ref{thm:sq-ent-ph-ins-bound} and Corollary~\ref{cor:p-cap-bnd-ph-ins} to the pure-loss channel in order to recover one of the main claims of \cite{tgw,tgwB}. That is, the energy-constrained secret-key-agreement capacity of the pure-loss channel
$\mathcal{L}_{\eta,0}$ is bounded from above as
	\begin{equation}
	P_2(\mathcal{L}_{\eta,0},\hat{n},N_S)\leq
	g(N_{S}(1+\eta)/2)-g(N_{S}(1-\eta)/2).
	\end{equation}
Also, the following bound holds for the pure-amplifier channel $\mathcal{A}_{\mathscr{G},0}$, as a special case of a more general result stated in \cite{goodEW16}:
\begin{multline}
	P_2(\mathcal{A}_{\mathscr{G},0},\hat{n},N_S)\leq
	g(N_{S}[\mathscr{G}+1]/2+[\mathscr{G}-1]/2)
	\\
	-g([N_{S}+1][\mathscr{G}-1]/2).\label{eq:GEW16amp}
	\end{multline}
Since the bound in \eqref{eq:GEW16amp} was not explicitly stated in \cite{goodEW16}, for convenience, the arXiv posting of this paper includes a Mathematica file that can be used to derive \eqref{eq:GEW16amp}.
	Furthermore, other bounds on energy-constrained secret-key-agreement capacities of more general phase-insensitive channels are stated in \cite{goodEW16}.
\end{remark}

\subsection{Improved Bounds for Energy-Constrained Secret-Key-Agreement
Capacities of Bosonic Thermal Channels}

In this section, we discuss a variation of the method from \cite{goodEW16}
that leads to improvements of the bounds reported there. To begin with, we
note that any single-mode phase-insensitive channel $\mathcal{M}$, which is
not entanglement breaking \cite{HSR03}, can be decomposed as a pure-amplifier channel of
gain $\mathscr{G}>1$ followed by a pure-loss channel of transmissivity $T
\in(0,1]$:%
\begin{equation}
\mathcal{M}=\mathcal{L}_{T,0}\circ\mathcal{A}_{\mathscr{G},0}.
\label{eq:amp-then-loss-decomp}
\end{equation}
This result was found independently in
\cite[Theorem~30]{SWAT17} and \cite{RMG18,NAJ18} (see also \cite{notesSWAT17}). It has been used in
\cite{RMG18} to bound the unconstrained (and unassisted) quantum capacity of a thermal
channel, and it has been used in
\cite{SWAT17}
to bound the energy-constrained (and
unassisted) quantum and private capacities of an amplifier channel. After
\cite{RMG18} appeared, it was subsequently used in
\cite{SWAT17}
to bound the energy-constrained (and
unassisted) quantum and private capacities of a thermal channel. It has also been used most recently in \cite{NAJ18} in similar contexts.

For a thermal channel $\mathcal{L}_{\eta,N_{B}}$\ of transmissivity $\eta
\in\lbrack0,1]$ and thermal photon number $N_{B}\geq0$, the decomposition is
as above with
\begin{align}
T& =\eta-\left(  1-\eta\right)  N_{B} , \\
\mathscr{G} & =\eta/T.
\end{align}
Thus,
given that a thermal channel is entanglement breaking when $\eta\leq\left(
1-\eta\right)  N_{B}$ \cite{Holevo2008}, it is clear that the decomposition is only valid (i.e.,
$T\in\left(  0,1\right]  $) whenever the thermal channel is not entanglement
breaking. However, this is no matter when bounding secret-key-agreement or
LOCC-assisted quantum capacities, due to the fact that they vanish for any
entanglement-breaking channel.

\begin{figure}
\begin{center}
\includegraphics[width=3.2in]{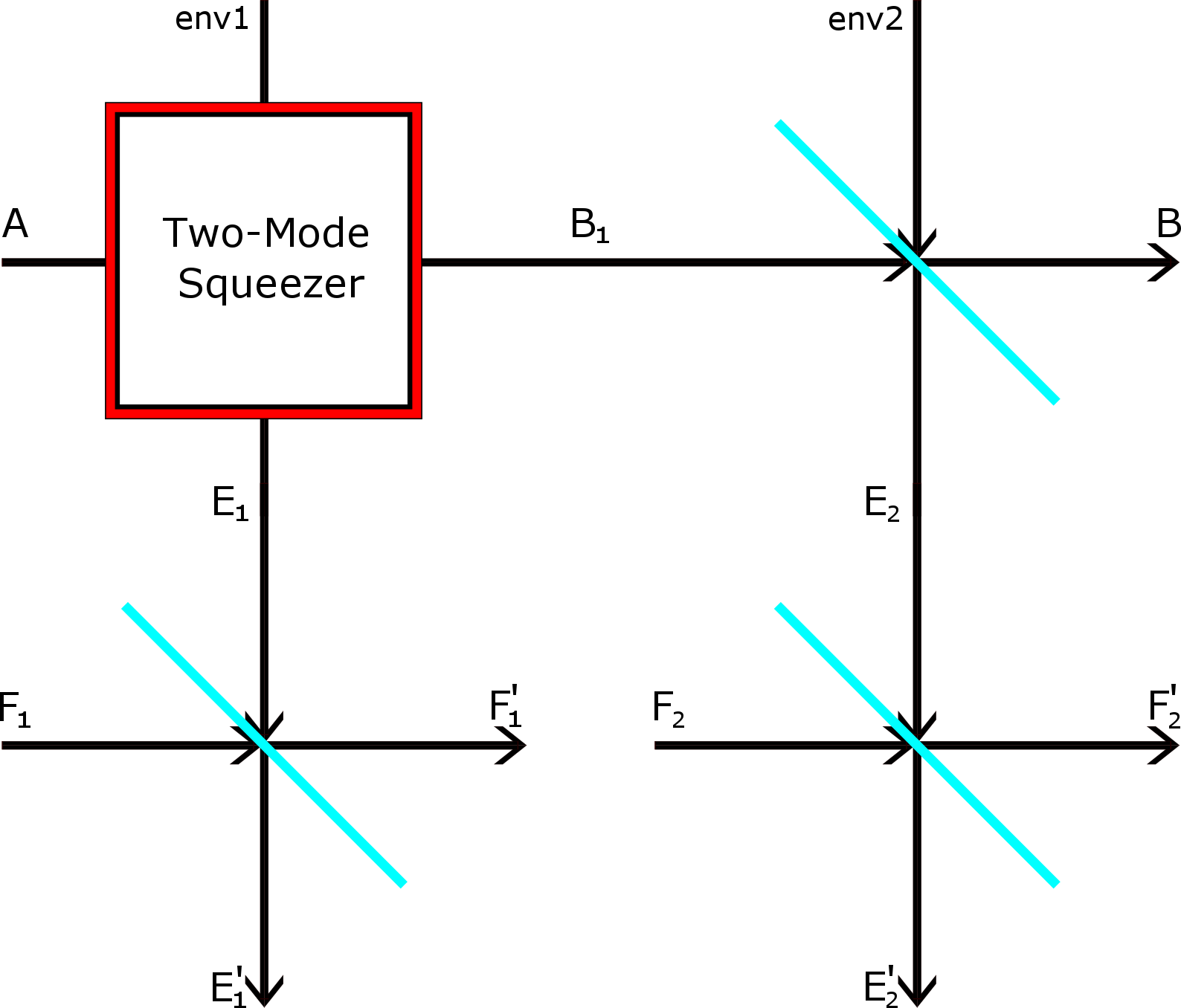}
\caption{A depiction of the isometric channel $\mathcal{W}_{A\to BE^{\prime}_{1}E^{\prime}_{2}F^{\prime}_{1}F^{\prime}_{2}}$ needed for the bound in Proposition~\ref{prop:new-finite-en-bound}.
This construction swaps the top-left beamsplitter and top-right two-mode squeezer from Figure~\ref{fig:4bs} and corresponds to the channel decomposition in \eqref{eq:amp-then-loss-decomp}. This construction leads to an improvement of the bound from \cite{goodEW16}.
}
\label{fig:4bs-other}
\end{center}
\end{figure}

Now, the main idea that leads to an improved energy-constrained bound is
simply 
to
employ the decomposition in \eqref{eq:amp-then-loss-decomp} and the same squashing isometries used in
\cite{goodEW16}. In other words, we are just swapping the top-left
beamsplitter with the top-right two-mode squeezer in Figure~\ref{fig:4bs}. For concreteness, we have depicted this change in
Figure~\ref{fig:4bs-other}. Let $\mathcal{W}$ denote the overall isometry taking the input mode
$A$ to the output modes $BE_{1}^{\prime}E_{2}^{\prime}F_{1}^{\prime}%
F_{2}^{\prime}$, as depicted in Figure~\ref{fig:4bs-other}. Then by the same reasoning as in the proof of Theorem~\ref{thm:sq-ent-ph-ins-bound} and subsequently given in
Remark~\ref{rem:gen-sit-ph-ins-cond-ent}, it follows that the thermal state $\theta(N_{S})$ of mean photon
number $N_{S}\geq0$ optimizes a relaxation of the energy-constrained squashed
entanglement corresponding to $\mathcal{W}$. This relaxation evaluates to%
\begin{multline}
\frac{1}{2}\left[  H(B|E_{1}^{\prime}E_{2}^{\prime})_{\mathcal{W}(\theta
(N_{S}))}+H(B|F_{1}^{\prime}F_{2}^{\prime})_{\mathcal{W}(\theta(N_{S}%
))}\right]  \\
=H(B|E_{1}^{\prime}E_{2}^{\prime})_{\mathcal{W}(\theta(N_{S}))},
\end{multline}
with the latter equality following due the symmetry resulting from choosing
each squashing isometry to be a 50-50 beamsplitter. This in turn implies the
following:

\begin{proposition}
\label{prop:new-finite-en-bound}
For a thermal channel $\mathcal{L}_{\eta,N_{B}}$ of transmissivity $\eta
\in\lbrack0,1]$ and thermal photon number $N_{B}\geq0$ such that $\eta
>\left(  1-\eta\right)  N_{B}$, its energy-constrained secret-key-agreement
capacity is bounded as%
\begin{equation}
P_{2}(\mathcal{L}_{\eta,N_{B}},\hat{n},N_{S})\leq H(B|E_{1}^{\prime}%
E_{2}^{\prime})_{\mathcal{W}(\theta(N_{S}))},\label{eq:DSW18-bnd}
\end{equation}
where $\mathcal{W}$ is the isometry depicted in Figure~\ref{fig:4bs-other}.
\end{proposition}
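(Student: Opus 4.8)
The plan is to chain the general upper bound of Theorem~\ref{thm:EC-SE-upper-bnd} together with the optimality-of-thermal-input argument already developed in Theorem~\ref{thm:sq-ent-ph-ins-bound} and Remark~\ref{rem:gen-sit-ph-ins-cond-ent}, but now applied to the alternative channel decomposition in \eqref{eq:amp-then-loss-decomp}. First I would note that the hypothesis $\eta > (1-\eta)N_B$ is exactly the condition under which $\mathcal{L}_{\eta,N_B}$ fails to be entanglement breaking, so that the decomposition $\mathcal{L}_{\eta,N_B} = \mathcal{L}_{T,0} \circ \mathcal{A}_{\mathscr{G},0}$ is valid with $T \in (0,1]$ and $\mathscr{G} > 1$. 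Since the photon-number operator $\hat{n}$ is a Gibbs observable and thermal channels satisfy the finite output-entropy condition under a mean-photon-number constraint, Theorem~\ref{thm:EC-SE-upper-bnd} applies and yields $P_2(\mathcal{L}_{\eta,N_B}, \hat{n}, N_S) \leq E_{\sq}(\mathcal{L}_{\eta,N_B}, \hat{n}, N_S)$.

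Next I would invoke the representation \eqref{altEsqN} of the energy-constrained squashed entanglement as a supremum over inputs of an infimum over squashing isometries of $\frac{1}{2}[H(B|E') + H(B|F)]$, and relax the infimum by fixing a single squashing isometry. The chosen isometry takes the canonical dilations of the pure-amplifier channel $\mathcal{A}_{\mathscr{G},0}$ and the pure-loss channel $\mathcal{L}_{T,0}$ from \eqref{eq:TMSHeis} and \eqref{eq:PLHeis}, and then applies $50$-$50$ beamsplitter squashers $\mathcal{V}^{\mathcal{A}}$ and $\mathcal{V}^{\mathcal{L}}$ to their two respective environment modes; this is precisely the isometry $\mathcal{W}$ of Figure~\ref{fig:4bs-other}, which differs from the one in Theorem~\ref{thm:sq-ent-ph-ins-bound} only in that the amplifier now precedes the loss stage. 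Relaxing the infimum gives $E_{\sq}(\mathcal{L}_{\eta,N_B}, \hat{n}, N_S) \leq \sup_{\rho_A : \Tr\{\hat{n}\rho_A\} \leq N_S} \frac{1}{2}[H(B|E_1'E_2')_{\mathcal{W}(\rho)} + H(B|F_1'F_2')_{\mathcal{W}(\rho)}]$.

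The core step is then to identify the maximizing input. I would verify that $\mathcal{W}$ inherits both the phase covariance and the displacement covariance of its constituents, since each of the loss/amplifier dilations and the beamsplitter squashers is itself a phase-covariant, displacement-covariant Gaussian isometry, so that $\mathcal{W}$ satisfies exactly the hypotheses of Remark~\ref{rem:gen-sit-ph-ins-cond-ent}. Invoking that remark, equivalently rerunning steps four through six of the proof of Theorem~\ref{thm:sq-ent-ph-ins-bound} for the swapped geometry, shows that among all inputs of mean photon number $\leq N_S$ the objective $H(B|E_1'E_2') + H(B|F_1'F_2')$ is maximized by the thermal state $\theta(N_S)$. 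Finally, the $50$-$50$ symmetry of the two beamsplitter squashers forces $H(B|E_1'E_2')_{\mathcal{W}(\theta(N_S))} = H(B|F_1'F_2')_{\mathcal{W}(\theta(N_S))}$, collapsing the factor $\frac{1}{2}[\cdots]$ to the single term $H(B|E_1'E_2')_{\mathcal{W}(\theta(N_S))}$; combining the resulting inequalities yields \eqref{eq:DSW18-bnd}.

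The main obstacle I anticipate is bookkeeping the covariances through the reordered construction: because the two-mode squeezer sends a displacement to its complex conjugate (cf.\ \eqref{eq:TMSHeis}), one must carefully track how an input displacement propagates to the five output modes when the amplifier precedes the loss stage, and confirm that it still factors into a product of local output displacements of the form needed to run the displacement-averaging argument of step six. The phase-covariance check is analogous but simpler. Once these covariance relations are confirmed for the swapped geometry, every remaining step is a direct transcription of the argument already established for Theorem~\ref{thm:sq-ent-ph-ins-bound}.
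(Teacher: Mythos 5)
Your proposal is correct and follows essentially the same route as the paper's own proof: bound $P_2$ by $E_{\sq}$ via Theorem~\ref{thm:EC-SE-upper-bnd}, relax the squashing infimum in \eqref{altEsqN} to the swapped isometry $\mathcal{W}$ of Figure~\ref{fig:4bs-other} built from the decomposition \eqref{eq:amp-then-loss-decomp}, invoke the covariance/concavity argument of Theorem~\ref{thm:sq-ent-ph-ins-bound} and Remark~\ref{rem:gen-sit-ph-ins-cond-ent} to conclude that $\theta(N_S)$ is the optimal input, and use the 50-50 beamsplitter symmetry to collapse $\frac{1}{2}[H(B|E_1'E_2')+H(B|F_1'F_2')]$ to $H(B|E_1'E_2')_{\mathcal{W}(\theta(N_S))}$. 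Your explicit attention to tracking how displacements (with the conjugation introduced by the two-mode squeezer) propagate into local output displacements through the reordered amplifier-then-loss geometry is a verification the paper leaves implicit under the phrase ``by the same reasoning,'' not a departure from its argument.
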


Now consider a general phase-insensitive single-mode bosonic Gaussian channel $\mathcal{M}$ that is not entanglement-breaking.
By applying Proposition~\ref{prop:new-finite-en-bound} and step six in the proof of Theorem~\ref{thm:sq-ent-ph-ins-bound}, we find that the quantity $H(B|E_{1}^{\prime}
E_{2}^{\prime})_{\mathcal{W}(\theta(N_{S}))}$ is monotone increasing with $N_S$, with $\mathcal{W}$ the corresponding isometry in Figure~\ref{fig:4bs-other}. Furthermore, the limit exists for all $T \in (0,1)$ and $\mathscr{G} > 1$ and converges to the same expression as given in \cite[Eq.~(29)]{goodEW16}:
\begin{multline}
\lim_{N_S \to \infty} H(B|E_{1}^{\prime}
E_{2}^{\prime})_{\mathcal{W}(\theta(N_{S}))} \\=
\frac{(1-T^2)\mathscr{G}\log_2\!\left(\frac{1+T}{1-T}\right) - (\mathscr{G}^2-1)T\log_2\!\left(\frac{\mathscr{G}+1}{\mathscr{G}-1}\right)}{1- \mathscr{G}^2 T^2}.
\label{eq:lim-inf-energy}
\end{multline}
We evaluated the latter limit with the aid of Mathematica and note here that the source files are available for download with the arXiv posting of this paper.

The fact that the expression in \eqref{eq:lim-inf-energy} is no different from that found in \cite[Eq.~(29)]{goodEW16} can be intuitively explained in the following way: Given that the input state to $\mathcal{W}$ is a thermal state, the limit $N_S \to \infty$ in some sense is like a classical limit, and in this limit, the commutation of the pure-loss channel and the pure-amplifier channel in \eqref{eq:amp-then-loss-decomp} makes no difference for the resulting expression. However, the values for $T$ and $\mathscr{G}$ for a thermal channel
$\mathcal{L}_{\eta,N_{B}}$
for the decomposition in \eqref{eq:amp-then-loss-decomp} are quite different from the values that $T$ and $\mathscr{G}$ would take in the decomposition in \eqref{eq:phase-ins-decomp}, and this is part of the reason that the decomposition in \eqref{eq:amp-then-loss-decomp} leads to an improved bound for a thermal channel $\mathcal{L}_{\eta,N_{B}}$.

In particular, for a thermal channel $\mathcal{L}_{\eta,N_{B}}$,
the  expression in \eqref{eq:lim-inf-energy} converges to zero in the entanglement-breaking limit $\eta \to N_B/(N_B+1)$ (or, equivalently,
$N_B \to \eta / (1-\eta)$; this limit calculation is included in our Mathematica files also). Due to this fact and the monotonicity of $H(B|E_{1}^{\prime}
E_{2}^{\prime})_{\mathcal{W}(\theta(N_{S}))}$ with $N_S$, \textbf{\textit{we conclude that the bound from Proposition~\ref{prop:new-finite-en-bound} converges to zero in the entanglement-breaking limit for any finite photon number $N_S$}}. This explains the improved behavior of the  bound in \eqref{eq:DSW18-bnd}, as compared to that from \cite{goodEW16}, as we discuss in what follows.

\subsubsection{Comparison of bounds on energy-constrained secret-key-agreement capacity of a thermal channel}

We have evaluated the bound in \eqref{eq:DSW18-bnd}\ numerically, and we found
strong numerical evidence that it outperforms the bound from \cite{goodEW16} for any values of
$N_{S}\geq0$, $\eta\in\left[  0,1\right]  $, and $N_{B}\geq0$ such that
$\eta>\left(  1-\eta\right)  N_{B}$.

It is also interesting to compare the bound in \eqref{eq:DSW18-bnd} with the
bounds from \cite{goodEW16}\ and \cite{PLOB17,WTB17}, for particular parameter
regimes. In \cite{PLOB17,WTB17}, the following photon-number-independent bound
was established:%
\begin{equation}
P_{2}(\mathcal{L}_{\eta,N_{B}},\hat{n},N_{S})\leq-\log_{2}(\left[
1-\eta\right]  \eta^{N_{B}})-g(N_{B}).
\label{eq:PLOBWTB}
\end{equation}

\begin{figure}
[ptb]
\begin{center}
\includegraphics[
width=3.7126in
]%
{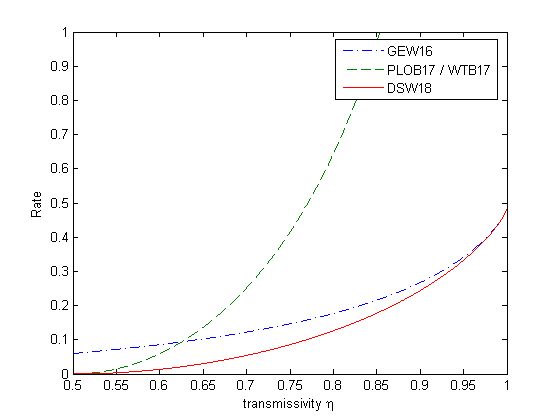}%
\caption{Comparison of the ``DSW18 bound'' from \eqref{eq:DSW18-bnd} with prior bounds
from \cite{goodEW16}\ and \cite{PLOB17,WTB17}, with $\eta\in[0.5,1]$, $N_{S}=0.1$ and $N_{B}=1$.
The plot shows that the bound in \eqref{eq:DSW18-bnd} converges to zero as the
channel becomes entanglement breaking.
}%
\label{fig:compare-1}%
\end{center}
\end{figure}

Figure~\ref{fig:compare-1}\ plots the three different bounds for a fixed
photon number $N_{S}=0.1$ and thermal photon number $N_{B}=1$. Therein, we see
that the bound in \eqref{eq:DSW18-bnd} improves upon the bounds from
\cite{goodEW16}\ and \cite{PLOB17,WTB17} for all transmissivities $\eta
\in\left[  1/2,1\right]  $. At $\eta=1/2$, the channel becomes entanglement
breaking for the aforementioned choice $N_{B}=1$, and we see that the
bound in \eqref{eq:DSW18-bnd} is converging to zero in the entanglement-breaking limit
$\eta\rightarrow1/2$, for fixed $N_B = 1$. The bound in \eqref{eq:DSW18-bnd} is also tighter than the one in \eqref{eq:PLOBWTB} for all values depicted in the plot.

Figure~\ref{fig:compare-another} plots the three different bounds for other parameter regimes, now
with $N_{S}\in\lbrack0,1]$, $\eta=0.1$, and $N_{B}$ set to $3\times10^{-7}$,
$1\times10^{-3}$, and $0.1$. These choices correspond to values expected in
a variety of experimental scenarios, as first discussed in \cite{RGRKVRHWE17} and subsequently
considered in \cite{KW17}. The bound in \eqref{eq:DSW18-bnd} is essentially
indistinguishable from that in \cite{goodEW16} for $N_{B}=3\times10^{-7}$, but
then the bound in \eqref{eq:DSW18-bnd} performs better as $N_{B}$ increases.

The Matlab files used to generate
Figures~\ref{fig:compare-1} and \ref{fig:compare-another} are available for download with the arXiv posting of this paper.

\begin{figure}
[ptb]
\begin{center}
\includegraphics[
width=3.3226in
]%
{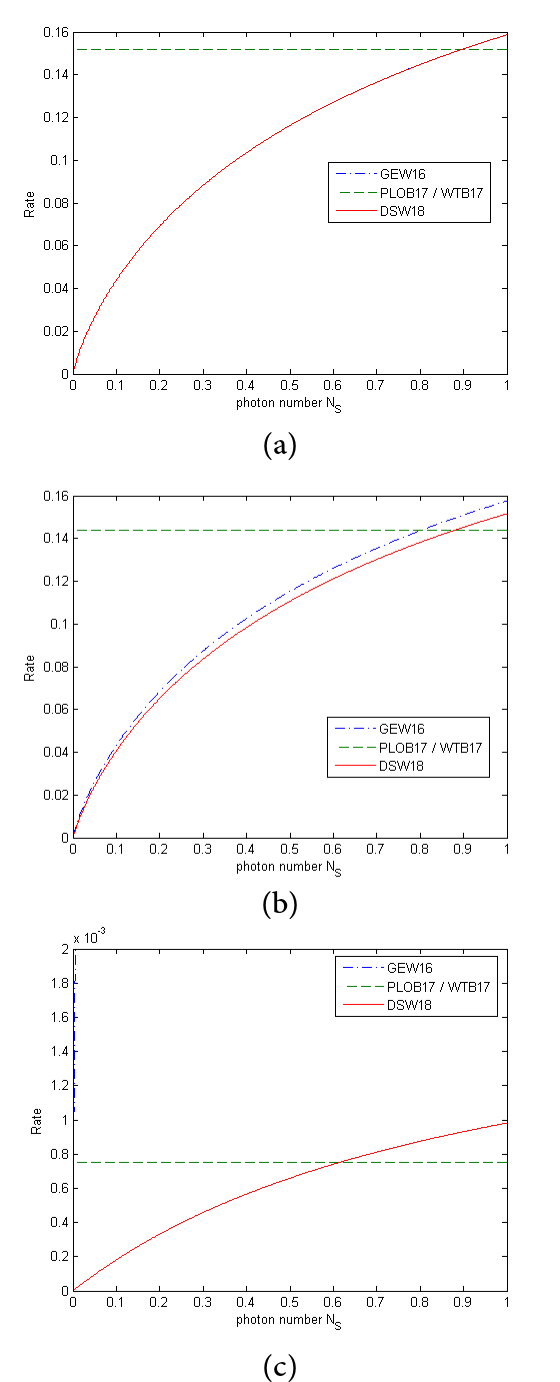}%
\caption{Comparison of the ``DSW18 bound''
 from \eqref{eq:DSW18-bnd} with prior bounds
from \cite{goodEW16}\ and \cite{PLOB17,WTB17}, with $N_{S} \in [0,1]$, $\eta = 0.1$, and $N_{B}\in \{3\times 10^{-7}, 1 \times 10^{-3}, 0.1\}$ (respectively, panels (a), (b), (c), above).
The DSW18 bound from \eqref{eq:DSW18-bnd} is indistinguishable from the bound from \cite{goodEW16} for small $N_B$, but then the bounds are
very different for higher~$N_B$.
In (a), GEW16 is not visible because it overlaps with DSW18.}%
\label{fig:compare-another}%
\end{center}
\end{figure}

\section{Multipartite Conditional Mutual Informations and Squashed Entanglement}

In this section, we review two different definitions of multipartite conditional mutual information from \cite{watanabe,H75,H78,CMS02,AHS08,YHHHOS}, and we prove that they satisfy a duality relation that generalizes the following well known duality relation for conditional mutual information:
\begin{equation}
I(A;B|C)_\psi = I(A;B|D)_\psi,
\label{eq:CQMI-duality}
\end{equation}
which holds for an arbitrary four-party pure state $\psi_{ABCD}$. This duality relation was established in \cite{DY08} and interpreted operationally therein in terms of the quantum state redistribution protocol \cite{DY08,YD09}, and it was recently generalized to the infinite-dimensional case in \cite{S15}, by employing the definition of conditional mutual information from \eqref{exconmutualiA}--\eqref{exconmutualiB}.

After establishing the multipartite generalization of the duality relation in \eqref{eq:CQMI-duality}, we prove that it implies that two definitions of multipartite squashed entanglement \cite{YHHHOS,AHS08} that were previously thought to be different are in fact equal to each other.

We finally then recall various properties of multipartite squashed entanglement, including how to evaluate it for multipartite GHZ and private states.

\subsection{Multipartite Conditional Quantum Mutual Informations}

\label{s:mcqmi}

We now recall two different multipartite generalizations of conditional mutual information \cite{watanabe,H75,H78,CMS02,AHS08,YHHHOS}. Consider an $m$-party state $\rho_{A_{1}\cdots A_{m}}$ acting on a tensor product of infinite-dimensional, separable Hilbert spaces. Let
$\rho_{A_{1}\cdots A_{m}E}$
denote an extension of this state,   which in turn can be purified to $\phi^{\rho}_{A_{1}\cdots A_{m}EF}$. The two generalizations of conditional quantum mutual information are known as the conditional total correlation and the conditional dual total correlation:
\begin{definition}[\cite{watanabe,AHS08,YHHHOS,S15}]
 The conditional total correlation 
of a state $\rho_{A_{1}\cdots A_{m}E}$
is defined as
\begin{equation}
	\label{eq:condTotalCorrel}
	I(A_{1};\cdots;A_{m}|E)_{\rho}
	\equiv \sum_{i=2}^{m} I(A_{i};A_{1}^{i-1}|E)_{\rho} .
\end{equation}
The notation $A_{1}^{i-1}$ refers to all the systems $A_{1}\cdots A_{i-1}$.
\end{definition}
\begin{definition}[\cite{H75,H78,CMS02,S15}]
 The conditional dual total correlation 
 of a state $\rho_{A_{1}\cdots A_{m}E}$
 is defined as
\begin{equation}
	\label{eq:condDualTotalCorrel}
	\widetilde{I}(A_{1};\cdots;A_{m}|E)_{\rho}
	\equiv \sum_{i=2}^{m} I(A_{i};A_{1}^{i-1}|A_{i+1}^{m}E)_{\rho},	
\end{equation}
where $A_{i+1}^{m} \equiv A_{i+1} \cdots A_m$. 
\end{definition}

Many years after the dual total correlation was defined and analyzed in \cite{H75,H78}, the conditional version of it was called ``secrecy monotone'' in \cite{CMS02} and analyzed there.

Note that the above quantities are invariant with respect to permutations of the systems $A_1$, \ldots, $A_m$.
This is more easily seen in the finite-dimensional case. That is, if the state $\rho_{A_1 \cdots A_m E}$ is finite-dimensional, then we have the following identities:
\begin{multline}
I(A_{1};\cdots;A_{m}|E)_{\rho}  \\
		 =\sum_{i=1}^{m}H(A_{i}|E)_{\rho}-H(A_{1}\cdots A_{m}|E)_{\rho}
\end{multline}
and
\begin{align}
 & \widetilde{I}(A_{1};\cdots;A_{m}|E)_{\rho}\notag\\
	& =\sum_{i=1}^{m} H(A_{[m]\backslash \{i\}}|E)_{\rho} -(m-1)H(A_{1}\cdots A_{m}|E)_{\rho} \notag \\
	& = H(A_{1}\cdots A_{m}|E)_{\rho}-\sum_{i=1}^{m} H(A_{i}|A_{[m]\backslash \{i\}}E)_{\rho} .
\end{align}

%

Although the two generalizations of CQMI in \eqref{eq:condTotalCorrel} and \eqref{eq:condDualTotalCorrel} are generally incomparable, they are related  by the following identity \cite{YHHHOS}:
\begin{multline}
	I(A_{1};\cdots;A_{m}|E)_{\rho}+\tilde I(A_{1};\cdots;A_{m}|E)_{\rho}
	\\
	=\sum_{i=1}^{m} I(A_{i};A_{[m]\backslash \{i\}}|E)_{\rho}. \label{eq:correlationRelation}
\end{multline}
The invariance of the above quantities with respect to permutations of the subsystems, as well as the validity of the identity in \eqref{eq:correlationRelation} in the general infinite-dimensional case, are consequences of Propositions~5 and 7
in~\cite{S15}.

\subsection{Duality for the Conditional Total Correlation and the Conditional Dual Total Correlation}

We now generalize the duality of CQMI
in \eqref{eq:CQMI-duality}
to the multipartite setting:

\begin{theorem}
\label{thm:duality-mult-CQMI}
	For a multipartite pure state $\phi_{A_{1}\cdots A_{m}EF}^{\rho}$, the
	following equality holds
	\begin{equation}
		I(A_{1};\cdots;A_{m}|E)_{\phi^{\rho}}=\widetilde{I}(A_{1};\cdots
		;A_{m}|F)_{\phi^{\rho}}. \label{eq:mcqmiDuality}
	\end{equation}
\end{theorem}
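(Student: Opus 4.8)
The plan is to prove the multipartite duality by applying the bipartite CQMI duality relation \eqref{eq:CQMI-duality} term by term in the sum that defines the conditional total correlation. First I would expand the left-hand side using the definition in \eqref{eq:condTotalCorrel}:
\[
I(A_{1};\cdots;A_{m}|E)_{\phi^{\rho}} = \sum_{i=2}^{m} I(A_{i};A_{1}^{i-1}|E)_{\phi^{\rho}}.
\]

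The key observation is that, for each fixed $i$, the state $\phi^{\rho}_{A_{1}\cdots A_{m}EF}$ is pure on the entire system, and its subsystems partition into the four disjoint groups $A_{i}$, $A_{1}^{i-1}$, $E$, and $A_{i+1}^{m}F$. I can therefore regard $\phi^{\rho}$ as a four-party pure state and invoke the infinite-dimensional CQMI duality of \cite{S15}, recalled in \eqref{eq:CQMI-duality}, with the substitution $A \to A_{i}$, $B \to A_{1}^{i-1}$, $C \to E$, and $D \to A_{i+1}^{m}F$ (the last group being exactly the complementary system that purifies the other three). This yields, for each $i \in \{2,\ldots,m\}$, the term-wise identity
\[
I(A_{i};A_{1}^{i-1}|E)_{\phi^{\rho}} = I(A_{i};A_{1}^{i-1}|A_{i+1}^{m}F)_{\phi^{\rho}}.
\]

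Summing this identity over $i$ from $2$ to $m$ and comparing with the definition of the conditional dual total correlation in \eqref{eq:condDualTotalCorrel}, now conditioned on $F$ in place of $E$, gives exactly
\[
\sum_{i=2}^{m} I(A_{i};A_{1}^{i-1}|A_{i+1}^{m}F)_{\phi^{\rho}} = \widetilde{I}(A_{1};\cdots;A_{m}|F)_{\phi^{\rho}},
\]
which is the claimed equality \eqref{eq:mcqmiDuality}.

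The main thing I would verify carefully — rather than a genuine obstacle — is that the bipartite duality \eqref{eq:CQMI-duality} is legitimate in the general infinite-dimensional, separable setting, including the possibility that individual terms equal $+\infty$. This is precisely what the generalization established in \cite{S15} provides, so the delicate part is confirming that the four-group partition is the correct one (in particular, that the complement of $A_{i}A_{1}^{i-1}E$ within the pure state is exactly $A_{i+1}^{m}F$) and that each application of the duality is valid term by term without any finite-entropy hypothesis. Since each term is an equality in $[0,\infty]$, the summation preserves it, and no additional regularity assumption is needed; this is what makes the multipartite statement an essentially immediate corollary of the bipartite one.
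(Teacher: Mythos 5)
Your proposal is correct and follows essentially the same route as the paper's own proof in the general infinite-dimensional case: expand the conditional total correlation by definition, apply the bipartite duality \eqref{eq:CQMI-duality} term by term to the four-party pure-state partition $A_{i}$, $A_{1}^{i-1}$, $E$, $A_{i+1}^{m}F$, and recognize the resulting sum as $\widetilde{I}(A_{1};\cdots;A_{m}|F)_{\phi^{\rho}}$. The paper additionally records an alternative argument via conditional-entropy duality under a finite-entropy assumption, but its primary proof is exactly your term-wise application of the duality from \cite{S15}.
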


\begin{proof}
	There are at least two ways to see this. For the general infinite-dimensional case, we can simply apply definitions and the duality relation in \eqref{eq:CQMI-duality}. We find that
	\begin{align}
	I(A_{1};\cdots;A_{m}|E)_{\phi^{\rho}}&=\sum_{i=2}^{m}I(A_{i};A_{1}^{i-1}|E)_{\phi^{\rho}}\\
	&  =\sum_{i=2}^{m}I(A_{i};A_{1}^{i-1}|A_{i+1}^{m}F)_{\phi^{\rho}}\\
	&  =\widetilde{I}(A_{1};\cdots;A_{m}|F)_{\phi^{\rho}}.
	\end{align}
	In the less general case in which   conditional entropies are finite, we can apply a slightly different, but related method.  Recall that
	conditional entropy obeys a duality property: for a pure state $\psi_{ABC}$,
	we have that $H(A|B)_{\psi}=-H(A|C)_{\psi}$. Using the identities given above
	and this duality, we find that
	\begin{align}
		& I(A_{1};\cdots;A_{m}|E)_{\phi^{\rho}} \nonumber \\
		& =\sum_{i=1}^{m}H(A_{i}|E)_{\phi^{\rho}}-H(A_{1}\cdots A_{m}|E)_{\phi^{\rho}}\\
		&  =-\sum_{i=1}^{m}H(A_{i}|A_{[m]\backslash\{i\}}F)_{\phi^{\rho}}
		+H(A_{1}\cdots A_{m}|F)_{\phi^{\rho}}\\
		&  =\widetilde{I}(A_{1};\cdots;A_{m}|F)_{\phi^{\rho}}.
	\end{align}
	This concludes the proof.
\end{proof}

\begin{remark}
It is interesting to compare the somewhat long route by which Han arrived at the conditional dual total correlation in \cite{H78}, versus the comparatively short route by which we arrive at it in Theorem~\ref{thm:duality-mult-CQMI}. This latter method of using purifications and related entropy identities is unique to quantum information theory. It is also pleasing to find that the conditional total correlation and the conditional dual total correlation are dual to each other in the entropic sense of Theorem~\ref{thm:duality-mult-CQMI}.
\end{remark}

\subsection{Equivalence of Multipartite Squashed Entanglements} \label{s:equivEsq}

Two multipartite generalizations of the squashed entanglement of a state
$\rho_{A_{1}\cdots A_{m}}$ are based on the conditional total correlation and the conditional dual total correlation \cite{AHS08,YHHHOS}:
\begin{multline}
	E_{\sq}(A_{1};\cdots;A_{m})_{\rho} \equiv
	\frac{1}{2}\inf_{\rho_{A_{1}\cdots A_{m}E}}\Big\{I(A_{1};\cdots;A_{m}|E)_{\rho}
	\\
	:\Tr_{E}\{\rho_{A_{1}\cdots A_{m}E}\}=\rho_{A_{1}\cdots A_{m}}\Big\}, \label{eq:mEsq}
\end{multline}
\begin{multline}
	\widetilde{E}_{\sq}(A_{1};\cdots;A_{m})_{\rho} \equiv
	\frac{1}{2}\inf_{\rho_{A_{1}\cdots A_{m}E}}\Big\{\widetilde{I}(A_{1};\cdots;A_{m}|E)_{\rho}
	\\
	:\Tr_{E}\{\rho_{A_{1}\cdots A_{m}E}\}=\rho_{A_{1}\cdots A_{m}}\Big\}. \label{eq:mEsqTilde}
	\end{multline}

By employing Theorem~\ref{thm:duality-mult-CQMI}, we find that these quantities are in fact
always equal to each other, so that there is no need to consider two separate definitions, as was previously done in \cite{YHHHOS,STW16}:

\begin{theorem}
\label{thm:m-sq-equal}
	For a multipartite state $\rho_{A_{1}\cdots A_{m}}$, the following equality
	holds
	\begin{equation}
	E_{\sq}(A_{1};\cdots;A_{m})_{\rho}=\widetilde{E}_{\sq}(A_{1};\cdots;A_{m})_{\rho}.
	\end{equation}
\end{theorem}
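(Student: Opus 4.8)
The plan is to show that the two infima defining $E_{\sq}(A_{1};\cdots;A_{m})_{\rho}$ and $\widetilde{E}_{\sq}(A_{1};\cdots;A_{m})_{\rho}$ range over domains that can be matched term-for-term using purification together with the duality relation of Theorem~\ref{thm:duality-mult-CQMI}. The essential observation is that $I(A_{1};\cdots;A_{m}|E)_{\rho}$ depends only on the reduced state on $A_{1}\cdots A_{m}E$ (each summand in \eqref{eq:condTotalCorrel} is a CQMI involving only those systems), and likewise $\widetilde{I}(A_{1};\cdots;A_{m}|F)_{\sigma}$ depends only on the reduced state on $A_{1}\cdots A_{m}F$.

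First I would prove $\widetilde{E}_{\sq}\leq E_{\sq}$. Fix an arbitrary extension $\rho_{A_{1}\cdots A_{m}E}$ of $\rho_{A_{1}\cdots A_{m}}$ and purify it to a pure state $\phi_{A_{1}\cdots A_{m}EF}$ on a separable Hilbert space. Applying Theorem~\ref{thm:duality-mult-CQMI} gives $I(A_{1};\cdots;A_{m}|E)_{\phi}=\widetilde{I}(A_{1};\cdots;A_{m}|F)_{\phi}$. The left-hand side equals $I(A_{1};\cdots;A_{m}|E)_{\rho}$, while the right-hand side equals $\widetilde{I}(A_{1};\cdots;A_{m}|F)_{\sigma}$, where $\sigma_{A_{1}\cdots A_{m}F}\equiv\Tr_{E}\{\phi_{A_{1}\cdots A_{m}EF}\}$ is itself an admissible extension of $\rho_{A_{1}\cdots A_{m}}$. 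Hence the value $I(A_{1};\cdots;A_{m}|E)_{\rho}$ is attained by a competitor in the infimum defining $\widetilde{E}_{\sq}$, so $\widetilde{E}_{\sq}\leq\tfrac{1}{2}I(A_{1};\cdots;A_{m}|E)_{\rho}$; taking the infimum over all extensions $\rho_{A_{1}\cdots A_{m}E}$ yields $\widetilde{E}_{\sq}\leq E_{\sq}$. The reverse inequality follows by the symmetric argument: starting from an extension $\sigma_{A_{1}\cdots A_{m}F}$ for the dual quantity, I purify to $\phi_{A_{1}\cdots A_{m}EF}$, read Theorem~\ref{thm:duality-mult-CQMI} as producing the $E$-conditioned total correlation from the $F$-conditioned dual total correlation, and trace out $F$ to obtain a matching extension $\rho_{A_{1}\cdots A_{m}E}$. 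Combining the two inequalities gives the claimed equality.

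The only genuinely delicate point is the bookkeeping in the infinite-dimensional setting: I must check that every extension under consideration lives on a separable Hilbert space, that its purification does too (guaranteed by the countability of the spectral/Schmidt decomposition in \eqref{eq:specDecomp}--\eqref{eq:purification}), and that each reduced state obtained after a partial trace is a legitimate competitor in the respective infimum. Since Theorem~\ref{thm:duality-mult-CQMI} is already established at the level of general separable Hilbert spaces, no new analytic input is required; the argument is essentially a clean matching of the two optimization domains, and I expect no obstacle beyond this verification that the purification-and-trace construction keeps us inside the admissible class of extensions.
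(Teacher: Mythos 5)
Your proposal is correct and follows essentially the same route as the paper's proof: purify an arbitrary extension, invoke Theorem~\ref{thm:duality-mult-CQMI} to convert the conditional total correlation with respect to $E$ into the conditional dual total correlation with respect to the purifying system $F$, note that the marginal on $A_{1}\cdots A_{m}F$ is an admissible competitor in the other infimum, and take infima, with the reverse inequality obtained symmetrically. The only difference is your added (and welcome) attention to the separability and purification bookkeeping, which the paper leaves implicit.
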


\begin{proof}
	Let $\rho_{A_{1}\cdots A_{m}E}$ be an extension of $\rho_{A_{1}\cdots A_{m}}%
	$,\ and let $\phi_{A_{1}\cdots A_{m}EF}^{\rho}$ be a purification of
	$\rho_{A_{1}\cdots A_{m}E}$. Then by  Theorem~\ref{thm:duality-mult-CQMI},%
	\begin{align}
	I(A_{1};\cdots;A_{m}|E)_{\rho}  &  =\widetilde{I}(A_{1};\cdots;A_{m}|F)_{\phi^{\rho}}\\
	&  \geq2\widetilde{E}_{\sq}(A_{1};\cdots;A_{m})_{\rho}.
	\end{align}
	The inequality holds because $\phi_{A_{1}\cdots A_{m}F}^{\rho}$ is a
	particular extension of $\rho_{A_{1}\cdots A_{m}}$, and the squashed
	entanglement involves an infimum over all extensions of $\rho_{A_{1}\cdots A_{m}}$. Since the inequality holds for all extensions of $\rho_{A_{1}\cdots A_{m}}$, we can conclude that
	\begin{equation}
	E_{\sq}(A_{1};\cdots;A_{m})_{\rho}\geq\widetilde{E}_{\sq}(A_{1};\cdots;A_{m})_{\rho}.
	\end{equation}
	A proof for the other inequality $\widetilde{E}_{\sq}(A_{1};\cdots;A_{m})_{\rho}\geq E_{\sq}(A_{1};\cdots;A_{m})_{\rho}$ goes similarly.
\end{proof}

\begin{remark}
One of the main results of \cite{STW16} was to establish bounds on the secret-key-agreement capacity region of a quantum broadcast channel in terms of multipartite squashed entanglements. Theorem~\ref{thm:m-sq-equal} demonstrates that essentially half of the upper bounds written down in \cite{STW16} were in fact redundant. The same is true for the key distillation bounds from \cite{YHHHOS}.
\end{remark}

\subsection{Partitions and multipartite squashed entanglement}

\label{s:multiPrimer}
In this brief section, we recall some notation from \cite[Section~2.7]{STW16},
which we use in what follows as a shorthand for describing various partitions
of a set of quantum systems and their corresponding multipartite squashed
entanglements. Given a set $\mathcal{W}$ of quantum systems, a partition
$\mathbb{G}=\{\chi_{1},\dots,\chi_{|\mathbb{G}|}\}$ is a set of non-empty
subsets of $\mathcal{W}$ such that
\begin{equation}
\bigcup\limits_{\chi_{i}\in\mathbb{G}}\chi_{i}=\mathcal{W},
\end{equation}
and for all $\chi_{i},\chi_{j}\in\mathbb{G}$, $i\neq j$,
\begin{equation}
\chi_{i}\cap\chi_{j}=\emptyset.
\end{equation}
For example, one possible partition of $\mathcal{W}=\{A,B,C\}$ is given by
$\mathbb{G}=\{\{AB\},\{C\}\}$. The power set $\mathcal{P}(\mathcal{W})$ is the
set of all subsets of $\mathcal{W}$. The  sets $\mathcal{P}_{\geq
1}(\mathcal{W})$ and $\mathcal{P}_{\geq2}(\mathcal{W})$ are the sets of all subsets
of $\mathcal{W}$ with greater than or equal to one and two members,
respectively. That is, for $\mathcal{W}=\{A,B,C\}$,%
\begin{align}
\mathcal{P}(\mathcal{W}) &  =\{\emptyset
,\{A\},\{B\},\{C\},\{A,B\},\{A,C\},\nonumber\\
&  \qquad\{B,C\},\{A,B,C\}\} ,\\
\mathcal{P}_{\geq1}(\mathcal{W}) &
=\{\{A\},\{B\},\{C\},\{A,B\},\{A,C\},\nonumber\\
&  \qquad\{B,C\},\{A,B,C\}\} ,
\\
\mathcal{P}_{\geq2}(\mathcal{W}) &
=\{\{A,B\},\{A,C\},\{B,C\},\{A,B,C\}\}.\label{eq:powerSet2}%
\end{align}
Given a set $\mathcal{Y}$, let $\omega_{\mathcal{Y}}$ denote a $|\mathcal{Y}%
|$-partite state shared by the parties specified by the elements of
$\mathcal{Y}$. If $\mathbb{G}$ is a partition of $\mathcal{Y}$, then the
notation
\begin{equation}
E_{\sq}(\mathbb{G})_{\omega}%
\end{equation}
refers to the multipartite squashed entanglement with parties grouped
according to partition $\mathbb{G}$. For example, if $\mathcal{Y}=\{A,B,C\}$,
$\omega_{\mathcal{Y}}=\omega_{ABC}$, $\mathbb{G}_{1}=\{\{A\},\{B\},\{C\}\}$,
and $\mathbb{G}_{2}=\{\{AB\},\{C\}\}$, then
\begin{align}
E_{\sq}(\mathbb{G}_{1})_{\omega} &  =E_{\sq}(A;B;C)_{\omega},\qquad
\text{and}\label{eq:eSqPart1}\\
E_{\sq}(\mathbb{G}_{2})_{\omega} &  =E_{\sq}(AB;C)_{\omega}%
.\label{eq:eSqPart2}%
\end{align}

\subsection{Multipartite Private States}

One multipartite generalization of the maximally entangled state in 
\eqref{eq:maximallyentangled} is the Greenberger-Horne-Zeilinger (GHZ) state.
A GHZ state of $\log_{2} K$ entangled bits of an $m$-party system $A_{1}$, \ldots, $ A_{m}$ takes the form
\begin{equation}
|\Phi\rangle_{A_{1}\cdots A_{m}}=\frac{1}{\sqrt{K}}\sum_{i=1}^{K}%
|i\rangle_{A_{1}}\otimes\cdots\otimes|i\rangle_{A_{m}}\label{eq:GHZ}%
\end{equation}
where $\{|i\rangle_{A_{1}}\},\ldots,\{|i\rangle_{A_{m}}\}$ are orthonormal
basis sets for their respective systems. The bipartite private states
from \eqref{privateState} 
are similarly
generalized to the multipartite case \cite{HA06}, so that a state of $\log
_{2}K$ private bits is as follows:%
\begin{multline}
\gamma_{A_{1}\cdots A_{m}A_{1}^{\prime}\cdots A_{m}^{\prime}}\\
=U_{A_{1}\cdots A_{m}A_{1}^{\prime}\cdots A_{m}^{\prime}}(|\Phi\rangle
\langle\Phi|_{A_{1}\cdots A_{m}}\otimes\rho_{A_{1}^{\prime}\cdots
A_{m}^{\prime}})\times\\
U_{A_{1}\cdots A_{m}A_{1}^{\prime}\cdots A_{m}^{\prime}}^{\dagger
},\label{eq:multiPrivateState}%
\end{multline}
with the GHZ state generalizing the role of the maximally entangled state, and the
twisting unitary from \eqref{twistingUnitary} is generalized as%
\begin{multline}
U_{A_{1}\cdots A_{m}A_{1}^{\prime}\cdots A_{m}^{\prime}}=\sum_{i_{1}%
,\ldots,i_{m}=1}^{K}|i_{1},\ldots,i_{m}\rangle\langle i_{1},\ldots
,i_{m}|_{A_{1}\cdots A_{m}}\\
\otimes U_{A_{1}^{\prime}\cdots A_{m}^{\prime}}^{i_{1},\ldots,i_{m}},
\end{multline}
where $U_{A_{1}^{\prime}\cdots A_{m}^{\prime}}^{i_{1},\ldots,i_{m}}$ are
unitary operators depending on the values $i_{1},\ldots,i_{m}$.

\subsection{Properties of Multipartite Squashed Entanglement}

\label{ss:multiEsqProps}Multipartite squashed entanglement possesses a number
of useful properties that have been proven separately in \cite{STW16} for
the quantities in \eqref{eq:mEsq} and \eqref{eq:mEsqTilde}. In light of
Theorem~\ref{thm:m-sq-equal}, we now know that these measures are equal. Since we
require these properties in what follows, we recall some of them here:

\begin{lemma}
[Subadditivity \cite{STW16}]\label{lem:multiSubadditivity}Given a pure state
$\phi_{RA_{1}\cdots A_{m}B_{1}\cdots B_{m}EF}$, the following inequality holds%
\begin{multline}
E_{\sq}(R;A_{1}B_{1};\cdots;A_{m}B_{m})_{\phi}\leq\\
E_{\sq}(RA^{m}E;B_{1};\cdots;B_{m})_{\phi}+\\
E_{\sq}(RB^{m}F;A_{1};\cdots;A_{m})_{\phi}%
\end{multline}
where the notation $A^{m}$ refers to all systems $A_{1}\cdots A_{m}$ and a
similar convention for $B^{m}$.
\end{lemma}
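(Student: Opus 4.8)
The plan is to prove the inequality by combining the squashing-channel characterization of multipartite squashed entanglement with the duality results established above, following the template of the bipartite subadditivity argument in Lemma~\ref{lemma:subadditivity-CQMI}. Write $A^m \equiv A_1\cdots A_m$ and $B^m \equiv B_1\cdots B_m$. The key move is that the state underlying the left-hand side, $\rho_{RA^mB^m}$, is purified by $\phi_{RA^mB^mEF}$ with purifying system $EF$, so I am free to choose a \emph{product} squashing channel $\mathcal{S}^A_{E\to E'}\otimes\mathcal{S}^B_{F\to F'}$ acting on it. For arbitrary such $\mathcal{S}^A$ and $\mathcal{S}^B$, let $\omega$ denote the resulting state on $RA^mB^mE'F'$, let $\tau^A \equiv \mathcal{S}^A_{E\to E'}(\phi)$ and $\tau^B \equiv \mathcal{S}^B_{F\to F'}(\phi)$, and let $\Omega$, $\Omega^A$, $\Omega^B$ be their canonical isometric purifications, introducing leftover systems $G_A$ and $G_B$. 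By definition~\eqref{eq:mEsq}, $E_{\sq}(R;A_1B_1;\cdots;A_mB_m)_\phi \le \tfrac12 I(R;A_1B_1;\cdots;A_mB_m|E'F')_\omega$.

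First I would establish the central estimate
\[
I(R;A_1B_1;\cdots;A_mB_m|E'F')_\omega \le I(RA^mE;B_1;\cdots;B_m|F')_{\tau^B} + I(RB^mF;A_1;\cdots;A_m|E')_{\tau^A}.
\]
To do so, I expand the left-hand side with the definition~\eqref{eq:condTotalCorrel} as $\sum_{i=1}^m I(A_iB_i;R(A_1B_1)\cdots(A_{i-1}B_{i-1})|E'F')_\Omega$, and apply the CQMI duality of Lemma~\ref{lem:dualitylemma} to each summand, using the purity of $\Omega$, to rewrite it as $H(A_iB_i|E'F')_\Omega + H(A_iB_i|A_{i+1}^mB_{i+1}^m G_AG_B)_\Omega$. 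Subadditivity of conditional entropy~\eqref{subadditivityOfConditionalEntropy} then splits each term along the $A/B$ cut, namely $H(A_iB_i|E'F')\le H(A_i|E')+H(B_i|F')$ and $H(A_iB_i|A_{i+1}^mB_{i+1}^m G_AG_B)\le H(A_i|A_{i+1}^mG_A)+H(B_i|B_{i+1}^mG_B)$. Regrouping the $A$-terms and $B$-terms separately and applying the chain rule $\sum_i H(A_i|A_{i+1}^mG_A)=H(A^m|G_A)$ together with the conditional-entropy duality~\eqref{conditionalEntropyDuality} on the pure state $\Omega^A$, I recover exactly the conditional total correlation $I(RB^mF;A_1;\cdots;A_m|E')_{\tau^A}$; the $B$-terms assemble symmetrically into $I(RA^mE;B_1;\cdots;B_m|F')_{\tau^B}$. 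Matching marginals across $\Omega$, $\Omega^A$, $\Omega^B$ (the local channel on the complementary purifying system does not affect the relevant reduced states) ensures these entropies agree.

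Once the central estimate is in hand, I combine it with the bound from~\eqref{eq:mEsq} and take the infimum over $\mathcal{S}^B$ in the first term and over $\mathcal{S}^A$ in the second, independently, recognizing the two resulting infima as $2E_{\sq}(RA^mE;B_1;\cdots;B_m)_\phi$ and $2E_{\sq}(RB^mF;A_1;\cdots;A_m)_\phi$; dividing by two yields the claim. Throughout I would invoke Theorem~\ref{thm:m-sq-equal} and Theorem~\ref{thm:duality-mult-CQMI} to pass freely between the total-correlation and dual-total-correlation forms, which is precisely what makes the term-by-term duality and the telescoping of the chain rule line up.

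The main obstacle is the infinite-dimensional bookkeeping rather than the algebra. Lemma~\ref{lem:dualitylemma}, the conditional-entropy duality, the chain rule, and subadditivity are all stated under finiteness hypotheses, so the entropic identities above are only literally valid when the relevant marginal entropies are finite. As in the proof of Lemma~\ref{lemma:subadditivity-CQMI}, I would therefore insert a sequence of finite-rank projectors on the systems $A^mB^m$, carry out the computation for the projected states where all entropies are finite, and then pass to the limit using lower semicontinuity of conditional mutual information and convergence of the conditional entropies under the projections \cite[Theorem~2]{S15}, \cite[Proposition~2]{K11}. I must also track that both sides may be $+\infty$, in which case the bound holds trivially, and take care that the product structure of the squashing channel and the matching of reduced states survive the projection-and-limit procedure.
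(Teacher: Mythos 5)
Your proposal is correct, and it is essentially the argument the paper relies on: the paper gives no self-contained proof of Lemma~\ref{lem:multiSubadditivity}, deferring to \cite{STW16} — whose finite-dimensional proof is precisely your product-squashing-channel computation (per-party duality of CQMI, subadditivity of conditional entropy along the $A/B$ cut, and reassembly into the two conditional total correlations, followed by independent infima over the two squashing channels) — together with the remark that the approach of \cite{S15} handles the infinite-dimensional case. Your finite-rank projection and semicontinuity treatment of the finiteness hypotheses is exactly that approach, mirroring the paper's own proof of Lemma~\ref{lemma:subadditivity-CQMI}, so nothing further is needed.
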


Technically speaking, \cite{STW16} did not establish the above statement in the general infinite-dimensional case, but we note here that the approach from \cite{S15} can be used to establish the lemma above.

\begin{lemma}
[Monotonicity for Groupings \cite{STW16}]Squashed entanglement is
non-increasing when subsystems are grouped. That is, given a state
$\rho_{A_{1}\cdots A_{m}}$, the following inequality holds%
\begin{equation}
E_{\sq}(A_{1};A_{2};\cdots;A_{m})_{\rho}\geq E_{\sq}(A_{1}A_{2};A_{3}%
;\cdots;A_{m})_{\rho}.
\end{equation}

\end{lemma}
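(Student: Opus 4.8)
The plan is to exploit the fact that the conditional total correlation defining the multipartite squashed entanglement in \eqref{eq:mEsq} is, by \eqref{eq:condTotalCorrel}, a \emph{sum} of ordinary conditional mutual information terms, and that grouping two systems merely deletes one non-negative term from that sum. The key structural point is that both squashed entanglements are infima over the \emph{same} set of extensions $\rho_{A_1\cdots A_m E}$ of the \emph{same} underlying state $\rho_{A_1\cdots A_m}$, since grouping two subsystems does not alter the Hilbert space being extended. It therefore suffices to compare the two objective functions extension by extension.

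First I would fix an arbitrary extension $\rho_{A_1\cdots A_m E}$ and regroup the defining sum. Writing out the conditional total correlation for the fine-grained partition,
\begin{align}
I(A_1;\cdots;A_m|E)_\rho
& = \sum_{i=2}^m I(A_i;A_1^{i-1}|E)_\rho \nonumber \\
& = I(A_2;A_1|E)_\rho + \sum_{i=3}^m I(A_i;A_1^{i-1}|E)_\rho,
\end{align}
and identifying the trailing sum with the conditional total correlation of the grouped partition (reading $A_1 A_2$ as a single party, so that $A_1^{i-1}$ becomes $(A_1A_2)A_3\cdots A_{i-1}$ for $i\geq 3$), I obtain the identity
\[
I(A_1;\cdots;A_m|E)_\rho = I(A_2;A_1|E)_\rho + I(A_1A_2;A_3;\cdots;A_m|E)_\rho .
\]
This regrouping is purely combinatorial and remains legitimate in the infinite-dimensional setting, as each summand is a well-defined element of $[0,\infty]$ via \eqref{exconmutualiA}--\eqref{exconmutualiB}.

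Next I would invoke the non-negativity of conditional mutual information, namely $I(A_2;A_1|E)_\rho\geq 0$, to drop that term and conclude that
\[
I(A_1;\cdots;A_m|E)_\rho \geq I(A_1A_2;A_3;\cdots;A_m|E)_\rho
\]
holds for every extension $\rho_{A_1\cdots A_m E}$. Since the right-hand side is bounded below by its own infimum over all extensions, the left-hand side exceeds that infimum for every extension as well; taking the infimum over extensions on the left and appealing to the definition in \eqref{eq:mEsq} then yields $E_{\sq}(A_1;\cdots;A_m)_\rho \geq E_{\sq}(A_1A_2;A_3;\cdots;A_m)_\rho$, which is the claim.

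There is no serious obstacle here: the argument is essentially a one-line observation once the decomposition of the conditional total correlation is written out, and the only points requiring care are the infinite-dimensional ones already handled in the excerpt, namely that each conditional mutual information term lies in $[0,\infty]$ and is well-defined. Finally, the statement as given groups the first two systems, but the monotonicity for grouping an arbitrary pair follows immediately from the permutation invariance of the conditional total correlation noted after \eqref{eq:correlationRelation}, which lets me relabel any pair to be the first two before applying the argument above.
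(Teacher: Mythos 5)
Your proof is correct. The paper itself does not prove this lemma but only recalls it from \cite{STW16}, and your argument---observing that the conditional total correlation \eqref{eq:condTotalCorrel} for the fine partition equals that of the grouped partition plus the non-negative term $I(A_2;A_1|E)_\rho$, that both infima in \eqref{eq:mEsq} range over the same set of extensions, and then taking the infimum---is essentially the same standard argument used in \cite{STW16}, carried out with the care (values in $[0,\infty]$, permutation invariance) needed for the infinite-dimensional setting.
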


\begin{lemma}
[Product States \cite{STW16}]Let
\begin{equation}
\omega_{AB_{1}\cdots B_{m}%
}=\rho_{A}\otimes\sigma_{B_{1}\cdots B_{m}}
\end{equation}
 where  $\rho_{A}$ and $\sigma_{B_{1}\cdots B_{m}}$ are density operators.
Then%
\begin{equation}
E_{\sq}(A;B_{1};\cdots;B_{m})_{\omega}=E_{\sq}(B_{1};\cdots;B_{m})_{\sigma}.
\end{equation}

\end{lemma}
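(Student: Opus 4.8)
The plan is to prove the two inequalities $E_{\sq}(A;B_{1};\cdots;B_{m})_{\omega}\le E_{\sq}(B_{1};\cdots;B_{m})_{\sigma}$ and $E_{\sq}(A;B_{1};\cdots;B_{m})_{\omega}\ge E_{\sq}(B_{1};\cdots;B_{m})_{\sigma}$ separately, the central tool being a chain-rule decomposition of the conditional total correlation. Writing the $(m+1)$-party correlation with $A$ taken as the first system and applying the chain rule $I(B_{k};AB_{1}^{k-1}|E)=I(B_{k};A|E)+I(B_{k};B_{1}^{k-1}|AE)$ termwise, I would establish the identity
\begin{equation}
I(A;B_{1};\cdots;B_{m}|E)_{\omega}=\sum_{k=1}^{m}I(B_{k};A|E)_{\omega}+I(B_{1};\cdots;B_{m}|AE)_{\omega},\label{eq:star-plan}
\end{equation}
valid for any extension $\omega_{AB_{1}\cdots B_{m}E}$. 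Here I invoke the permutation invariance of the conditional total correlation and the infinite-dimensional chain rule for conditional mutual information, both available from \cite{S15} (the same machinery underlying \eqref{eq:correlationRelation}).

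For the inequality $\ge$, I would note that this half does not even use the product structure: it is simply monotonicity under discarding the party $A$. Given any extension $\omega_{AB_{1}\cdots B_{m}E}$ of $\omega_{AB_{1}\cdots B_{m}}$, the grouped system $AE$ is itself an extension of $\sigma_{B_{1}\cdots B_{m}}$, since tracing out $A$ and $E$ returns $\sigma_{B_{1}\cdots B_{m}}$. Because each term $I(B_{k};A|E)_{\omega}$ is nonnegative, \eqref{eq:star-plan} gives $I(A;B_{1};\cdots;B_{m}|E)_{\omega}\ge I(B_{1};\cdots;B_{m}|AE)_{\omega}\ge 2E_{\sq}(B_{1};\cdots;B_{m})_{\sigma}$, and taking the infimum over all extensions $\omega_{AB_{1}\cdots B_{m}E}$ yields $E_{\sq}(A;B_{1};\cdots;B_{m})_{\omega}\ge E_{\sq}(B_{1};\cdots;B_{m})_{\sigma}$.

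For the inequality $\le$, I would use the product structure directly. Assuming without loss of generality that $E_{\sq}(B_{1};\cdots;B_{m})_{\sigma}<\infty$ (otherwise the bound is trivial), take any extension $\sigma_{B_{1}\cdots B_{m}E'}$ of $\sigma_{B_{1}\cdots B_{m}}$ with finite conditional total correlation and form $\rho_{A}\otimes\sigma_{B_{1}\cdots B_{m}E'}$, which is a valid extension of $\omega_{AB_{1}\cdots B_{m}}=\rho_{A}\otimes\sigma_{B_{1}\cdots B_{m}}$. In this product state $A$ is uncorrelated with $B_{1}\cdots B_{m}E'$, so every cross term $I(B_{k};A|E')$ in \eqref{eq:star-plan} vanishes and adjoining the independent factor $A$ to the conditioning leaves the remaining term unchanged, giving $I(A;B_{1};\cdots;B_{m}|E')=I(B_{1};\cdots;B_{m}|E')_{\sigma}$. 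Taking the infimum over extensions $\sigma_{B_{1}\cdots B_{m}E'}$ then delivers the claim.

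I expect the main obstacle to be justifying \eqref{eq:star-plan} in the infinite-dimensional setting, where individual terms may equal $+\infty$ and naive cancellations are illegitimate. The strategy is to lean on the results of \cite{S15} to license the chain rule and permutation invariance, and to split the finiteness handling by direction: for $\le$ I restrict to extensions with finite conditional total correlation, where \eqref{eq:star-plan} holds as a genuine equality; for $\ge$ the nonnegativity of the discarded terms means only the inequality form of \eqref{eq:star-plan} is required, which remains valid even when some quantities are infinite. A final point worth checking is that the equality of the two multipartite squashed entanglements (Theorem~\ref{thm:m-sq-equal}) lets me carry out the argument entirely with the conditional total correlation version, without separately re-running it for $\widetilde{E}_{\sq}$.
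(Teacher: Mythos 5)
Your proof is correct. The paper itself does not supply a proof of this lemma — it is recalled as a known property with a citation to \cite{STW16} — and your argument is essentially the standard one underlying that citation: the chain-rule decomposition of the conditional total correlation with $A$ placed first, the lower bound obtained by absorbing $A$ into the conditioning system and dropping the nonnegative cross terms $I(B_k;A|E)$, and the upper bound obtained by tensoring $\rho_A$ onto an arbitrary extension of $\sigma_{B_1\cdots B_m}$ so that all cross terms vanish. Your handling of the infinite-dimensional subtleties is also sound: the chain rule and permutation invariance you invoke are exactly the tools from \cite{S15} that the paper itself relies on for \eqref{eq:correlationRelation}, and restricting to the inequality form of your decomposition in the direction where quantities may be infinite (while restricting to finite-value extensions in the other direction) correctly avoids any illegitimate $\infty-\infty$ cancellation.
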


We also have the following alternative representation of multipartite squashed
entanglement, which was employed implicitly in
\cite{STW16}:

\begin{lemma}
\label{lem:alt-rep-mult-sq-e}Let $\rho_{A_{1}\cdots A_{m}}$ be a multipartite
density operator such that the entropy $H(A_i)_{\rho} < \infty$ for all $i \in \{2,\ldots, m\}$. Then its multipartite squashed entanglement can be written
as%
\begin{multline}
E_{\sq}(A_{1};A_{2};\cdots;A_{m})_{\rho}=\\
\frac{1}{2}\inf_{\mathcal{V}%
_{E\rightarrow E^{\prime}F}}\left[  \sum_{i=2}^{m}H(A_{i}|E^{\prime})_{\omega
}+H(A_{2}\cdots A_{m}|F)_{\omega}\right]  ,
\end{multline}
where the infimum is with respect to an isometric channel $\mathcal{V}%
_{E\rightarrow E^{\prime}F}$,%
\begin{equation}
\omega_{A_{1}\cdots A_{m}E^{\prime}F}\equiv\mathcal{V}_{E\rightarrow
E^{\prime}F}(\phi_{A_{1}\cdots A_{m}E}^{\rho}),
\end{equation}
and $\phi_{A_{1}\cdots A_{m}E}^{\rho}$ is a purification of $\rho_{A_{1}\cdots
A_{m}}$.
\end{lemma}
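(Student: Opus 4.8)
The plan is to pass from the extension-based definition in \eqref{eq:mEsq} to the squashing-channel picture, expand the conditional total correlation into conditional entropies using the finiteness hypothesis, and then use purity together with conditional-entropy duality to split the result into the two terms appearing in the statement.

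First, I would observe that, exactly as in the bipartite case \eqref{eq:squashed-squash-chnl} and by the reasoning of \cite{christ}, the infimum over all extensions $\rho_{A_1\cdots A_m E}$ in \eqref{eq:mEsq} can be rewritten as an infimum over squashing channels $\mathcal{S}_{E\to E'}$ applied to a fixed purification $\phi^\rho_{A_1\cdots A_m E}$ of $\rho_{A_1\cdots A_m}$. Since the conditional total correlation $I(A_1;\cdots;A_m|E')$ depends only on the reduced state on $A_1\cdots A_m E'$, I may replace each squashing channel by its isometric extension $\mathcal{V}_{E\to E'F}$ without changing the objective, thereby producing the pure state $\omega_{A_1\cdots A_m E'F}=\mathcal{V}_{E\to E'F}(\phi^\rho)$ and the identity $E_{\sq}(A_1;\cdots;A_m)_\rho = \frac{1}{2}\inf_{\mathcal{V}_{E\to E'F}} I(A_1;\cdots;A_m|E')_\omega$.

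Next I would expand the conditional total correlation. The hypothesis $H(A_i)_\rho<\infty$ for $i\in\{2,\dots,m\}$ transfers to $\omega$, whose $A_i$-marginals coincide with those of $\rho$, so each constituent CQMI reduces to $I(A_i;A_1^{i-1}|E')_\omega = H(A_i|E')_\omega - H(A_i|A_1^{i-1}E')_\omega$ by the finite-entropy reduction of CQMI recalled in Section~\ref{sec:background} \cite{S15}. Summing over $i$ and invoking the chain rule for conditional entropy---legitimate because $H(A_2\cdots A_m)_\rho\leq\sum_{i=2}^m H(A_i)_\rho<\infty$ by subadditivity---telescopes the second family of terms into $H(A_2\cdots A_m|A_1E')_\omega$, giving $I(A_1;\cdots;A_m|E')_\omega = \sum_{i=2}^m H(A_i|E')_\omega - H(A_2\cdots A_m|A_1E')_\omega$.

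Finally, because $\omega_{A_1\cdots A_m E'F}$ is pure, I would apply the duality of conditional entropy \eqref{conditionalEntropyDuality} to the tripartition $(A_2\cdots A_m \mid A_1E' \mid F)$ to replace $H(A_2\cdots A_m|A_1E')_\omega$ by $-H(A_2\cdots A_m|F)_\omega$; substituting this back reproduces exactly the objective function of the statement, and taking the infimum over $\mathcal{V}_{E\to E'F}$ finishes the argument. I expect the main obstacle to be the infinite-dimensional bookkeeping: I must check that every conditional entropy appearing is well-defined and finite so that the CQMI reduction, the chain rule, and the duality relation are all valid, and I must carefully track the asymmetric role of $A_1$---the hypothesis deliberately omits $A_1$, and both the telescoping and the duality step are arranged so that $H(A_1)_\rho$ never needs to be finite.
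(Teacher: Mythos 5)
Your proposal is correct and follows essentially the same route as the paper's own (very terse) proof: rewrite the extension-based definition \eqref{eq:mEsq} as an infimum over squashing isometries just as in the bipartite case, expand the conditional total correlation into conditional entropies, and finish with duality of conditional entropy. The details you add beyond the paper's sketch---the chain-rule telescoping of $\sum_{i=2}^m H(A_i|A_1^{i-1}E')_\omega$ into $H(A_2\cdots A_m|A_1E')_\omega$ and the finiteness bookkeeping justified via subadditivity---are exactly the right way to fill in what the paper calls ``follows easily.''
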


\begin{proof}
A proof follows easily from the definition of $E_{\sq}(A_{1};A_{2}%
;\cdots;A_{m})_{\rho}$ in \eqref{eq:mEsq}, rewriting it in terms of a squashing isometry as
has been done in the bipartite case, and employing duality of conditional entropy.
\end{proof}

\subsection{Multipartite Squashed Entanglement for GHZ and Private States}

The multipartite squashed entanglement of a maximally entangled state or a
private state scales linearly with the number of parties
\cite{YHHHOS,STW16}. That is, for $\Phi_{A_{1}\cdots A_{m}}$ a GHZ state as in
\eqref{eq:GHZ}  and $\gamma_{A_{1}\cdots A_{m}}$ a private state as
in \eqref{eq:multiPrivateState}, then the following
relations hold%
\begin{align}
 E_{\sq}(A_{1};\cdots;A_{m}%
)_{\Phi} &  =\frac{m}{2}\log_{2}K,\label{eq:eSqGHZ}\\
 E_{\sq}(A_{1};\cdots
;A_{m})_{\gamma} &  \geq\frac{m}{2}\log_{2}K.\label{eq:eSqPriv}%
\end{align}

Now consider a set
$\mathcal{W}=\{A,B,C\}$
of systems  and let $\Psi_{ABC}$ be
composed of maximally entangled states $\Phi$ and private states $\gamma$ over
the systems $A$, $B$, $\text{and }C$, according to the power set in
\eqref{eq:powerSet2} for two or more members:%
\begin{multline}
\Psi_{ABC}=\Phi_{A_{1}B_{1}}\otimes\Phi_{A_{2}C_{2}}\otimes\Phi_{B_{3}C_{3}%
}\otimes\Phi_{A_{4}B_{4}C_{4}}\\
\otimes\gamma_{A_{5}B_{5}}\otimes\gamma_{A_{6}C_{6}}\otimes\gamma_{B_{7}C_{7}%
}\otimes\gamma_{A_{8}B_{8}C_{8}}.
\label{eq:multiMaxPriv}
\end{multline}
In the above, we have subdivided the systems $A$, $B$, $\text{and }C$ for the
various correlations so that, in the given example,%
\begin{align}
A  & =A_{1}A_{2}A_{4}A_{5}A_{6}A_{8},\\
B  & =B_{1}B_{3}B_{4}B_{5}B_{7}B_{8},\\
C  & =C_{2}C_{3}C_{4}C_{6}C_{7}C_{8}.
\end{align}
For each of the constituent states given in \eqref{eq:multiMaxPriv}, we denote
the number of entangled bits or private bits as $E$ or $K$, respectively, as
done in \cite{STW16}. For example,
\begin{align}
E_{AB} &  =H(A_{1})_{\Phi}=H(B_{1})_{\Phi}=\log_{2}K_{A_{1}}%
,\label{eq:entBits}\\
K_{ABC} &  =H(A_{8})_{\gamma}=H(B_{8})_{\gamma}=H(C_{8})_{\gamma}=\log
_{2}K_{A_{8}},\label{eq:privBits}%
\end{align}
and so the tuple
\[
(E_{AB},E_{AC},E_{BC},E_{ABC},K_{AB},K_{AC},K_{BC},K_{ABC})
\]
characterizes the entangled and private bit content of the state. By using
\eqref{eq:eSqGHZ} and \eqref{eq:eSqPriv}, along with the additivity of
squashed entanglement for tensor-product states and adopting the notation in
\eqref{eq:entBits} and \eqref{eq:privBits}, we find that
\begin{align}
& E_{\sq}(A;B;C)_{\Psi}\nonumber\\
& =E_{\sq}(A_{1};B_{1})_{\Phi}+E_{\sq}(A_{2};C_{2})_{\Phi}+E_{\sq}(B_{3}%
;C_{3})_{\Phi}\nonumber\\
& \quad+E_{\sq}(A_{4};B_{4};C_{4})_{\Phi}+E_{\sq}(A_{5};B_{5})_{\gamma
}+E_{\sq}(A_{6};C_{6})_{\gamma}\nonumber\\
& \quad+E_{\sq}(B_{7};C_{7})_{\gamma}+E_{\sq}(A_{8};B_{8};C_{8})_{\gamma}\\
& \geq E_{AB}+E_{AC}+E_{BC}+\frac{3}{2}E_{ABC}\nonumber\\
& \quad+K_{AB}+K_{AC}+K_{BC}+\frac{3}{2}K_{ABC}
\label{eq:mult-sq-lwr-part-all}
\end{align}
As in \eqref{eq:eSqPart1} and \eqref{eq:eSqPart2}, if $\Psi_{ABC}%
=\Psi_{\mathcal{Y}}$ for $\mathcal{Y}=\{A,B,C\}$, and for partitions
$\mathbb{G}_{1}=\{\{A\},\{B\},\{C\}\}$ and $\mathbb{G}_{2}=\{\{AB\},\{C\}\}$
then $E_{\sq}(\mathbb{G}_{1})=E_{\sq}(A;B;C)_{\Psi}$ as shown in
\eqref{eq:eSqPart1}. For $E_{\sq}(\mathbb{G}_{2})$, we have that
\begin{align}
& E_{\sq}({\mathbb{G}_{2}})\nonumber\\
& =E_{\sq}(AB;C)_{\Psi}\\
& =E_{\sq}(A_{2};C_{2})_{\Phi}+E_{\sq}(B_{3};C_{3})_{\Phi}+E_{\sq}(A_{4}%
B_{4};C_{4})_{\Phi}\nonumber\\
& \quad+E_{\sq}(A_{6};C_{6})_{\gamma}+E_{\sq}(B_{7};C_{7})_{\gamma}%
+E_{\sq}(A_{8}B_{8};C_{8})_{\gamma}\\
& \geq E_{AC}+E_{BC}+E_{ABC}+K_{AC}+K_{BC}+K_{ABC}%
\end{align}

\section{Quantum Broadcast Channels and Secret-Key-Agreement Capacity Regions}

\label{s:broadcast} A quantum broadcast channel is a channel as defined in
\eqref{eq:kraus}, except that it is a map from one sender to \textit{multiple}
receivers \cite{YHD11}. A protocol for energy-constrained, multipartite secret
key agreement is much the same as in the bipartite case outlined in
Section~\ref{s:ECSKAC}, with a constraint on the average energy of the channel input
states and with rounds of LOCC between channel uses. For demonstrative
purposes, in this section we focus exclusively on the case of a single sender
and two receivers. We make use of an energy observable $G$
 and energy constraint $P\in\lbrack0,\infty)$. A quantum
broadcast channel $\mathcal{N}_{A\rightarrow BC}$ satisfies the finite
output-entropy condition with respect to $G$ and $P$ if%
\begin{align}
\sup_{\rho_{A}:\Tr\{G\rho_{A}\}\leq P}H(\Tr_C\{\mathcal{N}_{A\rightarrow BC}(\rho
_{A})\})<\infty ,\\
\sup_{\rho_{A}:\Tr\{G\rho_{A}\}\leq P}H(\Tr_B\{\mathcal{N}_{A\rightarrow BC}(\rho
_{A})\})<\infty .\label{eq:multFiniteOutputEntropy}%
\end{align}
That is, the output entropy to each receiver should be finite.
In what follows, for example, we denote the rate of entanglement generation
between the sender $A$ and the receiver $B$ by $R_{AB}^{E}$
and
the rate of key generation by
$R_{AB}^{K}$. Generalizing
this, we have a vector $\vec{R}$ of rates, for which we employ the following shorthand:%
\begin{equation}
\vec{R}\equiv(R_{AB}^{E},R_{AC}^{E},R_{BC}^{E},R_{ABC}^{E},R_{AB}^{K}%
,R_{AC}^{K},R_{BC}^{K},R_{ABC}^{K}).
\end{equation}

In a general $(n,\vec{R},G,P,\varepsilon)$ protocol, the sender Alice and
the receivers Bob and Charlie are tasked to use a quantum broadcast channel
$\mathcal{N}_{A\rightarrow BC}\ n$ times to establish a shared state
$\Omega_{ABC}$ such that
\begin{equation}
F(\Omega_{ABC},\Psi_{ABC})\geq1-\varepsilon,\label{eq:omegaPsi}%
\end{equation}
with $\Psi$ defined in \eqref{eq:multiMaxPriv} and the elements of $\vec{R}$ are given by, e.g., \cite{STW16}
\begin{align}
R_{AB}^{E} &  =\frac{E_{AB}}{n}=\frac{1}{n}H(A_1)_{\Psi},\label{eq:ratedef}\\
R_{AB}^{K} &  =\frac{K_{AB}}{n}=\frac{1}{n}H(A_5)_{\Psi}.
\end{align}
In such a protocol, Alice, Bob, and Charlie begin by performing an LOCC
channel $\mathcal{L}_{\emptyset\rightarrow A_{1}^{\prime}A_{1}B_{1}^{\prime
}C_{1}^{\prime}}^{(1)}$ to create a  state $\rho_{A_{1}^{\prime
}A_{1}B_{1}^{\prime}C_{1}^{\prime}}^{(1)}$
that is separable with respect to the cut 
$A_{1}^{\prime
}A_{1} | B_{1}^{\prime} | C_{1}^{\prime}$,
and where  $A_{1}%
^{\prime}$, $B_{1}^{\prime}$, and $C_{1}^{\prime}$
are scratch systems. Alice then uses $A_{1}$ as
the input to the first channel use, resulting in the state%
\begin{equation}
\sigma_{A_{1}^{\prime}B_{1}B_{1}^{\prime}C_{1}C_{1}^{\prime}}^{(1)}%
\equiv\mathcal{N}_{A_{1}\rightarrow B_{1}C_{1}}(\rho_{A_{1}^{\prime}A_{1}%
B_{1}^{\prime}C_{1}^{\prime}}^{(1)}).
\end{equation}
Alice, Bob, and Charlie then perform a second LOCC channel, producing
\begin{multline}
\rho_{A_{2}^{\prime}A_{2}B_{2}^{\prime}C_{2}^{\prime}}^{(2)}\equiv
\label{secondLOCCmult}\\
\mathcal{L}_{A_{1}^{\prime}B_{1}B_{1}^{\prime}C_{1}C_{1}^{\prime}\rightarrow
A_{2}^{\prime}A_{2}B_{2}^{\prime}C_{2}^{\prime}}^{(2)}(\sigma_{A_{1}^{\prime
}B_{1}B_{1}^{\prime}C_{1}C_{1}^{\prime}}^{(1)}).
\end{multline}
The procedure continues in this manner, as in Section~\ref{s:ECSKAC}, with a total of
$n$ rounds of LOCC interleaved with $n$ uses of the channel as follows: for
$i\in\{2,\ldots,n\}$%
\begin{multline}
\rho_{A_{i}^{\prime}A_{i}B_{i}^{\prime}C_{i}^{\prime}}^{(i)}\equiv\nonumber\\
\mathcal{L}_{A_{i-1}^{\prime}B_{i-1}B_{i-1}^{\prime}C_{i-1}C_{i-1}^{\prime
}\rightarrow A_{i}^{\prime}A_{i}B_{i}^{\prime}C_{i}^{\prime}}^{(i)}%
(\sigma_{A_{i-1}^{\prime}B_{i-1}B_{i-1}^{\prime}C_{i-1}C_{i-1}^{\prime}%
}^{(i-1)}),
\end{multline}%
\begin{equation}
\sigma_{A_{i}^{\prime}B_{i}B_{i}^{\prime}C_{i}C_{i}^{\prime}}^{(i)}%
\equiv\mathcal{N}_{A_{i}\rightarrow B_{i}C_{i}}(\rho_{A_{i}^{\prime}A_{i}%
B_{i}^{\prime}C_{i}^{\prime}}^{(i)}).
\end{equation}
After the $n$th channel use, a final, ($n+1$)th LOCC channel is performed.
Going to the purified picture as before, tracing over the eavesdropper's systems while retaining the shield systems, the goal is to
establish the state $\Omega_{ABC}$ satisfying $F(\Omega_{ABC},\Psi_{ABC}%
)\geq1-\varepsilon$, where $\Psi_{ABC}$ is the ideal state from \eqref{eq:multiMaxPriv}. Finally, the same
average energy constraint for the channel input states, as in \eqref{eq:energyConstraint}, should be satisfied.

The rate tuple $\vec{R}$ is achievable if for all $\varepsilon\in(0,1)$, $\vec{\delta} \succeq 0$, and
 sufficiently large $n$, there exists an $(n,\vec{R}-\vec{\delta},G,P,\varepsilon)$
protocol as outlined above. The energy-constrained secret-key-agreement
capacity region of the channel $\mathcal{N}$ is the closure of the region
mapped out by all achievable rate tuples subject to the energy constraint $P$.

\subsection{Energy-Constrained Squashed Entanglement Upper Bound for the
LOCC-Assisted Capacity Region of a Quantum Broadcast Channel}

\label{s:EsqBroad}The main result of this section is a generalization of the
result in Section~\ref{s:EsqBound}, as well as a generalization of the main result in
\cite{STW16}. In particular, we prove that the energy-constrained,
multipartite squashed entanglement is a key tool in bounding the LOCC-assisted
capacity region of a quantum broadcast channel.

\begin{theorem}
\label{thm:2-receiver}
Let $G$ be a Gibbs observable, and let $P\in\lbrack
0,\infty)$ be an energy constraint.
Let $\mathcal{N}_{A\rightarrow BC}$ be a quantum
broadcast channel satisfying the finite-output entropy condition in \eqref{eq:multFiniteOutputEntropy} with respect to $G$ and $P$. Suppose that $\vec{R}$ is an achievable
rate tuple for LOCC-assisted private and quantum communication. Then the elements of
the rate tuple $\vec{R}$ are bounded in terms of multipartite squashed
entanglement as%
\begin{align}
R_{AC}^{E}+R_{AC}^{K}+R &  _{BC}^{E}+R_{BC}^{K}+R_{ABC}^{E}+R_{ABC}%
^{K}\nonumber\\
&  \leq E_{\sq}(SB;C)_{\omega}\label{eq:bBoundABc}\\
R_{AB}^{E}+R_{AB}^{K}+R &  _{BC}^{E}+R_{BC}^{K}+R_{ABC}^{E}+R_{ABC}%
^{K}\nonumber\\
&  \leq E_{\sq}(SC;B)_{\omega}\label{eq:bBoundACb}\\
R_{AB}^{E}+R_{AB}^{K}+R &  _{AC}^{E}+R_{AC}^{K}+R_{ABC}^{E}+R_{ABC}%
^{K}\nonumber\\
&  \leq E_{\sq}(S;BC)_{\omega}\label{eq:bBoundAbc}\\
R_{AB}^{E}+R_{AB}^{K}+R &  _{AC}^{E}+R_{AC}^{K}+R_{BC}^{E}+R_{BC}%
^{K}\nonumber\\
+\frac{3}{2}\big(R_{ABC}^{E}+R &  _{ABC}^{K}\big)\nonumber\\
&  \leq E_{\sq}(S;B;C)_{\omega},\label{eq:bBoundABC}%
\end{align}
for some pure state $\psi_{SA}$ satisfying $\operatorname{Tr}\{G\psi_{A}\}\leq
P$, with the state $\omega_{SBC}$ defined in terms of it as%
\begin{equation}
\omega_{SBC}=\mathcal{N}_{A\rightarrow BC}(\psi_{SA}).\label{eq:pureoutput}%
\end{equation}

\end{theorem}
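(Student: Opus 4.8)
The plan is to mirror the two-step structure behind Proposition~\ref{prop:EC-SE-upper-bnd} and Theorem~\ref{thm:EC-SE-upper-bnd}, now using the multipartite squashed entanglement together with its properties collected in Section~\ref{ss:multiEsqProps}. First I would fix an $(n,\vec R,G,P,\varepsilon)$ protocol and pass to the purified picture, so that the final shared state $\Omega$, augmented by the shield systems held by the three parties, has fidelity at least $1-\varepsilon$ to the ideal state $\Psi$ of \eqref{eq:multiMaxPriv}. For each of the four partitions appearing on the right-hand sides of \eqref{eq:bBoundABc}--\eqref{eq:bBoundABC} I would bound the corresponding multipartite squashed entanglement of $\Omega$ from below in terms of the rate tuple, and from above in terms of a single channel use.

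For the lower bounds I would use the normalization of multipartite squashed entanglement on GHZ and private states in \eqref{eq:eSqGHZ}--\eqref{eq:eSqPriv}, together with additivity for tensor products, exactly as in the computation \eqref{eq:mult-sq-lwr-part-all} and its grouped counterparts. This produces, for the ideal state $\Psi$, lower bounds matching precisely the rate combinations in the theorem, including the factor $3/2$ attached to the tripartite rates in the finest partition. Passing from $\Psi$ to the actual output $\Omega$ by the Fuchs--van de Graaf inequalities and the uniform continuity of squashed entanglement introduces error terms proportional to $\sqrt{\varepsilon}$ times $\log_2 K$ along with a $g(\sqrt{\varepsilon})$ contribution, all of which vanish after dividing by $n$ and taking $n\to\infty$ and then $\varepsilon\to 0$.

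For the upper bounds I would run the amortization argument over the $n$ channel uses, funnelling all four partitions to a single common average input state. For the three ``bipartite'' partitions $\{SB;C\}$, $\{SC;B\}$, $\{S;BC\}$ I would group the indicated receiver output with the sender's reference system and iterate the per-round inequality of Lemma~\ref{lemma:subadditivity-CQMI}, while for the genuinely tripartite partition $\{S;B;C\}$ I would iterate the multipartite subadditivity inequality of Lemma~\ref{lem:multiSubadditivity}; in both cases monotonicity under LOCC and under groupings lets me peel off one channel use at a time, and the initial separable state contributes zero. The alternative representation in Lemma~\ref{lem:alt-rep-mult-sq-e} then rewrites each single-use term as a sum of conditional entropies of the output, so that concavity of conditional entropy bounds the telescoped sum by $n$ times the corresponding quantity evaluated at the average input state $\overline{\rho}_A$. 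Purifying $\overline{\rho}_A$ to a single pure state $\psi_{SA}$, which obeys the energy constraint by \eqref{eq:energyConstraint}, gives one output $\omega_{SBC}=\mathcal{N}_{A\to BC}(\psi_{SA})$ serving simultaneously for all four bounds.

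The main obstacle I expect is the telescoping step: one must carefully match the cut structure of Lemma~\ref{lem:multiSubadditivity} (and of Lemma~\ref{lemma:subadditivity-CQMI} in the grouped cases) to the broadcast channel $A\to BC$ at each round, and verify the finite-entropy hypotheses required to invoke these lemmas and the alternative representation, which is where the finite-output-entropy condition \eqref{eq:multFiniteOutputEntropy} and the convergence arguments underlying Lemma~\ref{lemma:subadditivity-CQMI} enter, together with the requirement that the \emph{same} average input state accommodate all four partitions. Once the per-round inequalities are established in the correct form, combining the lower and upper bounds, dividing by $n$, and taking $n\to\infty$ followed by $\varepsilon\to 0$ yields the stated bounds on the achievable rate tuple.
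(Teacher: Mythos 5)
Your proposal follows essentially the same route as the paper's proof: work backward through the protocol using LOCC monotonicity of multipartite squashed entanglement, the subadditivity of Lemma~\ref{lem:multiSubadditivity} (with Lemma~\ref{lemma:subadditivity-CQMI} or grouping for the coarser partitions), invariance under the channel's isometric extension, and the vanishing of squashed entanglement on the initial separable state; then reduce the resulting sum over channel uses to a single state $\omega_{SBC}=\mathcal{N}_{A\rightarrow BC}(\psi_{SA})$ via the representation of Lemma~\ref{lem:alt-rep-mult-sq-e}, concavity of conditional entropy, and the average energy constraint \eqref{eq:energyConstraint} --- including the observation, which the paper also stresses, that the \emph{same} state $\omega$ must serve for all four partitions.

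There is, however, one genuine gap, in your lower-bound step. You propose to pass from the ideal state $\Psi_{ABC}$ to the actual output $\Omega_{ABC}$ by the Fuchs--van de Graaf inequalities together with the \emph{uniform continuity} of squashed entanglement, and you assert that this yields error terms proportional to $\sqrt{\varepsilon}\log_2 K$. But the uniform continuity bound \eqref{EsqContinuity} scales as $\sqrt{2\varepsilon}\,\log_2\min[\dim(\mathcal{H}_{A}),\dim(\mathcal{H}_{B})]$, i.e., with the log-dimension of the full systems being compared. In this protocol those systems include the accumulated shield systems, whose Hilbert spaces are separable and generally infinite-dimensional, so that bound is vacuous here; in particular, it does not produce a factor $\log_2 K$. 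What is needed --- and what the paper invokes --- is the estimate tailored to approximate multipartite private states, namely \cite[Theorem~6]{appxpriv} (the multipartite analogue of \eqref{eq:appxpriv}), which exploits the twisting structure of private states to lower bound $E_{\sq}(A;B;C)_{\Omega}$ by the stated rate combinations minus terms of the form $\sqrt{\varepsilon}\cdot(\text{key bits})+g(\sqrt{\varepsilon})$; this is what supplies the function $f_{2}(n,\varepsilon)$ with $f_{2}(n,\varepsilon)/n\to 0$. With that substitution, the remainder of your argument goes through as in the paper.
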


\begin{proof}
The proof of this bound follows that of Proposition~\ref{prop:EC-SE-upper-bnd} and \cite[Theorem~12]%
{STW16}, working backward through the communication protocol one channel use
at a time in order to demonstrate the inequalities. For this reason, we keep
the proof brief. Let us begin by considering the partition $\mathbb{G}%
_{1}=\{\{A\},\{B\},\{C\}\}$. From reasoning as in \eqref{eq:mult-sq-lwr-part-all} but instead
applying an estimate in \cite[Theorem~6]{appxpriv} to the condition $F(\Omega_{ABC}%
,\Psi_{ABC})\geq1-\varepsilon$, we find that%
\begin{multline}
n\Big(R_{AC}^{E}+R_{AC}^{K}+R_{BC}^{E}+R_{BC}^{K}+R_{AB}^{E}+R_{AB}^{K}\\
+\frac{3}{2}(R_{ABC}^{E}+R_{ABC}^{K})\Big)\leq E_{\sq}(A;B;C)_{\Omega}%
+f_{2}(n,\varepsilon) ,
\label{eq:multicont}
\end{multline}
where $f_{2}(n,\varepsilon)$ is a function such that
$f_{2}(n,\varepsilon)  / n$
tends to zero as
$n\rightarrow\infty$ and as $\varepsilon\rightarrow0$.

If we look at just the squashed entanglement term of \eqref{eq:multicont}, we
can split it and group terms, working backward through the $n$ channel uses of
the protocol:%
\begin{align}
&  E_{\sq}(A;B;C)_{\Omega}\nonumber\\
&  \leq E_{\sq}(A_{n}^{\prime};B_{n}B_{n}^{\prime};C_{n}C_{n}^{\prime
})_{\sigma^{(n)}}\\
&  \leq E_{\sq}(A_{n}^{\prime}B_{n}C_{n}E_{n};B_{n}^{\prime};C_{n}^{\prime
})_{\sigma^{(n)}}\nonumber\\
&  \qquad+E_{\sq}(A_{n}^{\prime}B_{n}^{\prime}C_{n}^{\prime}R_{n};B_{n}%
;C_{n})_{\sigma^{(n)}}\\
&  =E_{\sq}(A_{n}^{\prime}A_{n};B_{n}^{\prime};C_{n}^{\prime})_{\rho^{(n)}%
}\nonumber\\
&  \qquad+E_{\sq}(A_{n}^{\prime}B_{n}^{\prime}C_{n}^{\prime}R_{n};B_{n}%
;C_{n})_{\sigma^{(n)}}\\
&  \leq E_{\sq}(A_{n-1}^{\prime};B_{n-1}B_{n-1}^{\prime};C_{n-1}%
C_{n-1}^{\prime})_{\sigma^{(n-1)}}\nonumber\\
&  \qquad+E_{\sq}(A_{n}^{\prime}B_{n}^{\prime}C_{n}^{\prime}R_{n};B_{n}%
;C_{n})_{\sigma^{(n)}}\\
&  \leq\sum_{i=1}^{n}E_{\sq}(A_{i}^{\prime}B_{i}^{\prime}C_{i}^{\prime}%
R_{i};B_{i};C_{i})_{\sigma^{(i)}}.\label{eq:multiIterate}%
\end{align}
The first inequality follows from the monotonicity of squashed entanglement
under LOCC. For the second inequality the quantity has been split using the
subadditivity property from Lemma~\ref{lem:multiSubadditivity} (there are also
some implicit purifying systems $R$ and $E$, which we have not explicitly
defined, but note that $E$ denotes an environment of the broadcast channel).
The equality is a result of the invariance of squashed entanglement under
isometries, because an isometric extension of $\mathcal{N}$ relates $A_{n}$ to
$B_{n}C_{n}E_{n}$. The third inequality is the beginning of the first
repetition of this procedure, in which we again apply the monotonicity of
squashed entanglement under LOCC. Iterating this reasoning $n$ times leads to
the final inequality in \eqref{eq:multiIterate}. Working backward another step
yields no additional terms, because the initial state is separable, having
been created through LOCC. However, with purifying systems $R_{i}$, we combine
\eqref{eq:multiIterate} with  \eqref{eq:multicont} to
conclude that there exists a state $\omega$, as defined in
\eqref{eq:pureoutput}, such that%
\begin{equation}
\sum_{i=1}^{n}E_{\sq}(A_{i}^{\prime}B_{i}^{\prime}C_{i}^{\prime}R_{i}%
;B_{i};C_{i})_{\sigma^{(i)}}
\leq
nE_{\sq}(S;B;C)_{\omega}%
\end{equation}
and
\begin{multline}
R_{AC}^{E}+R_{AC}^{K}+R_{BC}^{E}+R_{BC}^{K}+R_{AB}^{E}+R_{AB}^{K}\\
+\frac{3}{2}\big(R_{ABC}^{E}+R_{ABC}^{K}\big)\\
\leq E_{\sq}(S;B;C)_{\omega}+\frac{1}{n}f_{2}(n,\varepsilon).
\end{multline}
Taking the limit $n\rightarrow\infty$ and then $\varepsilon\rightarrow0$
yields \eqref{eq:bBoundABC}. A similar rationale can be applied to obtain the
other bounds, and key to the claim, as in the proof of \cite[Theorem~12]%
{STW16}, is that the same state $\omega$ can be used in all of the bounds.
\end{proof}

\begin{remark}
\label{rem:gen-to-mult-rec}Just as \cite[Theorem~12]{STW16} was generalized
from the single-sender, two-receiver case to the single sender, $m$-receiver
case in \cite[Theorem~13]{STW16}, our above bounds for the energy-constrained
capacity region of the quantum broadcast channel can be generalized to an
$m$-receiver case through the consideration of the many possible partitions,
as described in Section~\ref{s:multiPrimer}.
\end{remark}

\subsection{Upper Bounds on the Energy-Constrained LOCC-Assisted Capacity
Regions of a Pure-Loss Bosonic Broadcast Channel}

\label{s:BoundCalcs}In this section, we focus on a concrete quantum broadcast
channel, known as the pure-loss broadcast channel. The model for this channel
was introduced in \cite{G08thesis}\ and subsequently studied in
\cite{STW16,tsw}. It is equivalent to a linear sequence of beamsplitters, in
which the sender inputs into the first one, the vacuum state is injected into
all of the environment ports, the receivers each get one output from the
sequence of beamsplitters and one output of the beamsplitters is lost to the
environment (see Figure~3-13 of \cite{G08thesis} or Figure~1c of \cite{tsw}).
In what follows, we adopt the same strategy as before for the single-mode
pure-loss channel (and what was subsequently used in \cite{STW16}), and we
relax the squashing isometry for the environment mode to be a 50-50 beamsplitter.

Using this strategy, we now calculate bounds on rates of energy-constrained entanglement
generation and key distillation achievable between the sender and one of the
receivers. The same reasoning as in Remark~\ref{rem:gen-sit-ph-ins-cond-ent}, along with the representation of
multipartite squashed entanglement in Lemma~\ref{lem:alt-rep-mult-sq-e} and
the relaxation of it described above, allow us to conclude that, for a given
input mean photon number constraint $N_{S}\geq0$, a thermal state of that
photon number is optimal.

Before stating the theorem, we establish the following notation:

\begin{itemize}
\item The set of all receivers is denoted by
$\mathcal{B}=\{B_{1},\dots,B_{m}\}$. The total transmissivity for all
receivers is $\eta_{\mathcal{B}} \in [0,1]$.

\item In the theorem below, the set $\mathcal{T}$ denotes a subset of the
receivers ($\mathcal{T}\subseteq\mathcal{B}$), and its complement set is
denoted by $\overline{\mathcal{T}}=\mathcal{B}\backslash\mathcal{T}$. The
total transmissivity to the members of the set $\mathcal{T}$ is denoted by
$\eta_{\mathcal{T}}=\sum_{B_{i}\in\mathcal{T}}\eta_{B_{i}}$, and the total
transmissivity to the members of the complement set is denoted by
$\eta_{\overline{\mathcal{T}}}=\sum_{B_{i}\in\overline{\mathcal{T}}}%
\eta_{B_{i}}$, such that $\eta_{\mathcal{T}}+\eta_{\overline{\mathcal{T}}%
}=\eta_{\mathcal{B}}$.

\item The transmissivity to the adversary Eve is denoted by $\eta_{E}%
=1-\eta_{\mathcal{B}}=1-\eta_{\mathcal{T}}-\eta_{\overline{\mathcal{T}}}$.
\end{itemize}

With this notation, we can now establish the following theorem:

\begin{theorem}
\label{thm:mult-pure-loss}The energy-constrained LOCC-assisted capacity region
of a pure-loss quantum broadcast channel, for entanglement and key
distillation between the sender and each receiver, is bounded as%
\begin{multline}
\sum_{B_{i}\in\mathcal{T}}R_{AB_{i}}^{E}+R_{AB_{i}}^{K}\leq g(
N_{S}(1+\eta_{\mathcal{T}}-\eta_{\overline{\mathcal{T}}})/2)  \\
-g(  N_{S}(1-\eta_{\mathcal{T}}-\eta_{\overline{\mathcal{T}}})/2)
.\label{eq:thmRateBoundforGroupings}%
\end{multline}
for all non-empty $\mathcal{T}\subseteq\mathcal{B}$.
\end{theorem}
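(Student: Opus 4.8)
The plan is to specialize the general broadcast bound of Theorem~\ref{thm:2-receiver} (in the $m$-receiver form indicated in Remark~\ref{rem:gen-to-mult-rec}) to the pure-loss network and then carry out an explicit Gaussian optimization. First I would group the systems according to the bipartition that isolates the target receivers $\mathcal{T}$ from everything else, i.e.\ the cut $(S\overline{\mathcal{T}};\mathcal{T})$. The corresponding bound of Theorem~\ref{thm:2-receiver} controls the sum of all rates of correlations crossing this cut; discarding the non-negative receiver--receiver and multipartite rate terms on its left-hand side leaves the single bipartite inequality $\sum_{B_i \in \mathcal{T}}(R^E_{AB_i}+R^K_{AB_i}) \leq E_{\sq}(S\overline{\mathcal{T}};\mathcal{T})_\omega$, where $\omega$ is the broadcast-channel output $\mathcal{N}(\psi_{SA})$ for some pure input $\psi_{SA}$ obeying $\Tr\{\hat n \psi_A\}\leq N_S$. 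This converts the capacity-region claim into a statement about a single energy-constrained bipartite squashed entanglement of the channel output.

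Next I would bound $E_{\sq}(S\overline{\mathcal{T}};\mathcal{T})_\omega$ using the channel representation in \eqref{altEsqN} and the relaxation strategy of Section~\ref{s:realizing}. Since the pure-loss broadcast channel is a passive beamsplitter network fed by vacuum, it is both phase covariant and displacement covariant, so I would relax the squashing isometry acting on Eve's environment to a fixed 50-50 beamsplitter. By the argument recorded in Remark~\ref{rem:gen-sit-ph-ins-cond-ent}, together with the alternative representation in Lemma~\ref{lem:alt-rep-mult-sq-e}, the thermal state $\theta(N_S)$ then maximizes the resulting entropic objective function among all inputs with mean photon number at most $N_S$. This is the same reasoning already invoked for the single-mode case, so no new conceptual input is needed at this stage.

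With a thermal input fixed, the remaining task is a finite Gaussian computation. Because the $\mathcal{T}$ output consists of several modes obtained from the input by beamsplitters, I would first apply a local unitary on the $\mathcal{T}$ modes (and, independently, on Eve's modes) to coherently recombine each group into a single signal mode plus decoupled vacuum modes; this leaves every conditional entropy $H(B|\cdot)$ invariant. The effect is to replace the full network by an effective single-input pure-loss channel routing transmissivity $\eta_{\mathcal{T}}$ to the signal mode $B$ and transmissivity $\eta_E = 1-\eta_{\mathcal{T}}-\eta_{\overline{\mathcal{T}}}$ to Eve, with the $\overline{\mathcal{T}}$ modes decoupling entirely from the objective. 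After splitting Eve's mode with the 50-50 beamsplitter, the objective $\tfrac12[H(B|E')+H(B|F)]$ is evaluated exactly as in the single-mode pure-loss calculation, now with the genuine loss-to-Eve transmissivity $\eta_E$ in place of $1-\eta_{\mathcal{T}}$, yielding $g(N_S(1+\eta_{\mathcal{T}}-\eta_{\overline{\mathcal{T}}})/2) - g(N_S(1-\eta_{\mathcal{T}}-\eta_{\overline{\mathcal{T}}})/2)$.

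I expect the main obstacle to be the bookkeeping in this last Gaussian step: one must check that the covariance matrix of $(B,E',F)$ depends only on $\eta_{\mathcal{T}}$ and $\eta_E$, so that the $\overline{\mathcal{T}}$ modes genuinely drop out, and then compute the two relevant symplectic eigenvalues to recover the $g$-function expression. This covariance-matrix and symplectic-eigenvalue evaluation, while routine in principle, is where the correlations induced by the shared input mode and the various sign conventions must be tracked with care; everything preceding it is a direct reduction to results already established in the paper.
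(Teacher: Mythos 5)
Your proposal is correct and follows essentially the same route as the paper's proof: reduce to $E_{\sq}(S\overline{\mathcal{T}};\mathcal{T})$ via Theorem~\ref{thm:2-receiver} and Remark~\ref{rem:gen-to-mult-rec}, relax the squashing isometry to a 50-50 beamsplitter, invoke Remark~\ref{rem:gen-sit-ph-ins-cond-ent} and Lemma~\ref{lem:alt-rep-mult-sq-e} for thermal-state optimality, and evaluate the resulting entropic objective with the $g$ function. Your explicit mode-recombination argument is just the justification the paper leaves implicit when it writes $H(\mathcal{T}E_{1})=g(N_{S}(\eta_{\mathcal{T}}+\eta_{E}/2))$ and $H(E_{1})=g(N_{S}\eta_{E}/2)$, so the two computations coincide.
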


\begin{proof}
For the choices discussed above, it simply suffices to calculate various
relaxations of the multipartite squashed entanglements when the thermal state of mean photon
number $N_{S}$ is input.
As mentioned above, the same reasoning as in Remark~\ref{rem:gen-sit-ph-ins-cond-ent}, along with the representation of
multipartite squashed entanglement in Lemma~\ref{lem:alt-rep-mult-sq-e} and
the relaxation of it described above, allow us to conclude that, for a given
input mean photon number constraint $N_{S}\geq0$, a thermal state of that
photon number is optimal.
By applying Theorem~\ref{thm:2-receiver} and
Remark~\ref{rem:gen-to-mult-rec}, the following bounds apply%
\begin{align}
\sum_{B_{i}\in\mathcal{T}}R_{AB_{i}}^{E}+R_{AB_{i}}^{K}
& \leq E_{\sq}%
(R\overline{\mathcal{T}};\mathcal{T}),\label{eq:groupingBound}\\
& \leq 
\frac{1}{2}[H(\mathcal{T}|E_{1})+H(\mathcal{T}|E_{2}%
)]
\end{align}
where  the second inequality follows from relaxing the squashing
isometry to be a 50-50 beamsplitter as discussed above, with output systems
$E_{1}$ and $E_{2}$, and then it follows that the thermal state of mean photon number
$N_{S}$ into the pure-loss bosonic broadcast channel is optimal. Now employing  entropy identities, we find
that%
\begin{align}
&  \frac{1}{2}[H(\mathcal{T}|E_{1})+H(\mathcal{T}|E_{2}%
)]\notag \\
&  =\frac{1}{2}[H(\mathcal{T}E_{1})-H(E_{1})+H(\mathcal{T}E_{2})-H(E_{2}%
)]\label{eq:splitCondEnts}\\
&  =H(\mathcal{T}E_{1})-H(E_{1})\label{eq:combCondEnts}. %
\end{align}
The last line in \eqref{eq:combCondEnts} combines terms that are equal, due to
the fact that the transmissivity of the squashing channel is balanced
(coming from a 50-50 beamsplitter). We then use the $g$ function to represent the
entropies of the thermal states resulting from the use of the quantum
broadcast channel, giving that%
\begin{align}
&  H(\mathcal{T}E_{1})-H(E_{1})\nonumber\\
&  =g\left(  N_{S}(\eta_{\mathcal{T}}+\eta_{E}/2)\right)  -g\left(  N_{S}%
\eta_{E}/2\right)  \label{eq:gotoGfunction}\\
&  =g\left(  N_{S}(\eta_{\mathcal{T}}+(1-\eta_{\mathcal{T}%
}-\eta_{\overline{\mathcal{T}}}))/2\right)  \nonumber\\
&  \qquad-g\left(  N_{S}(1-\eta_{\mathcal{T}}-\eta_{\overline{\mathcal{T}}%
})/2\right)  \label{eq:gtrans}\\
&  =g\left(  N_{S}(1+\eta_{\mathcal{T}}-\eta_{\overline{\mathcal{T}}%
})/2\right)  \notag \\
& \qquad -g\left(  N_{S}(1-\eta_{\mathcal{T}}-\eta_{\overline{\mathcal{T}%
}})/2\right) . \label{eq:theBoundCalc}%
\end{align}
This concludes the proof.
\end{proof}

We conclude this section with a few brief remarks. In the limit of large
photon number $N_{S}\rightarrow\infty$, the bound in
Theorem~\ref{thm:mult-pure-loss}\ reduces to%
\begin{equation}
\sum_{B_{i}\in\mathcal{T}}R_{AB_{i}}^{E}+R_{AB_{i}}^{K}\leq\log_{2}\!\left(
\frac{1+\eta_{\mathcal{T}}-\eta_{\overline{\mathcal{T}}}}{1-\eta_{\mathcal{T}%
}-\eta_{\overline{\mathcal{T}}}}\right)  ,
\end{equation}
which is not as tight as the result of \cite{tsw}, in which the upper bound
was found to be $\log_{2}\!\left(  \frac{1-\eta_{\overline{\mathcal{T}}}}%
{1-\eta_{\mathcal{T}}-\eta_{\overline{\mathcal{T}}}}\right)  $. However, for
low photon number, the energy-constrained bounds of
Theorem~\ref{thm:mult-pure-loss} can be tighter.

Let us look at some particular examples of the bound. For the case of two
receivers, Bob and Charlie, the set $\mathcal{T}$ can take a few different
values. If $\mathcal{T}=\{B,C\}$ then $\overline{\mathcal{T}}=0$ and
\begin{multline}
R_{AB}^{E}+R_{AB}^{K}+R_{AC}^{E}+R_{AC}^{K}+R_{ABC}^{E}+R_{ABC}^{K}\\
\leq\log_{2}\!\left(  \frac{1+\eta_{B}+\eta_{C}}{1-\eta_{B}-\eta_{C}}\right)
\end{multline}
which has been discussed already in \cite{STW16}. For the case $\mathcal{T}%
=C$, then $\overline{\mathcal{T}}=B$, and so%
\begin{equation}
R_{AC}^{E}+R_{AC}^{K}\leq\log_{2}\!\left(  \frac{1+\eta_{C}-\eta_{B}}{1-\eta
_{B}-\eta_{C}}\right)  .
\end{equation}
Other permutations of the sets $\mathcal{T}$ and $\overline{\mathcal{T}}$ can
naturally be worked out for scenarios involving any number of receivers.

\section{Conclusion} \label{s:concl}

Knowing not only the achievable rates of current protocols but also  fundamental limitations of a channel for secret key agreement or LOCC-assisted quantum communication is important for the implementation of rapidly progressing quantum technologies. In this paper, we formally defined the task of energy-constrained secret key agreement and LOCC-assisted quantum communication. We  proved that the energy-constrained squashed entanglement is an upper bound on these capacities. We also proved that a thermal-state input is optimal for a relaxation of the energy-constrained squashed entanglement of a single-mode input, phase-insensitive bosonic Gaussian channel, generalizing results from prior work on this topic. After doing so, we proved that a variation of the method introduced in \cite{goodEW16} leads to improved upper bounds on the energy-constrained secret-key-agreement capacity of a bosonic thermal channel. In particular, these improved upper bounds have the property that they converge to zero in the limit as the thermal channel becomes entanglement breaking.

We then generalized the results to the multipartite setting, along the lines of \cite{STW16}. Here, we began by proving that two multipartite squashed entanglements are in fact equal even though they were previously thought to be different. We also proved that the energy-constrained multipartite squashed entanglement serves as an upper bound on the energy-constrained, secret key agreement and LOCC-assisted quantum capacity region of a quantum broadcast channel. We then applied the presented squashed entanglement bounds to the pure-loss bosonic broadcast channel with an arbitrary number of receivers, and the special case of communication between a sender and each of the individual receivers.

Since the squashed entanglement bounds presented here are independent of the physical examples given, we expect it to apply to other systems not discussed here. 

In the future, our bound should be examined in the context of a limited number of channel uses in addition to the energy constraint. It still remains an open question from \cite{tgw,tgwB,STW16} to determine whether the squashed entanglement bounds could serve as strong converse rates.
We also think it is clear that our formalism can be generalized to even more settings, such as those considered in \cite{AML16,BA17,RKBKMA17}.
An important technical question is whether the energy-constrained squashed entanglement bounds could apply
when the LOCC channels involved are not countably decomposable, and answering this question is directly related to the question discussed in \cite[Remark~1]{maks}. 
Finally, we think it would be  interesting to find physical examples
outside of the bosonic setting
to which our general theory could apply.

\begin{acknowledgments}
We are grateful to Kenneth Goodenough, Saikat Guha, Masahiro Takeoka, and Kaushik Seshadreesan for discussions regarding this research.
We are especially grateful to Kenneth Goodenough for many insightful discussions about his prior results in \cite{goodEW16} and for his suggestions regarding the bound in \eqref{eq:DSW18-bnd}.
ND acknowledges support from the Department of Physics and Astronomy at LSU and the National Science Foundation under Grant No.~1714215. MMW acknowledges support from the Office of Naval Research. 
\end{acknowledgments}

\bibliographystyle{alpha}
\bibliography{oprogref}

\end{document}